%% file: main_v2.tex
\documentclass[
  reprint,
  aps,
  pra,
  superscriptaddress,
  amsmath,amssymb,
  floatfix,
  a4paper
]{revtex4-2}

\usepackage[T1]{fontenc}
\usepackage[utf8]{inputenc} 
\usepackage{graphicx}
\usepackage{bm}
\usepackage[pdfusetitle]{hyperref}
\usepackage{xcolor}
\definecolor{measurebg}{HTML}{F2EFEA}
\usepackage{physics} 
\usepackage{siunitx} 
\usepackage{mathtools}
\usepackage{todonotes}
\usepackage{tikz}
\usetikzlibrary{quantikz2}
\usepackage{amsthm}
\usepackage{nicematrix}
\usepackage{algorithm, algpseudocode}
\usepackage{cleveref}
\AddToHook{cmd/appendix/before}{%
    \crefalias{section}{appendix}%
    \crefalias{subsection}{appendix}
    \crefalias{subsubsection}{appendix}
}
\usepackage{booktabs, multirow}
\usetikzlibrary{arrows.meta, positioning}

\definecolor{palRed}{HTML}{E6194B}
\definecolor{palOrange}{HTML}{F58231}
\definecolor{palLime}{HTML}{BFEF45}
\definecolor{palGreen}{HTML}{3CB44B}
\definecolor{palBlue}{HTML}{4363D8}
\definecolor{palPurple}{HTML}{911EB4}
\definecolor{palCyan}{HTML}{42D4F4}

\colorlet{figGrayFill}{black!7}
\colorlet{figGrayLine}{black!60}
\colorlet{figContribFill}{palGreen!15}
\colorlet{figContribLine}{palGreen!80!black}
\colorlet{figCorrFill}{palOrange!18}
\colorlet{figCorrLine}{palOrange!85!black}

\hypersetup{
  colorlinks=true,
  linkcolor=blue,
  citecolor=blue,
  urlcolor=blue,
  pdftitle={Exploring the landscape of compact magic-state distillation factories},
  pdfauthor={Hugo Jacinto, Xavier Valcarce, Victor Barizien, Élie Gouzien and Nicolas Sangouard}
}

\theoremstyle{definition}
\newtheorem{definition}{Definition}
\theoremstyle{plain}
\newtheorem{theorem}{Theorem}

\newtheorem{lemma}{Lemma}
\newtheorem*{theorem*}{Theorem}

\newtheorem{proposition}{Proposition}
\newtheorem*{proposition*}{Proposition}

\newcommand{\C}{\mathord{\mathrm{C}}}

\newcommand{\CS}{\mathord{\C S}}
\newcommand{\CT}{\mathord{\C T}}

\newcommand{\CCZ}{\mathord{\C\C Z}}
\newcommand{\CCS}{\mathord{\C\C S}}
\newcommand{\CCCZ}{\mathord{\C\C\C Z}}

\begin{document}

\title{Exploring the landscape of compact magic-state distillation factories}
\author{Hugo Jacinto}
\email{hugo.jacinto@alice-bob.com}
\affiliation{Université Paris--Saclay, CEA, CNRS, Institut de Physique Théorique, \num[detect-all]{91191} Gif-sur-Yvette, France}
\affiliation{Alice \& Bob, 49 boulevard du Général Martial Valin, \num[detect-all]{75015} Paris, France}

\author{Xavier Valcarce}
\affiliation{Université Paris--Saclay, CEA, CNRS, Institut de Physique Théorique, \num[detect-all]{91191} Gif-sur-Yvette, France}

\author{Victor Barizien}
\affiliation{Université Paris--Saclay, CEA, CNRS, Institut de Physique Théorique, \num[detect-all]{91191} Gif-sur-Yvette, France}
\affiliation{Department of Applied Physics, University of Geneva, CH-1205 Geneva, Switzerland}

\author{Élie Gouzien}
\affiliation{Alice \& Bob, 49 boulevard du Général Martial Valin, \num[detect-all]{75015} Paris, France}

\author{Nicolas Sangouard}
\affiliation{Université Paris--Saclay, CEA, CNRS, Institut de Physique Théorique, \num[detect-all]{91191} Gif-sur-Yvette, France}
\date{\today}

\begin{abstract}
Producing high-fidelity magic states using the smallest possible number of physical qubits and operations stands as a very important challenge to achieve fault-tolerant quantum computation at scale.
Besides emerging proposals for alternative methods such as cultivation, magic state distillation remains essential for achieving very low error rates.
Known distillation protocols are usually built through quantum codes derived from triorthogonal matrices.
Here, exploiting the specific noise structure present in magic state distillation protocols, we show that classical error-correcting codes offer a simpler framework for deriving these protocols.
This formulation is particularly well suited to systematic numerical and analytical studies of distillation protocols involving a fixed number of qubits.
Specifically, we use a SAT solver to derive a series of no-go theorems that relate key figures of merit, including the number of qubits and the factory distance.
In particular, considering the standard implementation as a circuit of Z-diagonal Pauli product rotations followed by measurements, we show that no $T$-to-$T$ distillation protocol on fewer than eight qubits can exceed distance 3, and no $T$-to-$\CCZ$ protocol distance 2.
Our results also include new such distillation protocols with the smallest number of qubits for a given distance in the literature, namely distance 4 and 5 $T$-to-$T$ protocols supported on 10 and 11 qubits, as well as distance 3 and 4 $T$-to-$\CCZ$ distillation protocols supported on 9 and 10 qubits.
Finally, going beyond unitary circuits by recycling qubits through mid-circuit measurement and reinitialization, we obtain an implementation of the distance-5 $49T$-to-$1T$ protocol on only 5 active qubits.

\end{abstract}


\maketitle

\input{Introduction_v2}
\input{Preliminaries_v2}

\input{Canonical_family_v2}
\input{Numerical_v2}
\input{Recycling}

\input{Conclusion_v2}

\section*{Note added}
Shortly after releasing the first version of this manuscript, we became aware of the independent and concurrent work~\cite{singh2026borrowed}, which develops a unified numerical search framework for distillation based on a borrowed-identity condition rather than triorthogonality conditions.
Their search targets the multi-output ($k>1$) regime at distance 2, whereas our work focuses on single-output protocols at higher distance $d \geq 3$.

\begin{acknowledgments}

The authors would like to thank J. Guillaud and D. Ruiz for fruitful discussions at an early stage of the project and V. Londe for critical reading of
the manuscript draft as well as for encouraging discussions during the project.
We are additionally grateful to C. Gidney for feedback that prompted a second version of this preprint.
We acknowledge funding
by Agence Nationale de la Recherche in the framework
of France 2030 with the reference ANR-22-PETQ-0007
and project name EPiQ. This work was also partially
supported by the French National program Programme
d’investissement d’avenir, IRT Nanoelec, with the reference ANR-10-AIRT-05.
\end{acknowledgments}

\bibliography{sample}

\clearpage
\appendix
\crefalias{section}{appendix}
\input{Commutations_relations}

\clearpage
\input{Algebras_v2}
\clearpage
\input{Matrices}
\clearpage
\input{Recycling_app}

\end{document}

%% file: Introduction_v2.tex
\section{Introduction}

Quantum computing promises to solve computational problems that are hard to solve classically, with potential applications spanning cryptography, materials science, and optimization~\cite{beverland_resource_estimation}.
However, the practical realization of large-scale quantum computation faces many difficulties, primarily due to the fragility of quantum states and their susceptibility to noise~\cite{Preskill_2025}.
Quantum error correction provides a way of dealing with these errors, enabling reliable quantum processing even in the presence of noise~\cite{Campbell_2017_Terhal}.
While error-correcting codes can, in principle, suppress errors to arbitrarily low levels, their implementation may require substantial physical and computational resources~\cite{Gidney_2021,gidney2025factor2048bitrsa, litinski2023compute256bitellipticcurve, Gouzien_2021,Gouzien_2023, beverland_resource_estimation,Zhou_2025,Jacinto_2026, Review_applications}.

A fundamental constraint on quantum error-correcting codes (QECC) arises from the Eastin--Knill theorem~\cite{eastin2009}, which forbids the transversal implementation of a universal gate set at the logical level for any binary QECC with distance $d > 1$.
Additionally, strict limitations on the set of gates that can be efficiently performed via unitary circuits exist on topological QECC~\cite{Bravyi_2013} as well as for architectures of strongly noise-biased qubits concatenated with classical error-correcting codes~\cite{barizien2025accessiblequantumgatesclassical}.
As a result, most practical QECCs, such as the widely studied surface code~\cite{Dennis_2002,Litinski_2019_GoS,fowler2019lowoverheadquantumcomputation} and bicycle bivariate code~\cite{Bravyi_2024}, have efficient implementations of fault-tolerant logical Clifford gates~\cite{yoder2025tourgrossmodularquantum, Horsman_2012,serraperalta2025decodingtransversalcliffordgates} but logical non-Clifford gates, which are required for universality, are performed by consuming \textit{magic states} through an injection circuit composed of Clifford gates~\cite{Bravyi_2005,fowler2019lowoverheadquantumcomputation}.
For example, the magic state $\ket{T}=\left(\ket{0}+e^{i\pi/4}\ket{1}\right)/\sqrt{2}$ can be consumed to implement the non-Clifford $T$ gate, a $\pi/4$ rotation around the $Z$-axis of the Bloch sphere.

Errors in the preparation of magic states translate directly to errors on the gates implemented through the injection channel.
Useful quantum computation typically requires on the order of $10^8$ or more $T$ gates~\cite{Lee_2021,beverland_resource_estimation,schrottenloher2026optimizedpointadditioncircuits}, demanding magic state error rates below $10^{-10}$ to avoid spoiling the final result.
In practice, however, magic states can be realistically prepared with error rates $p_\mathrm{in}$, similar to the physical qubit error rate, in the order of $\approx 10^{-3}$~\cite{Li_2015,MS_injection_Surface_code}.
\textit{Magic state distillation} is a well studied approach to close this error rate gap~\cite{Bravyi_2005,Bravyi_2012,campbell_magic-state_2012,jones_multilevel_2013,Campbell_2017,Haah_2018,Litinski_2019,londe2026localdistillationreedmuller}.
A $T$-state distillation protocol consumes $n$ low-quality $\ket{T}$ states with input error $p_\mathrm{in}$ to produce $k$ high-quality $\ket{T}$ states with output error $p_\mathrm{out} = C p_\mathrm{in}^d$ at first order, for some distance $d>1$ and prefactor $C>0$.
Similarly, $\CCZ$-state distillation protocols~\cite{Haah_2018}, often called synthillation protocols~\cite{Campbell_2017}, consume $\ket{T}$ states to produce high-quality $\ket{\CCZ}=\CCZ\ket{+++}$ states, which are injected to perform Toffoli gates.
Recent work on magic-state cultivation has emerged as a promising, resource-efficient alternative to magic-state distillation~\cite{gidney2024magicstatecultivation,itogawa_efficient_2025,google_cultivation_experiment}.
However, current protocols hit a floor at $p_\mathrm{out}\approx 10^{-7}$ for input error rates of $p_\mathrm{in}\approx 10^{-3}$.
Therefore, in recent resource estimation, magic state cultivation is used as a first step to produce less-noisy magic states, then processed in a final distillation phase~\cite{gidney2025factor2048bitrsa,cain_shors_2026}.

\begin{figure*}[ht]
    \centering
    \includegraphics[width=\linewidth]{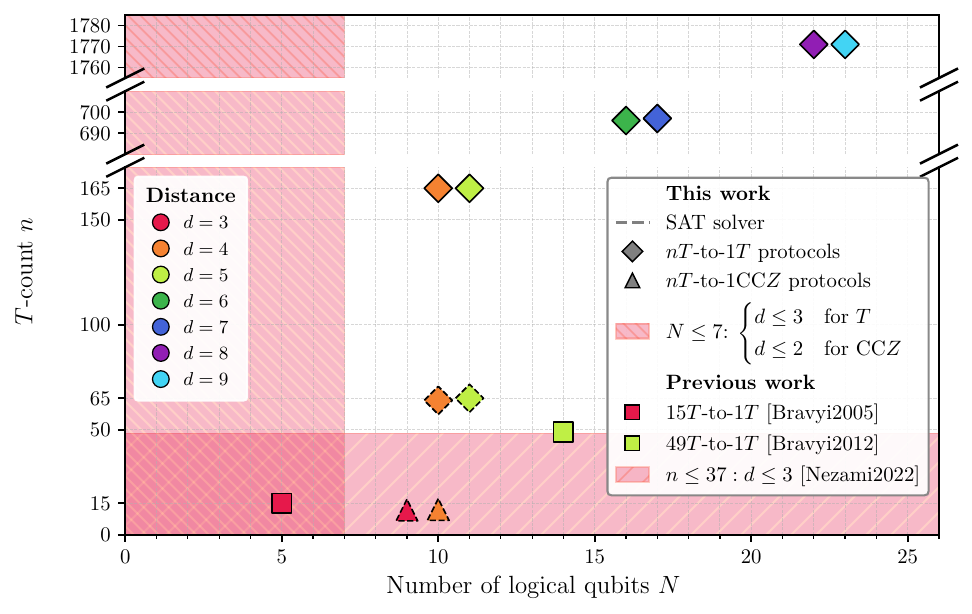}
   \caption{\textbf{Summary of the exploration of compact magic-state distillation protocols.}
    Each dot represents a distillation protocol implemented on $N$ logical qubits, using standard $Z$-pauli product rotations circuits, that consumes $n$ noisy magic states and outputs a single magic state with output error scaling as $p_{\mathrm{out}} \propto p_{\mathrm{in}}^d$, where $d$ is the distillation distance.
    Note that we here exclude mid-circuit measurements and re-initialization.
    Previously known protocols include the 15-to-1 protocol of Ref.~\cite{Bravyi_2005}, referred to as [Bravyi2005] on the figure, implemented on $N=5$ qubits, and the 49-to-1 protocol of Ref.~\cite{Bravyi_2012}, [Bravyi2012], implemented on $N=14$ qubits.
    Ref.~\cite{Nezami_2022}, referred to as [Nezami2022] on the figure, proved a no-go result showing that any protocol with $n+k\leq 38$ satisfies $d\leq 3$.
    In this work, we introduce several new families of protocols.
    First, using a SAT-based search, we find a 64-to-1 protocol on $N=10$ qubits with output error rate $p_{\mathrm{out}} = 495\,p_{\mathrm{in}}^4$, and a 65-to-1 protocol on $N=11$ qubits with $p_{\mathrm{out}} = 7947\,p_{\mathrm{in}}^5$.
    Second, we construct a canonical family of protocols, a few examples of which are shown here for distances up to $9$.
    Third, we present $\ket{\mathrm{CC}Z}$ distillation protocols achieving distances $d=3$ and $d=4$ with $n=47$ and $n=48$, respectively. For comparison with $\ket{T}$ state distillation protocols, the corresponding $T$-counts are divided by four, since a Toffoli gate can be implemented using either one $\ket{\mathrm{CC}Z}$ state or four $\ket{T}$ states.
    Finally, we prove that no $\mathrm{CC}Z$ (resp. $T$) distillation protocol achieving distance $d\geq 3$ (resp. $d\geq 4$) can be implemented on $N=7$ or less logical qubits.
    } 
    \label{fig:Main_results}
\end{figure*}

Distillation protocols are typically constructed from $[[n,k,d]]$ quantum codes, commonly referred to as triorthogonal codes~\cite{Bravyi_2012}, that support a transversal implementation of the $T$ gate.
The most widely used example is the quantum Reed--Muller $[[15,1,3]]$ code, which leads to a 15 noisy $\ket{T}$ to 1 cleaner $\ket{T}$ distillation (15-to-1 protocol) with an output error $p_\mathrm{out}=35p_\mathrm{in}^3$~\cite{Bravyi_2005}.
For higher distances, numerical exploration has identified the $[[49,1,5]]$ code~\cite{Bravyi_2012} yielding a 49-to-1 protocol characterized by an error rate reduction of $p_\mathrm{out}=1411p_\mathrm{in}^5$.
In general, distillation protocols based on a triorthogonal code consuming $n$ noisy $\ket{T}$ states and involving the measurement of $m_x$ $X$-type stabilizers can be implemented on a circuit of only $N=k+m_x$ qubits, at the cost of a circuit depth of $n$~\cite{Haah_2018,Litinski_2019,Litinski_2019_GoS}.
This reduction is made possible by the fact that distillation protocols suffer only from $Z$-errors on $\ket{T}$ states, allowing the qubits that would otherwise store the $Z$-check values to be eliminated.
This construction yields protocols of the following form: qubit preparation in $\ket{+}$, a sequence of commuting $Z$ Pauli-product rotations, and finally, qubit measurements in $X$-basis where the number of qubits corresponds to the number of rows in the original weakly triorthogonal matrix.
Concretely, construction of the 15-to-1 protocol can be implemented using only $5$ qubits, drastically reducing the resources needed.
Following a similar logic, one can implement the 49-to-1 scheme using $14$ qubits.

In this scope, finding good distillation protocols amounts to identifying codes that simultaneously achieve low output error, dictated primarily by $d$ and secondarily by $C$, small qubit footprint $N$, and low $T$-count $n$.
Many works attempt to bound the code distance $d$ for different values of $n$ and $k$~\cite{Shi_2024,baldelli2026constructingdecodingquantumtriorthogonal}, notably showing that $d\leq3$ for $n+k<38$~\cite{Nezami_2022}.
However, to our knowledge, no bound related to the number of qubits $N$ has been derived.

This work explores the set of achievable values of distance $d$, $T$-count $n$, and prefactor $C$ for fixed numbers of qubits $N$.
We first frame distillation protocols as the implementation of a fault-tolerant logical $T$ gate on the repetition code, which allows us to derive a natural characterization of distillation protocols in terms of the number of qubits $N$, bypassing the standard description via $X$-type stabilizers.
Using this framework, we construct a canonical family of distillation protocols analytically, yielding one concrete protocol for each value of qubit count $N$ with explicit $T$-count and distance.

Encoding our framework as a \textit{Boolean satisfiability} (SAT) problem suggests that this family may minimize the number of required qubits to achieve a given distance $d$.
Specifically, for every $d\leq 5$, the SAT solver has not been able to find working protocols using fewer qubits than the canonical family.
Relaxing the SAT instances by restricting to symmetric subfamilies of protocols, we explore distillation schemes up to $N=23$ logical qubits, uncovering new constructions along the way.
Most notably, we find a $64$-to-$1$ distance-$4$ protocol on $N=10$ qubits and a $65$-to-$1$ distance-$5$ protocol on $N=11$ qubits, the lowest known qubit footprints for these distances.
Additionally, we establish no-go theorems showing that distillation at distance $d\geq4$ requires at least $N\geq 8$ qubits.
Extending the SAT approach to $\CCZ$-state distillation, we prove that any distance-$3$ protocol requires at least $8$ qubits, and report the first such protocol, a $47T$-to-$1\CCZ$ scheme on $N=9$ qubits with output error rate $236\,p_\mathrm{in}^3$.
We further propose a $48T$-to-$1\CCZ$ distance-$4$ protocol on $N=10$ qubits, reducing the qubit footprint by $7$ compared to the previously most compact known construction~\cite{Jones_2013_64_to_2,Haah_2018}.
The results of the exploration of this landscape are gathered in \Cref{fig:Main_results}.

Finally, we attempt to further reduce the qubit cost of distillation factories by recycling qubits~\cite{ibm_qbit_recycling}.
Using mid-circuit measurements and qubit reinitialization, we bring the implementation of the $15$-to-$1$ protocol down from $N=5$ to $4$ qubits, and that of the $49$-to-$1$ protocol from $N=14$ to $5$ qubits.
To our knowledge, this is the first time a distance-$5$ protocol admits an implementation on $5$ qubits.

This paper is organized as follows.
In \Cref{sec:prelim}, we recall the necessary background on fault-tolerant quantum computing and magic state injection, before describing a first construction for a logical $T$ gate on the repetition code.
In \Cref{sec:FT_T_on_rep_code}, we propose two constructions of fault-tolerant logical $T$ gate on the repetition code and link these constructions to distillation schemes. 
In \Cref{sec:canonical}, we derive an analytical family of distillation protocols with an explicit link between the qubit count and the distance. In \Cref{sec:exploring}, we use
a SAT encoding to explore distillation and synthillation circuits, uncovering no-go theorems and discovering new schemes. 
In \Cref{sec:recycling}, we compress the qubit footprint of distillation protocols through qubit recycling.
Finally, we summarize all the discovered distillation and synthillation results in \Cref{sec:results} and conclude in \Cref{sec:conclusion}.

%% file: Preliminaries_v2.tex
\section{Preliminaries}
\label{sec:prelim}

\subsection{Fault-tolerant quantum computing}

Quantum error-correcting codes protect logical information by encoding it redundantly across many physical qubits.
Prominent examples include Calderbank--Shor--Steane (CSS) codes, such as the surface code and bivariate bicycle (BB) codes.
A key feature of CSS codes is that they admit a transversal implementation of the CNOT gate.
More generally, resource-efficient implementations of Clifford gates, which belong to the second level of the Clifford hierarchy, are available on such codes~\cite{Horsman_2012,yoder2025tourgrossmodularquantum,Bravyi_2024,fowler2019lowoverheadquantumcomputation,Litinski_2019_GoS}.

\begin{definition}[Pauli group]
	The \emph{Pauli group} $\mathcal{P}$ is the group of $n$-qubit operators of the form $e^{i\phi} P_1 \otimes \cdots \otimes P_n$, where each $P_k \in \{\mathbb{I}, X, Y, Z\}$ and $\phi \in \{0, \pi/2, \pi, 3\pi/2\}$.
\end{definition}

\begin{definition}[Clifford hierarchy]
The $j$-th level of the Clifford hierarchy ($j \geq 1$) is defined as
\begin{equation}
    \mathfrak{C}_j = \bigl\{ U \in \mathrm{U}(2^n) \;\big|\; U P U^\dagger \in \mathfrak{C}_{j-1},\; \forall P \in \mathcal{P} \bigr\},
\end{equation}
with $\mathfrak{C}_1 = \mathcal{P}$.
\end{definition}

Circuits composed entirely of Clifford gates can be efficiently simulated classically by the Gottesman--Knill theorem, as they map stabilizer states to stabilizer states~\cite{Gottesman1998,aaronson2004}.
Universal quantum computation therefore requires at least one non-Clifford gate, i.e., a gate belonging to $\mathfrak{C}_j$ for $j \geq 3$.
A canonical example is the universal gate set $\{H, \mathrm{CNOT}, T\}$, where the $T$ gate lies in $\mathfrak{C}_3$.

However, the Eastin--Knill theorem prohibits any QECC from implementing a universal gate set transversally~\cite{eastin2009}.
In particular, the codes mentioned above do not admit a transversal $T$ gate.
The standard approach to circumvent this restriction is \emph{magic state distillation and injection} where high-fidelity resource states, called magic states, are prepared offline and consumed via gate teleportation to perform the desired non-Clifford operation at the logical level.

\subsection{Non-Clifford operations via state injection}

Gate teleportation is a circuit gadget that implements a unitary $U$ on a target state $\ket{\psi}$ by applying $U$ to one half of an entangled pair and then teleporting $\ket{\psi}$ onto it.
We first recall the canonical quantum teleportation protocol in the circuit model of quantum computation

\begin{center}
\begin{quantikz}
\lstick{$\ket{\psi}$} &  \phase{X}   & \meter{}  & \cwbend{1}\setwiretype{n} & \\
\lstick{$\ket{+}$}    & \gate{Z}\wire[u]{q}  & \qw       & \targ{}                  & \qw \rstick{$\ket{\psi}$}
\end{quantikz}.
\end{center}

The first qubit is entangled with the ancilla via a CNOT gate, which we equivalently wrote as a $Z$ gate controlled in the $X$ basis as one can easily check that $\ketbra{0}{0}\mathbb{I}+\ketbra{1}{1}X=\ketbra{+}{+}\mathbb{I}+\ketbra{-}{-}Z$.
Then, the first qubit is measured in the computational basis.
The classical outcome controls an $X$ correction on the ancilla, recovering $\ket{\psi}$.
Here, we consider the states $\ket{\psi}$ and $\ket{+}$ to be encoded in an \textit{inner} QECC, protecting against Clifford errors occurring during the teleportation circuit.

To teleport $U\ket{\psi}$, one prepares the ancilla in $U\ket{+}$ and conjugates the correction operators by $U$.
This quantum circuit reads

\begin{center}
\begin{quantikz}
\lstick{$\ket{\psi}$} & \qw       & \phase{X}                      & \meter{} & \cwbend{1}\setwiretype{n} & \\
\lstick{$\ket{+}$}    & \gate{U}  & \gate{UZU^{\dagger}}\wire[u]{q} & \qw      & \gate{UXU^{\dagger}}      & \qw \rstick{$U\ket{\psi}$}
\end{quantikz}.
\end{center}

To keep the output on the first wire, one can prepend a SWAP, yielding the in-place gadget

\begin{center}
\begin{quantikz}
\lstick{$\ket{\psi}$} & \phase{X}                              & \ctrl{1} & \qw      & \gate{UXU^{\dagger}X}         & \qw \rstick{$U\ket{\psi}$} \\
\lstick{$\ket{U}$}    & \gate{ZUZ U^{\dagger}}\wire[u]{q}    & \targ{}  & \meter{} & \cwbend{-1}\setwiretype{n}    &
\end{quantikz}
\end{center}
where we write $\ket{U} \coloneqq U\ket{+}$ for the resource state.
When $U$ implements a non-Clifford gate, $\ket{U}$ is called a \emph{magic state}.
If $U$ is diagonal in the $Z$-basis, the controlled-$ZUZU^\dagger$ gate on the first wire acts trivially and can be omitted.
Furthermore, when $U \in \mathfrak{C}_3$, the classically controlled correction $UXU^\dagger X$ belongs to $\mathfrak{C}_2$.
Therefore, the entire teleportation gadget can be implemented fault-tolerantly, as it relies only on Clifford operations plus the preparation of $\ket{U}$. For the $T$ gate specifically, the circuit reduces to
\begin{equation}\label{circ:Tinject}
    \begin{quantikz}
\ket{\psi} \ & \ctrl{1} & \gate{S} & \ T\ket{\psi} \\
\ket{T} \ & \targ{} & \meter{} \wire[u][1]{c}\\
\end{quantikz}
\end{equation}

This circuit is commonly known as a $T$-state injection circuit~\cite{Campbell_2017_Terhal}.
The input magic state has to be of high quality, as imperfections will propagate as errors through the circuit to the final state. 
Indeed, a $Z$-type error on $\ket{T}$ produces the state $Z T \ket{\psi}$.
Meanwhile, an $X$-type error leads to an erroneous state
\begin{equation}
    ST\ket{\psi} = \frac{e^{i\pi/4}}{\sqrt{2}}\left(\mathbb{I} - iZ\right)T\ket{\psi}.
\end{equation}
Interestingly, this decomposition highlights that only $Z$-type errors occur on the output state.
Indeed, measuring an $X$-type stabilizer of the QECC encoding $\ket{\psi}$ at the logical level projects the output to either $T\ket{\psi}$ or $ZT\ket{\psi}$, each with probability $1/2$, depending on the measurement outcome.

Therefore, assuming that all Clifford operations are noiseless, as protected by the QECC in the injection scheme, only $Z$-type errors on the output state need to be corrected. 
This observation naturally motivates the design of protocols that use $n$ injection schemes of noisy $T$-states to prepare $k<n$ $T$-states specifically protected against $Z$-type errors.
Note that this logic generalizes to any unitary diagonal in the $Z$-basis, such as $\CCZ$ or $\sqrt{T}$.

\section{Fault-tolerant logical $T$ gate on the repetition code}\label{sec:FT_T_on_rep_code}

Here, we show that a protocol for $T$-state distillation can be implemented with a pair of complementary codes: an inner code which provides fault-tolerant Clifford gates and $T$ gates, implemented by magic state injections, affected by $Z$-errors only and an outer classical code used as a distillation code.
The basic idea is to prepare a logical $\ket{+}$ state of the distillation code from qubits encoded in the inner code. A logical $T$-state of the outer code is then obtained by performing a $T$ gate on the distillation code using multiple $T$-state injections of the inner code. The decoding circuits and the measurements of the distillation code stabilizers are finally used to map the $T$-state on one of the inner code qubits and identify $Z$-errors.
Cases with no detected errors are kept and provide one $T$ state encoded in the inner code that is cleaner than the initially injected $T$-states.

\subsection{Logical $T$ gate on the repetition code}
\label{sec:LogicalT_repCode}

We propose a construction of a logical $T$ gate on the simplest classical code: the repetition code.
The $[N,1,N]$ repetition code encodes a single logical qubit into $N$ physical qubits, and has distance $N$ against a single type of error.
Throughout this section, we consider that each physical qubit is itself encoded in an inner QECC (e.g., the surface code) to protect against the errors that may arise during the Clifford operations of the distillation protocol. 
We remind the reader that when $T$ gates are performed by state injection on this inner code, they are only affected by $Z$-errors

We focus on a repetition code protecting against $Z$-type errors, namely the standard repetition code recast in the $X$-basis.

\begin{definition}[Phase-flip repetition code]
    The phase-flip repetition code on $N$ physical qubits is defined by the logical codewords
    \begin{equation}
    \begin{aligned}
        \ket{+}_L &= \ket{+}^{\otimes N}, \\
        \ket{-}_L &= \ket{-}^{\otimes N}.
        \end{aligned}
    \end{equation}
\end{definition}

The code has stabilizer generators $\{X_i X_{i+1}\}_{i=1}^{N-1}$, logical $\bar{X} = X_1$ and logical $\bar{Z} = Z_1 \cdots Z_N$.

To protect a state in the repetition code, we encode it using a circuit $\mathcal{E}$ and decode it with $\mathcal{E}^\dagger$.

\begin{definition}[Encoding and decoding circuits]
    Let $\ket{\psi} = \alpha\ket{+} + \beta\ket{-}$ be an arbitrary single-qubit state.
    The \emph{encoding circuit} $\mathcal{E}$ is the Clifford circuit mapping
    \begin{equation}
        \ket{\psi} \otimes \ket{+}^{\otimes N-1} \;\longmapsto\; \ket{\psi}_L = \alpha\ket{+}_L + \beta\ket{-}_L.
    \end{equation}
	Equivalently, $\mathcal{E}$ maps the stabilizers $X_i \mapsto X_i X_{i+1}$ for $i = 1, \ldots, N-1$, and the logical operators $X_1 \mapsto \bar{X}$, $Z_1 \mapsto \bar{Z}$.
    The \emph{decoding circuit} is $\mathcal{E}^\dagger$, mapping $\ket{\psi}_L$ back to $\ket{\psi} \otimes \ket{+}^{\otimes N-1}$.
\end{definition}

The encoding circuit is implemented by a descending CNOT cascade

\begin{equation}
\begin{quantikz}
& \targ{}  && \ \ldots \ && \\
& \ctrl{-1} & \targ{} & \ \ldots \ &&\\
&& \ctrl{-1} & \ \ldots \ &&\\
\setwiretype{n} & \vdots && \ \ldots \ \\
&&& \ \ldots \ & \targ{}&\\
&&& \ \ldots \ & \ctrl{-1} &
\end{quantikz} 
=: \begin{quantikz}
 && \gate[3]{\mathcal{E}} &&\\
 \setwiretype{n} & \vdots && \vdots &\\
 && &&\\
\end{quantikz}
\end{equation}

and decoding by the reverse, ascending CNOT cascade

\begin{equation}\label{circ:non-FT_T_gate}
    \begin{quantikz}
&  &\ \ldots\ && \targ{}  \\
&&\ \ldots\ &\targ{} & \ctrl{-1}  \\
&&\ \ldots\ & \ctrl{-1}&\\
\setwiretype{n} & \vdots && \ \ldots \ \\
& \targ{} & \ \ldots \ && \\
& \ctrl{-1} & \ \ldots \ && 
\end{quantikz} =: \begin{quantikz}
 && \gate[3]{\mathcal{E}^\dagger} &&\\
 \setwiretype{n} & \vdots && \vdots &\\
 && &&\\
\end{quantikz}
\end{equation}

A naive implementation of a logical $T$ gate on $\ket{\psi}_L$ starts with decoding using $\mathcal{E}^\dagger$, applying $T$ on the first qubit, and re-encoding with $\mathcal{E}$. 
We label this operation $T_L$
\begin{equation}
    \begin{quantikz}
        & \gate[4]{\mathcal{E}^\dagger} & \gate{T} & \gate[4]{\mathcal{E}} & \\
        &&&&\\
        \setwiretype{n} & & \vdots & & \\
        & & & & \\
    \end{quantikz} =: \begin{quantikz}
        & \gate[3]{T_L} & \\
        \setwiretype{n} &  & \\
        & &  \\
    \end{quantikz}
\end{equation}
However, this naive implementation strategy is not fault-tolerant.
Indeed, any error on the $T$ gate propagates through the encoding circuit to all $N$ qubits and is therefore undetectable by the repetition code.
We thus look for a genuinely fault-tolerant implementation of $T_L$, in which errors remain local and correctable using the repetition code.

\subsection{A \texorpdfstring{$\{T,\CS^\dagger,\CCZ\}$}{T,CS,CCZ} implementation}

In the previously proposed naive implementation of the logical $T$ gate, errors are propagated due to the encoding circuit $\mathcal{E}$.
To find a fault-tolerant implementation, we use the fact that $\mathcal{E}^\dagger \mathcal{E} = \mathbb{I}$.
Commuting $T$ on the first qubit to the left with the decoding circuit, therefore, produces a new circuit where the encoding and decoding cancel out.
In \Cref{App:Commutations_relations}, we give details on how this commutation yields the circuit
\begin{equation}
 \begin{quantikz}
    & \gate{T}& \gate[3]{\prod_{i,j} (\CS^{\dagger})_{ij}} & \gate[3]{\prod_{i,j,k} (\CCZ)_{ijk}} &&\\
 \setwiretype{n} & \vdots &&& \vdots &\\
 & \gate{T} &&& &\\ 
\end{quantikz}
\label{circ:commutations}
\end{equation}
composed solely of $T$ gates, $\CS^\dagger$ gates, and $\CCZ$ gates, defined as
\begin{equation}
 \begin{aligned}
T = \begin{pmatrix} 1 & 0 \\ 0 & e^{i\pi/4} \end{pmatrix}, \quad
\CS^\dagger = \text{diag}(1, 1, 1, -i) \\[10pt]
\CCZ = \text{diag}(1, 1, 1, 1, 1, 1, 1, -1).
\end{aligned}   
\end{equation}

In words, this circuit is made up of one T gate for every qubit, one $\CS^\dagger$ for every pair of qubits, and a $\CCZ$ for every triplet.
Interestingly, these gates can be exactly decomposed into Clifford+T circuits: using 3 T gates for $\CS^\dagger$ and 7 T gates for $\CCZ$~\cite{Nielsen2010,jones_low-overhead_2013}.
Hence, only $Z$ errors affect these gates, making the phase-flip repetition code still relevant for detecting faulty gates.

Considering such circuits, with $N\leq 3$, errors will propagate to logical errors as at least one gate acts on every qubit.
However, for $N \geq 4$, we obtain a fault-tolerant implementation of $T_L$.
Indeed, with $N=\{4,5,6\}$ qubits, two gate error patterns are required to affect all $N$ qubits, e.g.~for $N=4$, a $\CCZ$ on three qubits and a $T$ gate on the remaining one. 
Hence, for these qubit footprints, we have a fault-tolerant implementation of the logical $T$ gate that is able to detect any error on one of the $T$, $\CS^\dagger$ or $\CCZ$ gates.
Extending this logic to $N\geq7$, the repetition code detects weight-two error patterns as well, offering a circuit distance $d=3$ protection against erroneous magic states.
Interestingly, these circuits directly link the number of qubits $N$ with the distance $d$, and with the $T$-count n, scaling as
\begin{equation}
    n = N + 3\binom{N}{2}+7\binom{N}{3}.
\end{equation}

In the distillation picture, this approach is equivalent to distillation protocols based on $n$-to-$1$ protocols with distance $d$ implemented on $N$ qubits.
In particular, our construction on $N=4$ qubits leads to a $50$-to-$1$ distillation protocol of distance $2$, while on $N=7$ qubits, we obtain a $315$-to-$1$ protocol of distance 3.
Comparing to the well-known $15$-to-$1$ Reed-Muller protocol of distance $3$ that can be implemented on $N=5$ qubits shows the need to find alternative circuits of the logical $T$ gate with improved $T$-count and distances.

\subsection{Pauli product rotations implementations}
\label{sec:alaVictor}

To go beyond this costly implementation, we focus on a more practical approach, attempting to implement $T_L$ using the gate set of $\pi/8$ Pauli product rotations on $N$ qubits, that we label $P_{\pi/8}$.
Such rotations are defined as
\begin{equation}
	P_{\pi/8} = \exp(-i\frac{\pi}{8}P) = \cos(\frac{\pi}{8})\mathbb{I}-i\sin(\frac{\pi}{8})P
	\label{eq:pi8}
\end{equation}
where $P$ is a Pauli string $P\in\{I,Z\}^{\otimes N}$.

This family of rotations is interesting as each gate consumes exactly one magic state $\ket{T}$ via teleportation followed by Clifford corrections, as depicted in \Cref{fig:Pcirc}.
Following the argument in \Cref{sec:prelim}, these gates propagate only $Z$ errors on every qubit where the element of $P$ is $Z$.
Moreover, as seen previously, the teleportation protocol can be implemented fault-tolerantly, provided a perfect magic-state, as these rotations are members of $\mathfrak{C}^2$, the second level of the Clifford hierarchy.
Furthermore, these gates can be implemented with clearly defined overhead on any QECC supporting lattice surgery techniques, such as the surface code~\cite{Litinski_2019_GoS} or the BB code~\cite{ibm_qbit_recycling}.
On these two codes, entire distillation factories have been proposed using Pauli product rotations.

\begin{figure}
    \begin{quantikz}
    	\ & \gate[5, disable auto height, style={fill=measurebg, rounded corners}]
        {\begin{array}{c}
     \\  \\ P \\ \\  \\ \\ \\ \\ Z
    \end{array}} & \gate[3]{P_{\frac{\pi}{4}}} &\gate[3]{P_{\frac{\pi}{2}}}&  \\
     \ldots\setwiretype{n} &          &              &   & \\
                         &           &              &   &\\
    \setwiretype{n}&   &\cwbend{-1}     &\\
	\ket{T} \ &   & \gate{H}&\meter[style={fill=measurebg, rounded corners}]{} \wire[u][2]{c}
    \end{quantikz}
\label{circ:multi_qubit_T}
\caption{Gate teleportation and Clifford correction circuit to perform a $P_{\pi/8}$ rotation~\cite{Litinski_2019_GoS}. First a $\ket{T}$ state is injected and the entire system is jointly measured in $P\otimes Z$. A first Pauli correction consisting of a $P_{\pi/4}$ gate is applied to the $N$ qubits if the outcome of the measurement is $-1$. The injected qubit undergoes a Hadamard operation before being measured in the Z basis. Another Clifford correction, namely $P_{\pi/2}$, is finally performed on the first N
	qubits.}
\label{fig:Pcirc}
\end{figure}

Equipped with this new gate set, we attempt to reconstruct $T_L$.
Specifically, we aim to find a circuit composed of a subset of $n$ Pauli product rotations, $\{P^k_{\pi/8}\}_{k=1}^n$, whose joint action implements the logical $T$ gate on the repetition code.
To each Pauli string $P^k$ of the $k$-th rotation, we associate a binary vector $\alpha^k \in \mathbb{F}_2^N$ where the $i$-th element is defined as
\begin{equation}\label{eq:gate_sequence_P_from_alpha}
    \alpha^k_i = \begin{cases} 0 & \text{if } (P^k)_i = I, \\ 1 & \text{if } (P^k)_i = Z. \end{cases}
\end{equation}
The full circuit can thus be represented as a $N\times n$ binary matrix $\mathcal{C}$, whose rows correspond to qubits and whose columns map to Pauli product rotations, with $\mathcal{C}_{ij}=\alpha^i_j$. For simplicity, we henceforth refer to $\mathcal{C}$ as both the matrix and the circuit itself, as $\mathcal{C} = \prod_{k=1}^n P_{\pi/8}^k$.

Note that a $P_{\pi/8}$ gate is nothing else but a logical $T$ gate on a smaller repetition code.
Indeed, conjugating $T=\cos(\frac{\pi}{8})\mathbb{I}-i\sin(\frac{\pi}{8})Z$ by an ascending and a descending cascade of CNOT that maps $Z$ onto $P$ gives $P_{\pi/8}$.
From the construction \Cref{circ:commutations}, we know that a logical $T$ gate decomposes to a circuit with one $T$ gate on every qubit, one $\CS^\dagger$ gate on every pair of qubits and a $\CCZ$ between every triplets of qubits.
Thus, in terms of $\mathcal{C}$, each column $\alpha^k$ adds a $T$ gate on each qubit $i$ where $\alpha_i^k=1$, a $\CS^\dagger$ gate on each pair of qubits $i,j$ where $\alpha_{i}^k\alpha_j^k=1$, and a $\CCZ$ gate on triplets on each triplet of qubits $i,j,l$ where $\alpha_{i}^k\alpha_j^k\alpha_{l}^k=1$.
Therefore, we can impose constraints on $\mathcal{C}$ to recover $T_L$ as,
\begin{subequations}\label{eq:strong_cons_T}
    \begin{align}
        & \forall i\in[\![1,N]\!], \sum_k \alpha_i^k \equiv 1 \mod 8, \\
        & \forall i<j \in[\![1,N]\!]^2, \sum_k \alpha_{i}^k\alpha_j^k \equiv 1 \mod 4, \\
        & \forall i<j<l\in[\![1,N]\!]^3, \sum_k \alpha_i^k\alpha_j^k\alpha_l^k \equiv 1 \mod 2. 
    \end{align}
\end{subequations}
The modulo arguments come from the fact that applying a $T$ gate 8 times on an individual qubit is equivalent to applying nothing, as $T^8 = \mathbb{I}$. Likewise, $(\CS^{\dagger})^4 = \mathbb{I}^{\otimes 2}$ and $(\CCZ)^2 = \mathbb{I}^{\otimes 3}$.

Considering that Clifford corrections are free, these constraints can be further simplified.
Since $T^2 = S$ and $(\CS^{\dagger})^2 = CZ$ are both Clifford gates, any even number in the first two conditions reduces to a Clifford operation.
Therefore, the constraints can be all set to modulo 2:
\begin{subequations}\label{eq:cons_T}
    \begin{align}
        & \forall i\in[\![1,N]\!], \sum_k \alpha_i^k \equiv 1 \mod 2, \\
        & \forall i<j \in[\![1,N]\!]^2, \sum_k \alpha_i^k\alpha_j^k \equiv 1 \mod 2, \\
        & \forall i<j<l\in[\![1,N]\!]^3, \sum_k \alpha_i^k\alpha_j^k\alpha_l^k \equiv 1 \mod 2. 
    \end{align}
\end{subequations}

While a circuit fulfilling the conditions \Cref{eq:cons_T} implements a logical $T$ gate up to Clifford corrections, this implementation may not be fault-tolerant. 
What we denote as the distance $d$ of a protocol $\mathcal{C}$ corresponds to the minimum number of faulty gates which, together, lead to a logical error in the repetition code.
A logical error here is characterized by an odd number of $Z$ errors on every qubit composing the repetition code.
As a Pauli product rotation propagates $Z$ errors where $(P^k)_i=Z$~\cite{Litinski_2019}, or $\alpha^k_i=1$, a set $K$ of faulty-gates leads to a logical error if, for all $i\in [1,N]$, $\sum_{k\in K}\alpha_i^k=1 \mod 2$. 
Formally, the distance is therefore given by
\begin{equation} \label{eq:dparam}
    d = \min |K| \text{ s.t. }  \sum_{k\in K} \alpha_i^k \equiv \left(\begin{array}{c}
         1   \\
         \vdots\\
         1 
    \end{array}\right) \ \mod 2.
\end{equation}

In \Cref{sec:canonical} and \Cref{sec:sat}, we provide ways to find circuits $\mathcal{C}$, on $N$ qubits, implementing the fault-tolerant logical $T$ gate on the repetition code with a guaranteed minimum distance $d$.

\subsection{From fault-tolerant logical $T$ gate to distillation protocols}
\label{sec:logicalT-to-distill}

Given a circuit $\mathcal{C}$ implementing a fault-tolerant logical $T$ gate on the repetition code with distance $d$, we show how to derive a magic state distillation circuit $\mathcal{G}$ that outputs a high-fidelity $\ket{T}$ state on the first qubit, with output error $\propto p_\mathrm{in}^d$, where $p_\mathrm{in}$ is the physical error rate of the injected $\ket{T}$ states, and with the same $T$-count.
The distillation process proceeds in two steps. First, applying $\mathcal{C}$ to the logical state $\ket{+}_L = \mathcal{E}\ket{+}^{\otimes N}$ of the repetition code produces $\ket{T}_L$.
Then, decoding $\ket{T}_L$ via $\mathcal{E}^\dagger$ leads to a high quality $\ket{T}$ state on the first qubit, with error $\propto p_\mathrm{in}^d$, and with the other $N-1$ qubits serving as checks.
The distillation circuit $\mathcal{G}$ is therefore obtained by commuting $\mathcal{C}$ past $\mathcal{E}^\dagger$. 
Note that when considering fault-tolerant logical $T$ gate up to Clifford correction, these corrections must also be commuted past $\mathcal{E}^\dagger$, resulting in a modified correction remaining in the Clifford group.

As explained in \Cref{App:V_to_L}, $\mathcal{G}$ is also composed of $n$ Pauli product rotations $\{\tilde P^k\}_{k=1}^n$. Each rotation $\tilde P^k$ is associated with a binary vector $\beta^k$, which are columns of $\mathcal{G}$ and whose $i$-th element is
\begin{equation}\label{eq1:from_alpha_to_beta}
\begin{aligned}
\beta^k_1 &= \alpha^k_1\\
\forall i \in [\![2,N]\!],~ \beta^k_i &= \alpha^k_i \oplus\alpha^k_{i-1}
\end{aligned}
\end{equation}
where $\alpha^k$ is the $k$-th column of $\mathcal{C}$.
Therefore, a distillation circuit $\mathcal{G}$ computed from $\mathcal{C}$ of size $ N\times n$ has the same $T$-count $n$. 
Note that when Clifford corrections are required, the depth of the distillation process is lower bounded by the $T$-count.
Indeed, if the Clifford correction happens to be a $P_{\theta}$ gate for some $P$ such that $P_{\pi/8}$ is already in $\mathcal{C}$, then it can be absorbed by adapting the injection circuit of the former $P_{\pi/8}$ gate~\cite{Litinski_2019}.
However, if $P$ is not the support of one of the Pauli product $\pi/8$ rotations in $\mathcal{C}$, then, some additional Clifford gates are necessary before measuring the checks.

With the same logic, we can rewrite the constraints \Cref{eq:cons_T} that $\mathcal{G}$ must fulfill to be a $T$-distillation circuit of the form $nT$-to-$1T$:
\begin{subequations}\label{eq:triorth_cons}
    \begin{align}
        &  \sum_k \beta_1^k \equiv 1 \mod 2, \\
        & \forall i\in[\![2,N]\!], \sum_k \beta_i^k \equiv 0 \mod 2, \\
        & \forall i<j \in[\![1,N]\!]^2, \sum_k \beta_i^k\beta_j^k \equiv 0 \mod 2, \\
        & \forall i<j<l\in[\![1,N]\!]^3, \sum_k \beta_i^k\beta_j^k\beta_l^k \equiv 0 \mod 2. 
    \end{align}
\end{subequations}
The distance $d$ of our distillation protocols corresponds to the amount of error required to produce a $Z$ error on the first qubit without producing any detectable $Z$ errors on the remaining $N-1$ qubits, serving as check qubits. Formally, the distance of the protocol implemented by $\mathcal{G}$ is given by
\begin{equation} \label{eq:dparam_triorth_cons}
    d = \min |K| \text{ s.t. } \sum_{k\in K} \beta^k \equiv \left(\begin{array}{c}
         1  \\
         0  \\
         \vdots\\
         0 
    \end{array}\right) \mod 2.
\end{equation}

These expressions recover the known construction of magic state distillation based on $[[n,1,d]]$-triorthogonal codes~\cite{Bravyi_2012}.
Such codes are built from punctured triorthogonal matrices, which are matrices on $\mathbb{F}_2^{N\times n}$ that fulfill the constraints \Cref{eq:triorth_cons}~\cite{Bravyi_2012,Litinski_2019_GoS}.
Distillation protocols based on triorthogonal codes have later been compressed~\cite{Haah_2018, Litinski_2019_GoS} to be implemented using $m_X$ qubits, the number of $X$-type stabilizer generators in the code, instead of $n$ qubits. 
These compressed protocols start by preparing the qubits in $\ket{+}$ applying a sequence of $n$ pauli product rotations $P_{\pi/8}$ and then measure the $n-1$ last qubits in the $X$-basis. 
Our construction circumvent the use of a triorthogonal code and directly yields a protocol of this form.
Finally, this correspondence allows us to compute the prefactor $C$ of the output error $p_\mathrm{out}=C p_\mathrm{in}^d$ using the MacWilliams identity, as detailed in \cite{Bravyi_2012}.

%% file: Canonical_family_v2.tex
\section{Construction of a canonical family of distillation protocols}
\label{sec:canonical}

In this section, we explore constructions of distillation protocols from an analytical analysis of the logical $T$ gate on the repetition code. 
In particular, we build a \textit{canonical} family of distillation protocols $\mathcal{F}^0_N$ distilling $\ket{T}$ states by using $N$ logical qubits.
This construction recovers known protocols such as the $15$-to-$1$ on $N=5$ qubits and provides new ones, such as a $165$-to-$1$ protocol with distance $4$ and another $165$-to-$1$ protocol with distance $5$, using $N=10$ and $N=11$ qubits, respectively.
To our knowledge, these new distillation schemes require the smallest qubit footprint to distill at an order greater than $3$.
Furthermore, our construction generalizes to any single qubit rotation along the $Z$-axis at the level $L\geq 3$ of the Clifford hierarchy, i.e.~any rotation $R_Z(\pi/2^L)$ for $L\geq 3$. 
We illustrate this by providing distillation protocols for the $\ket{\sqrt{T}}$ magic state in \Cref{App:sqrt(T)}.

Let $N$ be a positive integer; the spatial footprint of our distillation protocol.
We denote $[N]=[\![1,N]\!]$ the set of indices of the qubits in the protocol.
We want to build a set $\mathcal{F}_N\subset \mathcal{P}([N])$ such that the circuit composed of the gates $\{P_{\frac{\pi}{8}} , P=\otimes_{i\in S} Z_i \text{ for } S \in \mathcal{F}_N\}$ performs a logical $T$ gate on the repetition code for each value of $N$.
A sequence $(\mathcal{F}_N)_{N\in\mathbb{N}}$ of such sets is called here a family of distillation protocols.
Note that the distance of a protocol $\mathcal{F}_N$ is the minimal number of erroneous gates in $\mathcal{F}_N$ combining into the undetected pattern of $Z$ errors on every qubits.
We denote $d(\mathcal{F}_N)$ the distance of $\mathcal{F}_N$.

To each set $\mathcal{F}_N\subset \mathcal{P}([N])$, we associate an indicator function $f:\mathcal{P}([N])\rightarrow\mathbb{F}_2$ as 
    \[f(A)=\begin{cases}
        1 &\text{ if } A\in \mathcal{F}_N\\
        0 &\text{ else}
    \end{cases}.\]
Conversely, this function fully defines the protocol associated with $\mathcal{F}_N$. For any possible gate $P_{\frac{\pi}{8}}$ with support $P=\otimes_{i\in A} Z_i$, $f(A)$ indicates whether the gate is in the protocol or not.
Now we want to impose the constraints of \Cref{eq:cons_T} on our set of selected gates.
To do so, we introduce $g$ as
\[\forall B \subseteq [N], g(B) = \sum_{A\supseteq B}f(A).\]
Intuitively, $g$ gives the parity of the number of Pauli product rotations in the protocol whose support includes the qubits of $B$.

\begin{lemma}[Constraints]\label{lemma:constraints}
     $\mathcal{F}_N\subset \mathcal{P}([N])$ defines a protocol for implementing a logical T gate on the repetition code if and only if
    \[\forall B \subseteq [N] \text{ s.t. }1\leq |B|\leq 3,~  g(B) = 1\]
    Additionally, we have that 
    \[\forall A \subseteq [N], f(A) = \sum_{B\supseteq A}g(B).\]
\end{lemma}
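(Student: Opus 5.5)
The plan is to translate the conditions of \Cref{eq:cons_T} into statements about $g$, and then obtain the inverse formula by Möbius inversion on the Boolean lattice over $\mathbb{F}_2$.

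\textbf{First part.} The gates of $\mathcal{F}_N$ are exactly the columns $\alpha^k$ of $\mathcal{C}$. Each column is the indicator vector of a set $A\in\mathcal{F}_N$, so that $\alpha^k_i=1$ if and only if $i\in A$. For any $B\subseteq[N]$ with $1\le|B|\le 3$, the product $\prod_{i\in B}\alpha^k_i$ equals $1$ exactly when $B\subseteq A$. It follows that
\[
\sum_k \prod_{i\in B}\alpha_i^k \equiv \sum_{A\in\mathcal{F}_N,\,A\supseteq B} 1 = \sum_{A\supseteq B} f(A) = g(B) \mod 2 .
\]
Taking $|B|=1,2,3$ recovers the three lines of \Cref{eq:cons_T} respectively. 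By the discussion preceding \Cref{eq:cons_T}, these conditions are equivalent to $\mathcal{C}$ implementing $T_L$ up to Clifford corrections, which gives the ``if and only if''.

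There is one subtlety to address. A multiset of gates could contain the same $P$ twice, but $\mathcal{F}_N$ is a set. I would note that two copies of a $P_{\pi/8}$ give the Clifford $P_{\pi/4}$, so only the parity of each gate's multiplicity matters. A set representation therefore loses nothing, consistent with working modulo $2$ throughout. The empty set, or identity string, contributes only a global phase and can be ignored.

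\textbf{Second part.} I would compute, over $\mathbb{F}_2$,
\[
\sum_{B\supseteq A} g(B)=\sum_{B\supseteq A}\sum_{C\supseteq B} f(C)=\sum_{C\supseteq A} f(C)\,\bigl|\{B: A\subseteq B\subseteq C\}\bigr| .
\]
The number of intermediate sets $B$ is $2^{|C\setminus A|}$, which is odd only when $C=A$. The sum therefore reduces to $f(A)$.

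This is the standard Möbius inversion on the subset lattice; the Möbius coefficient $(-1)^{|C\setminus A|}$ reduces to $1$ modulo $2$. The only step needing care is this exchange of summation order and the parity count, which is routine. I do not expect a real obstacle; the main point to state clearly is the identification of column products with the containment $B\subseteq A$.
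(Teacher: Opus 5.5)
Your proof is correct and follows the paper's argument. The paper likewise identifies each gate with the indicator vector $\alpha^k$ of its support, so that $g(B)$ counts the columns containing $B$. It then obtains the inversion formula as $\mu\star g=(\mu\ast\zeta)\star f=f$, using $\mu(A,B)=(-1)^{|B|-|A|}\equiv 1 \pmod 2$, and your count of the $2^{|C\setminus A|}$ intermediate sets is an explicit verification of that same Möbius inversion over $\mathbb{F}_2$.
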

\begin{proof}
    The proof is given in \Cref{Proof_lemma_1}.
\end{proof}

The remaining degrees of freedom to define the family $\mathcal{F}_N$ is to fix the variables $g(B)$ for $B\subseteq [N]$ such that $|B|>3$.

Surprisingly, fixing all the free variables $\{g(B), B\in \mathcal{P}([N]), |B|>3\}$ to 0 yields an interesting family of protocols. 
In particular, without imposing any conditions related to the distance, this family still yields protocols with interesting distances relative to what can be achieved with a fixed number of qubits $N$.
We study this canonical family explicitly through the next theorem. 

\begin{theorem}[Canonical family of logical T circuits]\label{th:canonical_T}
    We call the canonical protocol and denote $\mathcal{F}_N^0$ the protocol defined by $$g_0(B)=\begin{cases}
        1 & \text{for all } B\subseteq [N]\text{ such that }|B|\leq 3,\\
        0 &\text{if } |B|>3.
    \end{cases}$$
This defines a circuit on $N$ qubits that implements a logical $T$ gate on the repetition code. 

    This family is defined by the following equality:
    \[
        \forall A\subseteq[N], f_0(A)\equiv\begin{cases}
            0 \mod 2&\text{ if } |A|>3,\\
            1 \mod 2&\text{ if } |A|= 3,\\
            N-1 \mod 2&\text{ if } |A|= 2,\\
            N+\binom{N-1}{2} \mod 2&\text{ if } |A|= 1.\\
        \end{cases}
    \]
    
    The distance of this family of protocols is 
    \[d(\mathcal{F}_N^0)=\begin{cases}
        \lceil\frac{N}{3}\rceil & \text{if $N$ even,}\\
        \text{smallest odd integer}\geq \frac{N}{3}& \text{if $N$ odd,}
    \end{cases}\]
    and the $T$-count is 
    \[n= \begin{cases}
        \binom{N}{1}+\binom{N}{2}+\binom{N}{3} & \text{if } N\equiv 0 \mod 4,\\
        \binom{N}{1}+\binom{N}{3}& \text{if } N\equiv 1 \mod 4,\\
        \binom{N}{2}+\binom{N}{3} & \text{if } N\equiv 2 \mod 4,\\
        \binom{N}{3} & \text{if } N\equiv 3 \mod 4.\\
    \end{cases}\]
\end{theorem}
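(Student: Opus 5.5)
The plan has three parts: validity, the explicit form of $f_0$ together with the $T$-count, and the distance. Validity is immediate from \Cref{lemma:constraints}, since $g_0(B)=1$ for every $1\leq|B|\leq 3$. For $f_0$, I would apply the inversion formula of \Cref{lemma:constraints}:
\[
f_0(A)=\sum_{B\supseteq A}g_0(B)=\#\{B\supseteq A:\ |B|\leq 3\}=\sum_{j=0}^{3-|A|}\binom{N-|A|}{j}.
\]
This sum is $0$ if $|A|>3$ and $1$ if $|A|=3$. For $|A|=2$ it is $1+(N-2)\equiv N-1$. For $|A|=1$ it is $1+(N-1)+\binom{N-1}{2}\equiv N+\binom{N-1}{2}$. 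The empty set corresponds to the identity rotation and is discarded.

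For the $T$-count, I would count the sets $A$ with $f_0(A)=1$ by size, which gives $\binom{N}{3}+[N\text{ even}]\binom N2+[N+\binom{N-1}{2}\text{ odd}]\binom N1$. The parity of $\binom{N-1}{2}=(N-1)(N-2)/2$ depends only on $N\bmod 4$: it is odd for $N\equiv 0,3$ and even for $N\equiv 1,2$. Hence singletons occur exactly when $N\equiv 0,1\pmod 4$ and pairs exactly when $N$ is even, which reproduces the four cases of the statement.

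For the distance, by \Cref{eq:dparam} I need the minimal number of columns, that is, sets in $\mathcal{F}^0_N$ (all triples, together with all pairs and/or all singletons depending on $N$), whose symmetric difference is $[N]$.

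The lower bound has two parts. Every set has size at most $3$, so $3|K|\geq N$. When $N$ is odd no pairs are present, so all sets have odd size. Since $|\triangle_{k\in K}S_k|\equiv\sum_k|S_k|\pmod 2$, this forces $|K|\equiv N\equiv 1\pmod 2$, so $|K|$ is odd.

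For the upper bound I would construct explicit patterns. For $N$ even, pairs are available. Writing $k=\lceil N/3\rceil$, I would check that $N=3a+2b$ with $a+b=k$ and $a,b\geq 0$ is always solvable: take $b=3k-N\in\{0,1,2\}$. A disjoint partition of $[N]$ into $a$ triples and $b$ pairs then gives a logical error of weight $k$. For $N$ odd, let $k$ be the smallest odd integer $\geq N/3$. Then $3k-N$ is even and lies in $\{0,2,4\}$, and I would use only triples, allowing overlaps. With excess $0$, take a disjoint partition. With excess $2$, let one element lie in three triples and partition the rest. With excess $4$, use two such triple-covered elements, or equivalently replace a pair of triples sharing two elements.

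The main obstacle is making the odd-$N$ overlapping construction valid for small $N$, where there are too few triples to place the overlaps. Here I would treat $N\leq 7$ by hand, for instance $N=5$ (distance $3$, consistent with the $15$-to-$1$ code), and then give a uniform gadget for larger $N$.
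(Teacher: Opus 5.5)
Your proposal is correct and shares the paper's skeleton. You get $f_0$ from the inversion formula of \Cref{lemma:constraints}, the $T$-count by reading off which of singletons and pairs survive as a function of $N \bmod 4$, and the distance from explicit logical-error patterns.

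The difference lies in the distance step. The paper only exhibits patterns, case by case on $N \bmod 4$ and $N \bmod 3$ (using singletons when $N\equiv 1 \pmod 4$), and leaves the matching lower bound implicit. Your bound $3|K|\geq N$ is what actually certifies minimality. For odd $N$, $\mathcal{F}^0_N$ has no pairs, so you add the parity identity $|\triangle_k S_k|\equiv\sum_k |S_k| \pmod 2$.

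Your bookkeeping by the excess $3k-N\in\{0,2,4\}$, using triples only, is also more uniform and more robust. Consider the case $N\equiv 3\pmod 4$, $N=3q+2$. There the paper covers the last five qubits with ``two triplets sharing one qubit''. These cancel on the shared qubit and cover only four qubits, giving an even total of $q+1$ sets, which your parity argument excludes. Your excess-$4$ gadget fixes this: take $\{x,y,a\}$ and $\{x,y,b\}$, plus a partition of $[N]\setminus\{a,b\}$ into $k-2$ triples.

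The small-$N$ obstacle you anticipate is milder than you expect. The excess-$2$ gadget works for every $N=3k-2\geq 7$. The excess-$4$ gadget works for every $N=3k-4\geq 5$; for $N=5$ it reads $\{x,y,a\},\{x,y,b\},\{x,y,c\}$. So the only odd case needing separate treatment is $N=1$, where the singleton $\{1\}\in\mathcal{F}^0_1$ does the job. Only if you read ``two triple-covered elements'' as two disjoint seven-qubit gadgets would you need $N\geq 17$.
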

\begin{proof}
    The proof is provided in \Cref{Proof_th1}
\end{proof}

This family of circuits reproducing the logical $T$ gate on the repetition code obeys additional symmetries.
It is made of multi-qubit $P_{\pi/8}$ supported on 1, 2, or 3 qubits, and whenever a gate appears, all the gates with the same support cardinality appear as well: a permutation group symmetry over the qubits is respected.
However, when brought back to the distillation framework, the gates no longer support such an obvious symmetry due to the mapping explained in \Cref{sec:logicalT-to-distill}, \Cref{eq1:from_alpha_to_beta}.

Surprisingly, even under these highly restrictive constraints, we find a distance-$3$ protocol on $5$ qubits, which can be proven optimal using a SAT solver, as explained in \Cref{sec:sat}.
The same family also yields a distance-$4$ protocol on $N=10$ qubits and a distance-$5$ protocol on $N=11$ qubits. We provide the parameters of protocols in this family for small $N$ in \Cref{tab:canonical_family_T}.

Strikingly, translating directly the triorthogonal constraints of \Cref{eq:triorth_cons} in this algebraic formulation (instead of going through the logical $T$ gate in the repetition code formulation), would give $g(\{1\})\equiv 1 \mod 2$ and $g(B)\equiv0 \mod 2$ for any other $B\subseteq[N]$ such that $|B|<4$.
Therefore, fixing $g(B)\equiv 0 \mod 2$ for $|B|>3$ would only yield the trivial protocol that performs a physical $T$ gate on the first qubit, providing no error suppression at all.
The repetition code (or any other classical code) spreads the logical operator over the $N$ qubits so that building an undetected error pattern out of atomic $Z$ error patterns of small support requires a lot of them.

\begin{table}[]
    \centering 
    \renewcommand{\arraystretch}{1.25} \setlength{\tabcolsep}{10pt} 
    \begin{tabular}{@{}ccc@{}} \toprule \textbf{Number of qubits} $N$ & \textbf{$T$-count} $n$ & \textbf{Distance} $d$ \\ \midrule 4 & 14 & 2 \\ 5 & 15 & 3 \\ 6 & 35 & 2 \\ 7 & 35 & 3 \\ 8 & 92 & 3 \\ 9 & 93 & 3 \\ 10 & 165 & 4 \\ 11 & 165 & 5 \\ 16 & 696 & 6 \\ 17 & 697 & 7 \\ 22 & 1\,771 & 8 \\ 23 & 1\,771 & 9 \\ 
    \bottomrule 
    \end{tabular}
    
    \caption{Parameters of the $\ket{T}$ distillation protocols from the canonical family $\mathcal{F}_N^0$ for small values of $N$.}
    \label{tab:canonical_family_T}
\end{table}

To further explore the landscape of $nT$-to-$1T$ distillation protocols, we use a SAT solver, as described in the following~\Cref{sec:exploring}.
This way, we either find other protocols not belonging to this canonical family with potentially lower $T$-count if the outcome is SAT, or if the solver provides UNSAT, we can state that no distillation protocol of distance $d$ exists on $N$ qubits.

Finally, using the same ideas to implement a logical ${\sqrt{T}}$ gate on a repetition code yields a set of constraints analogous to \Cref{eq:cons_T} with four $\mod 4$ equalities.
This way, using a function $g$ taking values in $\mathbb{Z}/4\mathbb{Z}$ to translate these constraints, we similarly build a canonical family of protocols for distilling $\ket{\sqrt{T}}$ states in \Cref{App:sqrt(T)}.
More generally, as explained in \Cref{App:sqrt(T)}, one can extend the framework to ${\sqrt[L]{T}}$ gates for any positive integer $L$.

%% file: Numerical_v2.tex
\section{Exploration of distillation factories with SAT solvers}
\label{sec:exploring}

In this section, we further explore distillation protocols using numerical methods. 
Specifically, we encode the constraints that a circuit $\mathcal{C}$ implementing the logical-$T$ gate on the repetition code must satisfy into a \textit{Boolean satisfiability} (SAT) problem.
Using the equivalence between logical $T$ gate circuits and distillation protocols established in \Cref{sec:logicalT-to-distill}, this allows us to exhaustively explore the relations between the number of qubits $N$, the $T$-count $n$, and the code distance $d$.
We first perform an exhaustive search of distillation factories over $N\leq 7$ qubits, and further search for specific codes up to $N=11$.
We then relax the SAT instance to explore restricted families of protocols at large number of qubits, up to $N=20$.
Finally, we generalize this approach to the $\CCZ$-state distillation.

\subsection{Logical \texorpdfstring{$T$}{T} gate on the repetition code as a SAT problem}\label{sec:sat}

\subsubsection{Encoding the SAT problem}

We show how finding circuits $\mathcal{C} \in \mathbb{F}_2^{N\times n}$ that implement the fault-tolerant logical $T$ gate on the repetition code can be expressed as a SAT problem with binary constraints.
As seen in \Cref{sec:alaVictor}, a circuit $\mathcal{C}$ consists of $n$ columns, each being a binary vector $\alpha^k\in \mathbb{F}_2^N$ for column $k$.
Since there are $2^N$ such possible vectors, we assign a boolean variable $v_k \in \{0,1\}$, accounting for whether the column $\alpha^k$ appears in $\mathcal{C}$.
We restrict each column to appearing at most once, since any Pauli product rotation that appears more than once is equivalent to a Clifford rotation and thus can be discarded.
This formulation in terms of boolean variables does not account for column ordering in $\mathcal{C}$, thus it is invariant under column permutation, encoding this symmetry directly in the problem structure.

The constraints $\mathcal{C}$ must satisfy to implement a valid logical $T$ gate are given by \Cref{eq:cons_T} and can be written in terms of boolean variables $v_k$ as
\begin{subequations} \label{eq:SAT_formulation}
\begin{align}
         \forall i \in [N],\quad & \bigoplus_{k=1}^{2^N} v_k.\alpha^k_i \equiv 1 \mod 2, \\
         \forall i_1<i_2 \in[N]^2,\quad &\bigoplus_{k=1}^{2^N} v_k.\alpha^k_{i_1}.\alpha^k_{i_2} \equiv 1 \mod 2, \\
         \forall i_1<i_2<i_3 \in[N]^3,\quad &\bigoplus_{k=1}^{2^N} v_k.\alpha^k_{i_1}.\alpha^k_{i_2}.\alpha^k_{i_3} \equiv 1 \mod 2.
\end{align}
\end{subequations}
where each constraint reduces to a \textsc{XOR} over all variables $v_k$ whose associated column $\alpha^k$ has support on the target singlet, pair, or triplet of qubits. 
Explicitly,
\begin{subequations}
\label{eq:SAT_XOR}
\begin{align}
    \forall i \in [N], \quad
        & \bigoplus_{\substack{k=1 \\ \alpha^k_i = 1}}^{2^N} v_k \equiv 1 \mod{2}, \\
    \forall i_1 < i_2 \in [N]^2, \quad
        & \bigoplus_{\substack{k=1 \\ \alpha^k_{i_1} = \alpha^k_{i_2} = 1}}^{2^N} v_k \equiv 1 \mod{2}, \\
    \forall i_1 < i_2 < i_3 \in [N]^3, \quad
        & \bigoplus_{\substack{j=1 \\ \alpha^k_{i_1} = \alpha^k_{i_2} = \alpha^k_{i_3} = 1}}^{2^N} v_k \equiv 1 \mod{2}.
\end{align}
\end{subequations}

The code distance can be enforced directly as an additional SAT constraint.
Specifically, the distance condition \Cref{eq:dparam} requires that no subset of fewer than $d_\text{min}$ selected columns \textsc{XOR}s to the all-ones vector $\mathbf{1}$, i.e. leading to a logical error.
Formally, that is the constraint
\begin{multline}
    \label{eq:dist_sat} 
    \forall K\subset [2^N]:|K|<d_\text{min}, \\\left[\bigoplus_{k\in K} \alpha^k = \mathbf{1} \mod 2 \right] \Longrightarrow \prod_{k\in K}v_k =0.
\end{multline}
Intuitively, this forbids any combination of at most $d_\text{min}-1$ columns whose direct sum yields a logical error from being simultaneously selected.

We implement this SAT problem both using z3~\cite{z3} and using Google's CP-SAT solver from the OR-Tools suite~\cite{cpsatlp}.
We made our code available on GitHub~\cite{compact_distillation}.
Note that in principle it would be more efficient to encode the problem in the variables $g(B)$ for $B\subseteq [N]$ described in the previous \Cref{sec:canonical}, as this consists of $\sum_{k=4}^N\binom{N}{k}$ variables, which is fewer than one boolean variable per element in $\mathcal{P}([N])$. 
However, this alternative is more complex in its design and does not extend to larger $N$, as the problem size still grows combinatorially with $N$.

\subsubsection{Exploring the space of small \texorpdfstring{$N$}{N} codes}

The SAT formulation introduced above enables a systematic exploration of logical $T$ circuits along several axes. 
A key feature is that the distance constraints allow us to determine whether a code of distance $d$ exists on $N$ physical qubits. Indeed, fixing $N$ and $d_{\min}$, an UNSAT result certifies that no code of distance $d_{\min}$ or higher can be implemented on $N$ qubits.

We begin by probing small values of $N$, incrementing $d$ until UNSAT is reached. 
This allows us to prove that the maximum achievable distance for $N=4$ is $d=2$, while $5 \leq N\leq 7$ admits codes of distance at most $d \leq 3$.
For $N \geq 8$, the SAT instance becomes too large to certify UNSAT within a practical time budget.
Specifically, the solver did not converge on the $N=8,~d_\text{min}=4$ problem, after two months of runtime.
Nevertheless, our experiments provide evidence that distance $d\geq 4$ may require at least $N=10$ qubits.
Indeed, whenever a satisfying assignment exists in the instances we tested, the solver typically finds it within seconds; no such solution was found for $N\in{8,9}$ at $d_\mathrm{min}=4$, while solutions were found for $N=10$.

Exploring larger $N$, we search for protocols of increasing distance. 
The number of clauses in \Cref{eq:dist_sat} grows combinatorially with $d$, so each increment carries a significant computational cost. 
The first $d=4$ code we identify is a $165$-to-$1$ protocol of distance $4$ on $N=10$ qubits, with an output error rate of $18900\,p^4$.
Imposing $d_{\min}=4$ and moving to $N=11$ qubits, we find a $165$-to-$1$ protocol of distance $5$ with an output error rate of $784245\,p^5$.
For $N>11$, we could not build distance constraints above $d_\text{min}=3$ due to the size exceeding the available memory of our computer.

We also investigate codes with a reduced $T$-count by adding an explicit upper bound on the number of columns,
\begin{equation}
    \sum_j v_j \leq n_{\max}.
\end{equation}
For $N \leq 7$, this constraint lets us certify that no $d=3$ $T$-state distillation protocol exists with fewer than $n = 15$ columns, as reported in~\cite{koutsioumpas2022smallestcodetransversalt}.
For $N = 10$, we identify a $80$-to-$1$ distance $4$ protocol with an output error rate of $1259\,p^4$, while below $n = 80$, the solver finds no $d=4$ solution within an hour of runtime.
For $N=11$, we were not successful in further reducing the $T$-count within reasonable runtime.

Although these results already reveal interesting structure, the size of the SAT instance grows rapidly with both $N$ and $d$. 
To explore the high-$(N,d)$ regime, we therefore turn to restricted subfamilies of the problem.

\subsection{Symmetric subfamilies \texorpdfstring{$\mathcal{F}$}{F}}
\label{sec:relax}

The repetition code is invariant under arbitrary permutations of its qubits.
In terms of $\mathcal{C}$, this means that permuting rows leads to the same protocol. 
Imposing this symmetry would therefore drastically reduce the search space of the SAT instance.
However, encoding full row-permutation symmetry while simultaneously preserving column-permutation invariance proved elusive. 
We therefore consider restricted families of distillation circuits that are constructed to be invariant under row permutations $S_N$, or under weaker, less global symmetries that still substantially constrain the problem while allowing access to broader families of codes.

\subsubsection{\texorpdfstring{$\mathcal{F}_{S_N}$}{F_{S_N}}: Permutation symmetry}

In order to reduce the complexity of the SAT instance, we consider a subfamily of distillation circuits that are symmetric under all permutations of the $N$ qubits, $S_N$, of the repetition code.
Within this family $\mathcal{F}_{S_N}$, the only relevant parameters are whether, for each weight $k \in \{1,\ldots,N-1\}$, all $\binom{N}{k}$ columns of Hamming weight $k$ appear in the decomposition.
We associate a Boolean variable $w_k$ to each weight class, so that $w_k = 1$ means every Pauli product rotation with support on exactly $k$ qubits is included. 
This reduces the SAT instance to only $N-1$ variables.

If $w_k = 1$, the number of $T$ gates acting on any given qubit is $\binom{N-1}{k-1}$, the number of $\CS^\dagger$ on any given pair of qubits is $\binom{N-2}{k-2}$, and the number of $\CCZ$ on any given triplet is $\binom{N-3}{k-3}$.
Since the conditions \Cref{eq:cons_T} to implement a logical $T$ gate on the repetition code only depend on the parities of these counts, we define
\begin{equation}
    \begin{aligned}
    l_k \equiv \binom{N-1}{k-1}& \mod{2}, \quad
    p_k \equiv \binom{N-2}{k-2} \mod{2}, \\[10pt]
    &t_k \equiv \binom{N-3}{k-3} \mod{2},
    \end{aligned}
\end{equation}
and the conditions~\Cref{eq:SAT_XOR} reduce to three \textsc{xor} constraints,
\begin{equation}
    \bigoplus_{k=1}^{N} w_k l_k \;=\;
    \bigoplus_{k=1}^{N} w_k p_k \;=\;
    \bigoplus_{k=1}^{N} w_k t_k \;=\; 1.
\end{equation}

Within this symmetric family, a logical error of weight $r$ exists if and only if there exist $r$ active weight classes $k_1 \leq \cdots \leq k_r$ satisfying $k_1 + \cdots + k_r = N$. 
The distance of a circuit $\mathcal{C}$ in this family is therefore
\begin{equation}
\begin{aligned}
    d = \min \bigl\{ r : \exists\, &k_1 \leq \cdots \leq k_r \\
    &\text{ s.t. }
        {\textstyle\sum_{j=1}^r} k_j = N \\
    &\text{ and }
        w_{k_1} = \cdots = w_{k_r} = 1 \bigr\}.
\end{aligned}
\end{equation}
Imposing a minimum distance $d_{\min}$ is then equivalent to forbidding all such sets of size $r < d_{\min}$.
Formally, this translates into the clauses
\begin{equation}
    \begin{aligned}
    \forall\, r& < d_{\min},\quad \\
    &\forall\, k_1 \leq \cdots \leq k_r\, \text{ s.t.\ } \, \textstyle\sum_{j=1}^r k_j = N, \\
    & \prod_{j=1}^{r} w_{k_j} = 0.
    \end{aligned}
\end{equation}
The number of such clauses grows combinatorially with $d_{\min}$, but remains tractable
given the small number of Boolean variables $w_k$.

Finally, the total number of columns in the circuit is $\sum_{k=1}^{N-1} w_k \binom{N}{k}$,
and an upper bound $n_{\max}$ on the circuit size translates directly into the linear constraint
\begin{equation}
    \sum_{k=1}^{N-1} w_k \binom{N}{k} \leq n_{\max}.
\end{equation}

For this construction, we explore circuits on $N\in[4,23]$ qubits.
For each $N$, we increase the distance until we reach UNSAT, excluding circuits of higher distance. Once the maximum distance is known, we attempt to reduce the $T$-count. Fixing the distance, we reduce $n_\text{max}$ incrementally : for each new run of the solver, if a solution is found with $n$ columns, we impose a new constraint $n_\text{max}<n$.
The valid constructions found are available in \Cref{tab:all_results}.

Interestingly, for each $N$, the family $\mathcal{F}_{S_N}$ achieves the same maximum distance as the canonical family $\mathcal{F}_N^0$ developed in \Cref{sec:canonical}.
The main difference between both families is the capability to use columns of weights $w_{k\geq4}$ in $\mathcal{F}_{S_N}$. However, including all columns of higher weight is detrimental for the distance, as producing logical errors will require fewer faulty gates. 
This explains why these two families essentially produce the same distillation circuits.

Similarly to the canonical family, these symmetric circuits form a strict subset of all possible distillation schemes. 
For instance, no symmetric scheme with $d = 3$ exists for $N = 6$, whereas a non-symmetric scheme with these parameters is easily obtained as a mere extension of the 15-to-1 protocol on $N=5$ qubits, showing the limit of this approach.

\subsubsection{\texorpdfstring{$\mathcal{F}_\mathrm{Y}$}{F_{Y}}: Young subgroup symmetry}
\label{sec:F_Y}

We now consider a strictly larger family $\mathcal{F}_\mathrm{Y}$ of distillation circuits, which we construct to be symmetric under the Young subgroup. 
Given a $K$-part partition $\lambda = (\lambda_1, \ldots, \lambda_K)$ of $N$ with $\sum_i \lambda_i = N$, the associated Young subgroup is $S_\lambda = S_{\lambda_1} \times \cdots \times S_{\lambda_K}$, corresponding to the independent permutations of qubits within each block.

Under $S_\lambda$, the gate columns in $\mathbb{F}_2^N$ are grouped into orbits characterized by their weight vector $\mathbf{w} = (w_1, \ldots, w_K)$, where $w_i$ is the Hamming weight of the column restricted to block $i$. 
The orbit of type $\mathbf{w}$ has size $\prod_i \binom{\lambda_i}{w_i}$, and we exclude the all-zero and all-one columns, i.e.~the orbits $\mathbf{w} = \mathbf{0}$ and $\mathbf{w} = \lambda$. 
We associate a Boolean variable $v_\mathbf{w} \in \{0,1\}$ to each remaining orbit, with $v_\mathbf{w} = 1$ indicating that all columns of type $\mathbf{w}$ are included in the circuit.

Ensuring a logical $T$ gate is implemented, the conditions \Cref{eq:cons_T} reduce, under $S_\lambda$ symmetry, to one XOR constraint per orbit of qubit subsets of size $t \in \{1,2,3\}$. Such an orbit is characterized by a type vector $\tau = (\tau_1, \ldots, \tau_K)$ with $\sum_i \tau_i = t$ and $\tau_i \leq \lambda_i$, where $\tau_i$ records how many qubits of the subset fall in block $i$.
For a fixed type $\tau$, the logical $T$ gate conditions associated with any qubit subset of type $\tau$ involves counting the columns in each orbit $\mathbf{w}$ which have their support containing that subset, modulo 2.
This count depends only on the orbit label $\mathbf{w}$ and the type $\tau$, and not on the specific subset chosen within the orbit.
Indeed, in block $i$, the number of ways to extend $\tau_i$ fixed positions to a support of size $w_i$ is $\binom{\lambda_i - \tau_i}{w_i - \tau_i}$. 
The parity contribution of orbit $\mathbf{w}$ to the constraint of type $\tau$ is therefore
\begin{equation}
    c_{\tau,\mathbf{w}} = \prod_{i=1}^K \binom{\lambda_i - \tau_i}{w_i - \tau_i} \mod 2.
\end{equation}
The constraint of type $\tau$ then reads
\begin{equation}
    \bigoplus_{\mathbf{w} \,:\, c_{\tau,\mathbf{w}} = 1} v_\mathbf{w} = 1.
\end{equation}
We impose this constraint for all $\tau$ with $|\tau| \in \{1, 2, 3\}$.

The distance condition \Cref{eq:dparam} generalizes naturally in this framework. 
Recall that a logical error corresponds to a set of faulty gates whose combined action is an undetectable $Z$-error. 
Under $S_\lambda$ symmetry, this translates into a condition on weight vectors.
Indeed, a multiset of $r$ orbit labels $\{\mathbf{w}^{(1)}, \ldots, \mathbf{w}^{(r)}\}$ constitutes a logical error of weight $r$ if, in every block $i$, the weights $w_i^{(1)}, \ldots, w_i^{(r)}$ can be combined to flip all $\lambda_i$ qubits in that block. 
Imposing minimum distance $d_\mathrm{min}$ can thus be done by forbidding all such multisets of size $r < d_\mathrm{min}$, each translated into a clause
\begin{equation}
    \prod_{j=1}^r v_{\mathbf{w}^{(j)}} = 0
\end{equation}
ensuring that at least one orbit in the multiset must be absent from the circuit.

Finally, the total $T$-count is 
\begin{equation}
    n = \sum_{\mathbf{w}} v_\mathbf{w} \prod_{i=1}^K \binom{\lambda_i}{w_i}.
\end{equation}
We can thus impose $n \leq n_\mathrm{max}$ to bound the $T$-count.

For each $N$, we enumerate all partitions of $N$ into $K \geq 2$ parts and run the SAT instance for each partition $\lambda$ to determine the maximum achievable distance $d^*(\lambda)$. We then restrict attention to partitions attaining the global maximum $d^* = \max_\lambda d^*(\lambda)$. Among these, we search for the minimum $T$-count by iteratively reducing $n_\mathrm{max}$. Cycling through these partitions, whenever the solver finds a solution with $n$ columns, we add the constraint $n_\mathrm{max} < n$ and rerun until the instance becomes unsatisfiable or times out after 360 seconds. The minimum $T$-count for $N$ is thus the minimum $T$-count over all partitions.

This strategy subsumes $\mathcal{F}_{S_N}$ as any circuit found under full $S_N$ symmetry is recovered in $\mathcal{F}_\mathrm{Y}$ via the single-block partition $\lambda = (N)$. Moreover, the $\mathcal{F}_\mathrm{Y}$ family grants access to asymmetric constructions. 
In particular, it recovers the $15$-to-$1$ protocol on any $N \geq 5$ qubits via the partition $\lambda = (1, \dots,1, 5)$, circumventing a structural limitation of $\mathcal{F}_{S_N}$.
It further helps reducing $T$-count in some specific case, for example, improving the $T$-count from $165$ to $164$ for the distance $d=4$ protocol over $N=10$ qubits.
In \Cref{tab:all_results}, we display these results.

\subsubsection{\texorpdfstring{$\mathcal{F}_\mathrm{C}$}{F_{C}}: Cyclic subgroup symmetry}

We construct a third subfamily $\mathcal{F}_\mathrm{C}$ of distillation circuits, obtained by replacing the Young subgroup symmetry with the weaker requirement of invariance under cyclic permutations within each block. Given a $K$-part partition $\lambda = (\lambda_1, \ldots, \lambda_K)$ of $N$, the associated symmetry group is the product of cyclic groups $C_\lambda = C_{\lambda_1} \times \cdots \times C_{\lambda_K}$, acting by independent cyclic shifts of qubits within each block. 
Since $C_{\lambda_i} \subsetneq S_{\lambda_i}$ for $\lambda_i \geq 3$, this symmetry is strictly weaker than the Young subgroup symmetry.
We construct this subfamily such that $\mathcal{F}_\mathrm{Y} \subseteq \mathcal{F}_\mathrm{C}$. 
The larger family therefore grants access to a broader class of distillation circuits, potentially reducing the $T$-count for a given distance, at the cost of a larger SAT instance.

Under $C_\lambda$, the columns of a circuit $\mathcal{C}$ are partitioned into orbits, with a structure finer than in the Young case. 
In this case, two columns belong to the same orbit if and only if one can be obtained from the other by independently shifting each block cyclically. 
Formally, the orbit of a column $\alpha \in \mathbb{F}_2^N$ under $C_\lambda$ is
\begin{equation}
    \mathrm{Orb}(\alpha) = \left\{ \sigma \cdot \alpha \;\middle|\; \sigma \in C_{\lambda_1} \times \cdots \times C_{\lambda_K} \right\}.
\end{equation}
Distinct orbits are no longer characterized by weight vectors alone.
Indeed, columns with the same Hamming weights, within each block, may now belong to different orbits if their internal binary patterns differ. 
We associate a Boolean variable $v_s \in \{0,1\}$ to each orbit $s$, excluding the all-zero and all-one columns as before, with $v_s = 1$ indicating that all columns of orbit $s$ are included in the circuit.

The parity constraints follow the same structure as in \Cref{sec:F_Y}. 
For each orbit of qubit subsets of size $t \in \{1,2,3\}$, we impose one XOR constraint. 
The parity contribution of orbit $s$ to the constraint associated with a qubit subset $B$ is
\begin{equation}
    c_{B,s} = \left|\left\{ \alpha \in \mathrm{Orb}_s \;\middle|\; B \subseteq \mathrm{supp}(\alpha) \right\}\right| \mod 2,
\end{equation}
i.e.\ the number of columns in orbit $s$ whose support contains $B$, taken modulo $2$. The corresponding XOR constraint reads
\begin{equation}
    \bigoplus_{s \,:\, c_{B,s} = 1} v_s = 1.
\end{equation}
Since qubit subsets within the same $C_\lambda$-orbit lead to identical constraints, it is sufficient to impose one constraint per orbit of qubit subsets, using canonical representatives of weight $t \in \{1,2,3\}$.

The distance condition follows the same logic as that of \Cref{sec:F_Y}. 
A logical error of weight $r$ corresponds to a multiset of $r$ orbit labels $\{s_1, \ldots, s_r\}$ such that one can select one column from each orbit so that their XOR equals the all-one vector $\mathbf{1} \in \mathbb{F}_2^N$. 
Imposing minimum distance $d_\mathrm{min}$ forbids all such multisets of size $r < d_\mathrm{min}$, each yielding a clause
\begin{equation}
    \prod_{j=1}^r v_{s_j} = 0.
\end{equation}
The total $T$-count is 
\begin{equation}
    n = \sum_s v_s \,|\mathrm{Orb}_s|,
\end{equation}
and we impose $n \leq n_\mathrm{max}$ to bound the circuit size.

The number of orbits under $C_\lambda$ grows faster with $N$ than under $S_\lambda$, making the SAT instance larger. Accessing large qubit numbers therefore requires substantially more computational resources than in $\mathcal{F}_\mathrm{Y}$, though the problem remains smaller than the fully unrestricted SAT of \Cref{sec:sat}.

The exploration strategy mirrors that of $\mathcal{F}_\mathrm{Y}$. For each $N$, we enumerate all partitions of $N$ into $K \leq 5$ parts. 
For each partition $\lambda$, we first determine the maximum achievable distance $d^*(\lambda)$ by increasing $d_\mathrm{min}$ until the instance becomes unsatisfiable. 
We then fix $d_\mathrm{min} = d^*(\lambda)$ and minimize the $T$-count iteratively.
This is, whenever a solution with $n$ gates is found, we impose $n_\mathrm{max} < n$ and rerun the solver until the instance becomes unsatisfiable or times out after 360 seconds. 
The result for qubit number $N$ is the minimum $T$-count across all partitions achieving the global maximum distance $d^* = \max_\lambda d^*(\lambda)$.

For $d=3$, the 15-to-1 protocol is the minimum $T$-count we found for any $N\geq 5$.
Considering $d=4$, we obtained a protocol on $N=10$ with a $T$-count of $n=64$ and output error rate $p_\text{out}=495p_\text{in}^4$, reducing the requirement compared to both the unrestricted SAT, which timed out attempting to find low $T$-count, and $\mathcal{F}_{S_N}$ which exclude this protocol.
For $d=5$, we obtained a protocol on $N=11$ with a $T$-count of $n=65$ and output error rate $p_\text{out} = 7947p_\text{in}^5$, again improving on previous constructions.
These results can be found in \Cref{tab:all_results}, and the distillation circuit of the $64$-to-$1T$ and $65$-to-$1T$ are given in \Cref{App:G_matrices}.
Instances of larger $N$ remain out of reach, as enumerating and storing the distance constraints alone exceeds available memory and time.

\subsection{Beyond \texorpdfstring{$\ket{T}$ distillation : $\ket{\CCZ}$ factories}{T distillation : CCZ factories}}
\label{sec:ccz}

A wide class of quantum subroutines implements arithmetic on integers.
This class of algorithms uses many Toffoli-based classical reversible circuits.
Implementing a Toffoli gate requires either $7$ $T$-states or $4$ T-states and one auxiliary qubit in the context of the \textsc{AND} gate~\cite{jones_low-overhead_2013,gidney_halving_2018}.
As a consequence, it is often more efficient to distill $\mathrm{CC}Z$ states rather than $T$ states, as implementing a Toffoli gate consumes a single $\mathrm{CC}Z$ state.

To do so, state-of-the-art architectures~\cite{gidney2025factor2048bitrsa,low2026denserplanarsurfacecode} employ the $8T\rightarrow1\CCZ$ protocol supported on $N=4$ logical qubits with an output error rate of $p_{\text{out}}=28 p_{\text{in}}^2$~\cite{jones_low-overhead_2013}.
This protocol is often used on top of either a first level of $T$ state distillation or a first level of $T$ state cultivation~\cite{gidney2024magicstatecultivation,ibm_qbit_recycling}, so that with $p_{\text{in}}\approx 10^{-7}$ one can reach a regime of $p_{\text{out}}\approx 10^{-12}$.
Such concatenation of protocols yielding $\mathrm{CC}Z$ states saturates the output logical error rate due to quadratic error suppression.
For very deep algorithms requiring a logical error rate below $10^{-12}$ the $8T\rightarrow1\CCZ$ is replaced by the $15T\rightarrow1T$ protocol with an output error rate of $p_{\text{out}}=35 p_{\text{in}}^3$, see Figure 1 of~\cite{Gidney_2021}.
This concatenation then saturates at a much lower error rate of $p_{\text{out}}\approx 10^{-19}$.
However, it requires compiling the Toffoli gate over a Clifford+$T$ gate set, which induces some overhead. 
Here we explore the possibility of building small distillation factories of the form $nT\rightarrow1\CCZ$ with an output error rate of $p_{\text{out}}\propto p_{\text{in}}^d$ with $d>2$.
To our knowledge, the closest form of such factories are reported in~\cite{Haah_2018,Campbell_2017, jones_low-overhead_2013, londe2026localdistillationreedmuller}.
In \cite{Haah_2018}, the family of $T\rightarrow \mathrm{CC}Z$ protocols is based on Reed-Muller code $\mathrm{RM}(r,m)$.
They provide explicit protocols for error suppression of order $d=2^{m/3}$, thus finding a $64T\rightarrow 2\mathrm{CC}Z$ with error suppression $2944p^4$.
This code has 11 independent $X$-stabilizer generators, so that the associated circuit with Pauli product rotations $P_{\pi/8}$ is supported on $2\times 3+11 =17$ qubits.
This is quite similar to the $64T\rightarrow 2\mathrm{CC}Z$ protocol of \cite{Jones_2013_64_to_2}, exhibiting error suppression in $3072p^4$ using $17$ qubits as well.
Here, we leverage a SAT solver approach to fill the gap between distance $2$ and $3$, extending the repetition-code framework developed for $\ket{T}$-state distillation in \Cref{sec:LogicalT_repCode} to the distillation of $\ket{\CCZ}$-states.

Recall that $T$ gate injection produces only $Z$-type errors on the output state (see \Cref{sec:prelim}).
We want to reproduce the logical effect of a $\CCZ$ gate using a Pauli product rotations circuit $\mathcal{G}$ while ensuring that errors arising from these gates are detected.
One way to do so is to use a repetition code against phase flips on one of the three qubits of the $\CCZ$ gate while forbidding the use of Pauli product rotations with support on unprotected qubits only.
Following the same decode-gate-encode strategy as in \Cref{sec:LogicalT_repCode}, the imitated logical circuit, that we abusively denote $\CCZ_L$, is
\begin{equation}
    \begin{quantikz}
        & & \ctrl{2} & & \\
        & & \control{} & & \\
        & \gate[4]{\mathcal{E}^\dagger} & \control{} & \gate[4]{\mathcal{E}} & \\
        &&&&\\
        \setwiretype{n} & & \vdots & & \\
        & & & & \\
    \end{quantikz} =: \begin{quantikz}
        & \gate[3]{\CCZ_L} & \\
         &  & \\
       &\qwbundle{}  &   \\
    \end{quantikz}
\end{equation}
Commuting the $\CCZ$ gate with the decoding circuit $\mathcal{E}^\dagger$ yields a circuit composed of $\CCZ$ gates between qubits $1$, $2$, and each of the $N-2$ remaining qubits of the repetition code.
Applying the same construction as in \Cref{sec:alaVictor}, we obtain an encoded $\CCZ_L$ gate circuit $\mathcal{G}$ on $N$ qubits decomposed into $n$ Pauli product rotations $\{P^k\}_{k=1}^n$, each associated with a binary vector $\alpha^k \in \mathbb{F}_2^N$ as in \Cref{eq:gate_sequence_P_from_alpha}.
For $\mathcal{G}$ to correctly implement $\CCZ_L$, the vectors $\{\alpha^k\}$ must satisfy the constraints 
\begin{subequations}\label{eq:strong_cons_CCZ}
    \begin{align}
        & \forall i\in[\![1,N]\!], \sum_k \alpha_i^k \equiv 0 \mod 8, \\
        & \forall i<j \in[\![1,N]\!]^2, \sum_k \alpha_{i}^k\alpha_j^k \equiv 0 \mod 4, \\
        & i=1,j=2,\forall l\in[\![3,N]\!], \sum_k \alpha_1^k\alpha_2^k\alpha_l^k \equiv 1 \mod 2, \\ 
        & \forall i<j<l\in[\![3,N]\!]^3, \sum_k \alpha_i^k\alpha_j^k\alpha_l^k \equiv 0 \mod 2,
    \end{align}
\end{subequations}
which reduce to
\begin{subequations}\label{eq:cons_CCZ}
    \begin{align}
        & \forall i\in[\![1,N]\!], \sum_k \alpha_i^k \equiv 0 \mod 2, \\
        & \forall i<j \in[\![1,N]\!]^2, \sum_k \alpha_{i}^k\alpha_j^k \equiv 0 \mod 2, \\
        & i=1,j=2,\forall l\in[\![3,N]\!], \sum_k \alpha_1^k\alpha_2^k\alpha_l^k \equiv 1 \mod 2, \\ 
        & \forall i<j<l\in[\![3,N]\!]^3, \sum_k \alpha_i^k\alpha_j^k\alpha_l^k \equiv 0 \mod 2. 
    \end{align}
\end{subequations}
when Clifford corrections are treated as free.

For a fault-tolerant implementation, we need distance $d > 1$ circuits.
The notion of distance is more complex than in the $T$ gate case, as logical error corresponds to any $Z$-error pattern on qubits $1$, $2$, or $3$ that goes undetected by the repetition code. 
Concretely, there are seven such undetectable error patterns,
\begin{equation}
    \mathbf{E}= \left\{ 
    \left(\begin{smallmatrix}0 \\ 1 \\ 0 \\ \vdots \\  0 \end{smallmatrix}\right), 
    \left(\begin{smallmatrix}1 \\ 0 \\ 0 \\ \vdots \\  0 \end{smallmatrix}\right), 
    \left(\begin{smallmatrix}1 \\ 1 \\ 0 \\ \vdots \\  0 \end{smallmatrix}\right), 
    \left(\begin{smallmatrix}0 \\ 0 \\ 1 \\ \vdots \\  1 \end{smallmatrix}\right), 
    \left(\begin{smallmatrix}0 \\ 1 \\ 1 \\ \vdots \\  1 \end{smallmatrix}\right), 
    \left(\begin{smallmatrix}1 \\ 0 \\ 1 \\ \vdots \\  1 \end{smallmatrix}\right), 
    \left(\begin{smallmatrix}1 \\ 1 \\ 1 \\ \vdots \\  1 \end{smallmatrix}\right)
    \right\},
\end{equation}
Adapting \Cref{eq:dparam}, the distance can thus be written,
\begin{equation} \label{eq:dparam_CCZ}
    d = \min_{K}\, |K| \quad \text{s.t.} \quad \sum_{k\in K} \alpha^k \equiv e \pmod{2}, \quad e \in \mathbf{E}.
\end{equation}

Finally, the binary transformation of \Cref{sec:logicalT-to-distill} is applied to the $N-2$ last bits of the columns to build a $\CCZ$-state distillation protocol.
Using the SAT formulation of \Cref{sec:sat}, adapted to \Cref{eq:cons_CCZ} and \Cref{eq:dparam_CCZ}, we find that no $nT \rightarrow 1\CCZ$ protocol with $d > 2$ exists for $N < 8$ qubits. 
Searching over larger qubit numbers, we discover a $47T \rightarrow 1\CCZ$ protocol with error suppression $p_{\mathrm{out}} = 236\, p_{\mathrm{in}}^3$ on $N = 9$ qubits, and a $48T \rightarrow 1\CCZ$ protocol with $p_{\mathrm{out}} = 2816\, p_{\mathrm{in}}^4$ on $N = 10$ qubits.
The corresponding $\mathcal{G}$ matrices are provided in \Cref{App:G_matrices}.
Interestingly, this contrasts with the $T$-state case. 
Indeed, for $T$-state distillation, we empirically observed plateaus at odd distances, whereas for $\CCZ$ distillation the plateaus happen at even distances.

%% file: Recycling.tex
\section{Compressing distillation protocols with qubit recycling}
\label{sec:recycling}


Throughout this work, we explored the landscape of compact distillation protocols through its conventional mapping with weakly triorthogonal matrices with small number of rows~\cite{Haah_2018, Litinski_2019}.
By design, our framework implements distillation factories as a sequence of $Z$-basis Pauli product rotations followed by check-qubit measurements, which conveniently maps the number of rows to the qubit count $N$.
However, this count can be further lowered by recycling qubits which are done participating in the rotations, and that thus can be measured and reinitialized~\cite{ibm_qbit_recycling}.
Indeed, a check qubit needs only to be present from its first Pauli product rotation until its last, whereas the output qubit needs to be present from its first rotation until the end of the protocol.
In terms of the distillation matrix $\mathcal{G}$, the $i$-th qubit needs to be present at every column index $j$ between the first and last $1$ of the $i$-th row, and until the last column for the output row.
At the cost of mid-circuit measurements and initialization, a distillation protocol can thus be compressed down to the maximum number of qubits that need to be simultaneously in use, or \emph{active}, at any point in the circuit.
For a distillation matrix $\mathcal{G}$, this maximum active-qubit count reads
\begin{equation}
  A(\mathcal{G}) = \max_{j} \bigl|\{\, i : f_i \le j \le \ell_i \,\}\bigr|,
\end{equation}
where $f_i$ and $\ell_i$ is the column indices of the first and last $1$ of row $i$, with the convention that $\ell_i$ is the last column for the output row, i.e. $\ell_1 = n$.

The active-qubit count is not fixed by a $nT \to T$ distillation protocol itself, but by the particular matrix $\mathcal{G}$ that implements it.
Indeed, a given protocol admits many equivalent implementations, related to one another by column permutations, which reorder the commuting Pauli product rotations, and by row additions that preserve the triorthogonality conditions, which change the generator basis without altering the distilled state (see \Cref{app: recycling}).
Therefore, to compress a distillation protocol, we search over all matrices $\mathcal{G}$ equivalent to it for the one that minimizes $A(\mathcal{G})$.

To reduce the maximum active-qubit count, we first focus on the output row, whose qubit is the most costly as it stays active until the end of the protocol.
By adding check rows to it, we lower its Hamming weight, so that the output qubit is initialized as late as possible (see \Cref{app: min generator weight}).
In a similar manner, we reduce the weights of the check rows.
We then tackle the column ordering with a branch-and-bound algorithm that builds $\mathcal{G}$ column by column while tracking the set of currently active qubits, discarding any partial ordering whose active count already reaches the best value found so far. We describe this method in detail in \Cref{app: bnb}.
While this search is exhaustive, certifying optimality scales exponentially with the number of columns $n$.
We therefore complete the search by framing the column ordering as an integer linear program (ILP).
Rather than optimizing over permutations directly, we assign each column to a position in the sequence and introduce, for every qubit, indicator variables marking whether it has been initialized and whether it is still to be used at that position, from which its being active follows.
Warm-started with the branch-and-bound ordering, the program minimizes $A(\mathcal{G})$ and returns a certificate that the reported active-qubit count is optimal over column permutations.
\Cref{app: ilp} provides more explanation on the ILP.
We provide a numerical implementation of this compression method in~\cite{compact_distillation}.

We first apply qubit recycling to compress the $15$-to-$1$ protocol, which brings the implementation down from $N=5$ to $4$ active qubits.
Targeting the $49$-to-$1$ protocol, we reduce its footprint from $N=14$ to $5$ active qubits, the column ordering being certified optimal by the ILP.
To our knowledge, this is the first time a distance $d=5$ protocol can be implemented on $5$ qubits, going beyond the $7$ active qubits implementation reported in ~\cite{ibm_qbit_recycling} and the $6$ qubits one of ~\cite{gidney_2024_13777072}.
We further applied our recycling pipeline to $\CCZ$ distillation protocols, recovering this way an implementation of the $64T$-to-$2\CCZ$ protocol on $10$ active qubits as already reported in~\cite{ibm_qbit_recycling}.
For every other distillation and synthillation protocol reported in this paper, however, we did not find any reduction in the active-qubit count.

%% file: Conclusion_v2.tex
\section{Summary of main results}\label{sec:results}

\begin{table}[htbp]
\centering

\renewcommand{\arraystretch}{1.25}
\setlength{\tabcolsep}{7pt}
 
\begin{tabular}{@{}clcccc@{}}
\toprule
\multicolumn{2}{c}{\textbf{Method}} & $N$ & $n$ & $d$ & $C$ \\
\midrule
 
\multicolumn{6}{c}{\textit{$T$-state distillation}} \\[2pt]
 
\multirow{3}{*}{\rotatebox[origin=c]{90}{SAT}}
  & & \textbf{5--7} & \textbf{15}  & \textbf{3} & \textbf{35}        \\
  & & 8--9 & 15  & 3 & 35        \\
  & & 10   & 80  & 4 & 1\,259    \\
  & & 11   & 165 & 5 & 784\,245  \\[4pt]
 
\multirow{19}{*}{\rotatebox[origin=c]{90}{$\mathcal{F}_{S_N}$}}
  & & 5  & 15     & 3 & 35                \\
  & & 6  & 15     & 2 & 35                \\
  & & 7  & 35     & 3 & 105               \\
  & & 8  & 92     & 3 & 280               \\
  & & 9  & 45     & 3 & 120               \\
  & & 10 & 165    & 4 & 18\,900           \\
  & & 11 & 165    & 5 & 784\,245          \\
  & & 12 & 298    & 4 & 15\,400           \\
  & & 13 & 299    & 5 & 2\,002\,000       \\
  & & 14 & 455    & 5 & 1\,401\,400       \\
  & & 15 & 455    & 5 & 1\,401\,400       \\
  & & 16 & 696    & 6 & 442\,842\,400     \\
  & & 17 & 697    & 7 & 78\,189\,711\,600 \\
  & & 18 & 969    & 6 & ---               \\
  & & 19 & 969    & 7 & ---               \\
  & & 20 & 1\,350 & 7 & ---               \\
  & & 21 & 1\,351 & 7 & ---               \\
  & & 22 & 1\,771 & 8 & ---               \\
  & & 23 & 1\,771 & 9 & ---               \\[4pt]
 
\multirow{3}{*}{\rotatebox[origin=c]{90}{\parbox{1.8cm}{\centering $\mathcal{F}_\mathrm{Y}$}}}
  & & ${\leq}\,9$ & 15  & 3 & 35  \\
  & & 10          & 164 & 4 & --- \\
  & & 11          & 165 & 5 & 784\,245 \\[4pt]
 
\multirow{3}{*}{\rotatebox[origin=c]{90}{$\mathcal{F}_\mathrm{C}$}}
  & & ${\leq}\,9$ & 15 & 3 & 35     \\
  & & \textbf{10}          & \textbf{64} & \textbf{4} & \textbf{495} \\
  & & \textbf{11}          & \textbf{65} & \textbf{5} & \textbf{7\,947} \\
 
\midrule\midrule
 
\multicolumn{6}{c}{\textit{$\CCZ$-state distillation}} \\[2pt]
 
\multirow{3}{*}{\rotatebox[origin=c]{90}{SAT}}
  & & \textbf{4--8}             & \textbf{8}          & \textbf{2}          & \textbf{28}             \\
  & & \textbf{9}              & \textbf{47} & \textbf{3} & \textbf{236}   \\
  & & \textbf{10}             & \textbf{48} & \textbf{4} & \textbf{2\,816}\\
 
\bottomrule
\end{tabular}
\caption{
  T and $\CCZ$ distillation protocols found by each numerical method.
  The methods are the unrestricted SAT for $T$-distillation of \cref{sec:sat}, the three subfamilies presented in \cref{sec:relax}, and the extension of SAT for $\CCZ$ distillation detailed in \cref{sec:ccz}.
  Distillation protocols are characterized by $N$, the number of qubits, $n$ the $T$-count, and have output error rate $p_\mathrm{out}=C p_\mathrm{in}^d$.
}\label{tab:all_results}
\end{table}

We establish a direct link between the number of qubits $N$, the $T$-count $n$, and the distance $d$ for distillation protocols made of conventional $Z$-diagonal Pauli product rotation circuits
--the standard compressed implementation of triorthogonal codes, without mid-circuit measurement--, by framing the construction of fault-tolerant logical $T$ and $\CCZ$ gates within a repetition code framework.
We illustrate our pipeline in ~\Cref{fig:pipeline}.
\begin{figure*}
    \centering
    \includegraphics{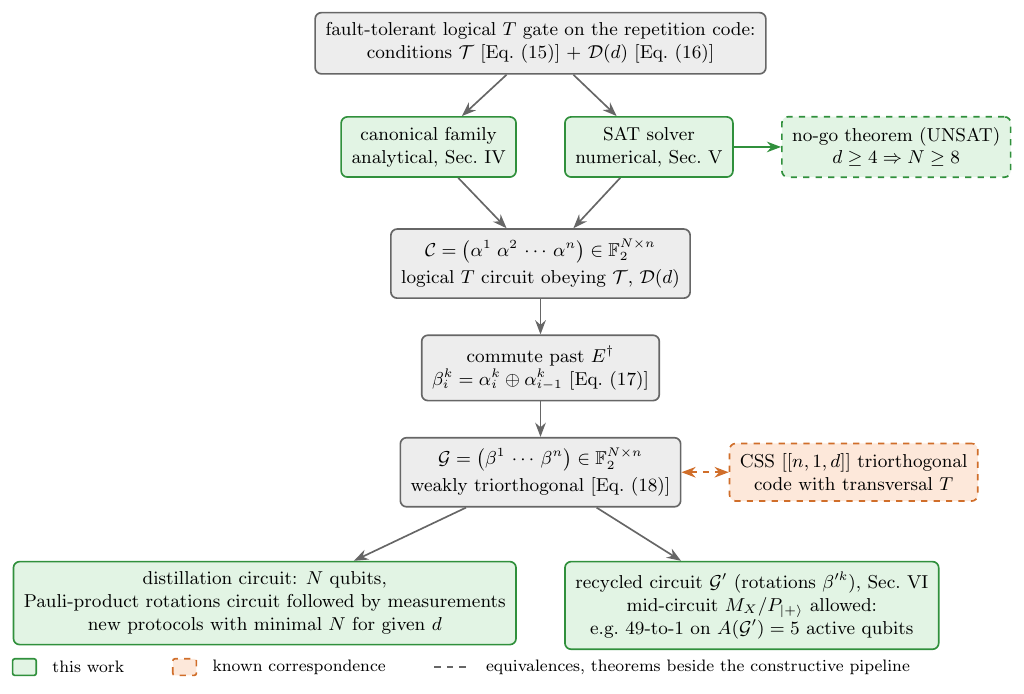}
    \caption{\textbf{Logical pipeline used to explore distillation protocols.} The conditions defining a
fault-tolerant logical $T$ gate on the repetition code from~\cref{sec:LogicalT_repCode} can either be solved analytically as in~\cref{sec:canonical} or via SAT as in~\cref{sec:sat}, whose UNSAT instances yield no-go theorems.
As explained in ~\cref{sec:logicalT-to-distill} commuting a solution $\mathcal{C}$ through the decoding circuit $E^{\dagger}$ gives weakly triorthogonal matrix $\mathcal{G}$, the type of matrix used to define CSS $[[n,1,d]]$ triorthogonal code in \cite{Litinski_2019_GoS, Haah_2018}.
Distillation protocols can then be implemented either directly as a distillation circuit on $N$ qubits or, with qubit recycling explained in~\cref{sec:recycling}, on $A(\mathcal{G}')\le N$ active qubits.
Here the pipeline is illustrated for $T$ state distillation protocols, the same applies to $\CCZ$ or $\sqrt{T}$ distillation protocols.}
    \label{fig:pipeline}
\end{figure*}
In \cref{tab:all_results}, we report all the results found using this framework, with some highlighted in \cref{fig:Main_results}.
To express our findings in a natural manner, we define the following quantity:
\[N_{T}(d)= \min_{\substack{\text{$nT$-to-$1T$ protocols}\\
\text{ of distance } \geq~d \\
\text{ on }N \text{ qubits}}} N\]
We limited our exploration to protocols producing a single magic state.
If we denote $N_{T}^{(k)}(d)$ the same quantity for protocols producing $k$ magic states, we immediately have that for all $k\geq 2$,  $N_{T}(d)\leq N_{T}^{(k)}(d)-(k-1)$ because one can always simply remove the qubits of the $k-1$ other magic states from the protocol.
Analogously, we will use $N_{\CCZ}(d)$ for $nT$-to-$1 \CCZ$ protocols.

For $\ket{T}$-state distillation, an exhaustive search based on SAT instances establishes that no distillation protocol with distance $d>3$ can be implemented on $N<8$ qubits, while the first instance of $d\geq4$ has been found for $N=10$ qubits, namely $8 \leq N_T(d=4) \leq 10$.
Restricting to the small subfamily $\mathcal{F}_{S_N}$ of distillation protocols, we uncover protocols with distances up to 9 on $N=23$ qubits.
In \Cref{tab:all_results} we provide the prefactor for these protocols up to $N=17$, after which computing $C$ becomes too resource intensive.
Further exploring the $\mathcal{F}_\mathrm{Y}$ family leads to an improvement of protocols, notably finding a $N=10,d=4$ protocol with a $T$-count of $164$.
Finally, the $\mathcal{F}_\mathrm{C}$ family helped optimizing $T$-count with two new distillation protocols.
This way, at $N=10$, we obtain a $64$-to-$1$ protocol with output error $495p_\mathrm{in}^4$.
For one more qubit in the repetition code, $N=11$, we find a $65$-to-1 protocol with an output error rate of $7947p_\mathrm{in}^5$.
To our knowledge, these are the smallest qubit implementations with unitary circuits of distance-4 and distance-5 $T$-state distillation protocols.
For comparison, the 49-to-1 protocol, with a lower $T$-count, comes at the price of requiring an implementation over $N=14$ qubits.

Beyond these unitary-circuit implementations, we further reduce the qubit footprint through qubit recycling, allowing check qubits to be measured out and their qubits reused once they are done participating in the rotations.
For the $15$-to-$1$ protocol this lowers the implementation from $N=5$ to $4$ active qubits, and for the $49$-to-$1$ protocol of~\cite{Bravyi_2012} from $N=14$ to $5$ active qubits.
This is, to our knowledge, the first implementation of a distance-$5$ protocol on $5$ qubits.

\medbreak

For $\CCZ$-distillation, an exhaustive exploration based on SAT instances show that no $d\geq 3 $ protocols exist for $N<8$ qubits, i.e.~$N_{\CCZ}(d=3)\geq 8$.
To our knowledge, we report the first instance of $nT\rightarrow 1\CCZ$ protocols for distance 3. 
In particular, using $N=9$ qubits, we provide a $47T \rightarrow 1\CCZ$ protocol with error suppression $p_{\mathrm{out}} = 236\, p_{\mathrm{in}}^3$, stating that $N_{\CCZ}(d=3)\leq 9$.
For $N=10$, we obtain a $48T \rightarrow 1\CCZ$ scheme with $p_{\mathrm{out}} = 2816\, p_{\mathrm{in}}^4$, requiring $7$ fewer qubits than previously known $d=4$ protocols, demonstrating that $N_{\CCZ}(d=4)\leq 10$.
Indeed, it was previously identified that a $64T$-to-$2\CCZ$ protocol distilling in $2944p^4$ supported on $N=17$ existed~\cite{Haah_2018,Jones_2013_64_to_2} which provides a better ratio of check qubits per output state, at the cost of more qubits.
As $N_{\CCZ}(d=3)\geq 8$ we have that $N_{\CCZ}(d=4)\geq 8$, but it was known that $N_{\CCZ}^{(2)}(d=4)\leq 17$, so it is very likely that $N_{\CCZ}(d=4)\geq\frac{N_{\CCZ}^{(2)}(d=4)}{2}$ which would prove that it is structurally worthwhile to look for distillation factories producing a higher number of magic states.

This exploration of $\ket{\CCZ}$ and $\ket{T}$ distillation protocols of the form $nT$-to-$1 T$ or $nT$-to-$1\CCZ$ both exhibit a pattern in terms of $N_{\CCZ}(d)$ and $N_{T}(d)$.
The protocols gathered in \Cref{tab:all_results} suggest that for $m\geq 1$, 
\[N_{T}(d=2m+1)=N_{T}(d=2m)+1\]
while the gap between $N_{T}(d=2m)$ and $N_{T}(d=2m-1)$ is larger.
Indeed, the first protocol of distance $4$ (resp. $6$ and $8$) we have found comes with $5$ additional qubits compared to the smallest protocol of distance $3$ (resp. $5$ and $7$).
Interestingly, for $\CCZ$ protocols, the pattern looks inverted, the few protocols discovered with the SAT solver suggest that for $m\geq 2$, 
\[N_{\CCZ}(d=2m)=N_{\CCZ}(d=2m-1)+1\]
while reaching distance $3$ required $5$ additional qubits compared to distance $2$.
Qualitatively, this pattern suggests that it is better to distill $T$ states with odd distances and $\CCZ$ with even distances. 

Finally, we discuss how these findings compare to state-of-the-art logical $\ket{T}$-state preparation protocols and how that information complements our understanding of efficient magic state preparation protocols.
This systematic exploration provides useful information regarding the gap to overcome in order to distill in $p^5$ instead of $p^3$.
Distilling in $p^5$ would yield a state with an error rate below $10^{-11}$ for $p=10^{-3}$ and $C\approx10^4$ which is precisely in the target of chemistry use cases such as FeMoco~\cite{beverland_resource_estimation} or Shor's algorithm~\cite{gidney2025factor2048bitrsa}.
Reaching this $p^5$ suppression in a single distillation stage has, until now, come with a prohibitive space overhead.
In the unitary-circuit setting the $49$-to-$1$ protocol requires $N=14$ qubits, or $N=11$ for the most compact distance-$5$ construction of \Cref{tab:all_results}, and each of these qubits must be encoded at a high code distance to preserve the quality of the distilled state.
This is what has encouraged two-level distillation factories, either by the concatenation of two $15T$-to-$1T$ protocols or the cultivation~+~$15T$-to-$1T$~\cite{gidney2024magicstatecultivation, google_cultivation_experiment}.
In a two-level factory, the first level can run at a small inner code distance, so that only the few qubits of the second level need to be encoded at a high distance~\cite{Litinski_2019}, keeping the overall footprint small.
Allowing mid-circuit measurements, however, the $49T$-to-$1T$ protocol can be implemented on only $5$ active qubits, the same number as the usual single $15T$-to-$1T$ factory.
A single-stage $p^5$ factory now fits within the same footprint as the high-distance second level of a two-level factory.
A full spacetime-volume comparison across physical error rates, folding in circuit depth and the post-selection overhead, is left to future work, but our results indicate that single-stage $p^5$ distillation is no longer disqualified on spatial grounds, at least for surface-code-based quantum computation.
Beyond the surface code, for qLDPC codes of very high encoding rates~\cite{kasai2026breakingorthogonalitybarrierquantum,quera_qldpc_VHER}, single-stage distillation factories may become even more relevant.


\section{Conclusion}
\label{sec:conclusion}

In this work, we emphasize the link between the qubit count $N$ and the distance $d$ of distillation protocols.
To do so, we employ a mapping of the distillation problem onto the design of a circuit of Pauli product rotations that acts as a logical $T$ gate on a classical error correcting code.
Using this construction, we have been able to build numerical and analytical tools to generate distillation protocols with constraints on the number of qubits $N$ employed in the circuits, as well as on the distance $d$ and the $T$-count $n$.
Using a \textit{SAT} formulation of the problem, we have been able to certify that no $nT$-to-$1T$ distillation protocol with distance $\geq 4$ can be implemented on $N<8$ qubits using standard $Z$-pauli product rotation circuits.
Besides numerical exploration, we built an analytical approach to design a canonical family of distillation protocols.
This way, we figured out that distance $d=4$ could be achieved with as few as $N=10$ qubits, and $d=5$ can be achieved with $N=11$ qubits, while the previous smallest unitary circuit distillation protocol achieving distance $d=5$ was the $49T$-to-$1T$ of~\cite{Bravyi_2012} on $N=14$ qubits.
We further improved our numerical exploration by considering subfamilies constructed to be invariant under various symmetries, namely from the strongest to the weakest: a full permutation symmetry, a Young subgroup symmetry, and a cyclic symmetry over a partition.
This targeted numerical exploration scheme improved on the parameters $n$ and $C$ of our previous most compact protocols.
For $d=4$, we have found a $64T$-to-$1T$ with a reduction of $p \to 495p^4$ on $N=10$ qubits, while for $d=5$, we have found a $65T$-to-$1T$ with a reduction of $p \to 7947p^5$ on $N=11$ qubits.
Allowing for mid-circuit measurements and qubit re-initializations, we have been able to compress some distillation factories. Most notably , we found an implementation of $49T$-to-$1T$ protocol, with output error $p \to 1411p^5$, on $5$ active qubits instead of $N=14$ qubits.

Extending our approach to $\CCZ$-state distillation, we have been able to show that any distance-$3$ protocol requires at least $8$ qubits.
We reported one of the most compact such protocols we have found, a $47T$-to-$1\CCZ$ scheme on $N=9$ qubits with output error rate $236\,p_\mathrm{in}^3$.
We further propose a $48T$-to-$1\CCZ$ distance-$4$ protocol on $N=10$ qubits, reducing the qubit footprint by $7$ compared to the previously most compact known construction~\cite{Jones_2013_64_to_2,Haah_2018} that was however producing $2$ $\CCZ$-states at a time.

Even though the qubit-cycle volume of a magic state factory remains the gold standard figure of merit for large-scale quantum computation resource estimation, near to medium term hardware capabilities will likely impose strong constraints on the qubit count.
In this regime, our exploration provides concrete insight into what can be achieved with few logical, and hence physical, qubits.
A natural next step toward lowering the overall spacetime volume is the regime of factories that distill many magic states at once.
This dense regime has so far been explored mostly through quantum triorthogonal codes built from punctured Reed--Muller codes~\cite{Haah_2018}, and the framework introduced here offers a route to investigate it differently.
In particular, a canonical family of \Cref{sec:canonical} can be built from any classical code $[n_c, k_c, d_c]$: a high encoding rate $k_c/n_c$ then yields distillation protocols of distance $\geq \lceil d_c/3 \rceil$ producing $k = k_c$ output states on $N = n_c$ qubits.
Finally, we show that qubit recycling is central to compressing the spatial footprint of these protocols.
Future work could bias the search toward protocols that recycle well, e.g. by incorporating into the SAT instances objectives such as a low output-row weight, a low overall weight, or few $Z$ operators per column, and to assess which of these best predict a low active qubit count.

%% file: Commutations_relations.tex
\section{Commutation relations}\label{App:Commutations_relations}
In this Appendix, we detail and prove the commutation relation used to commute the T gate through the encoding circuit of the repetition code. 

\subsection{T gate commutation}
More precisely, we prove for a given $N\geq 3$:
\begin{widetext}
\begin{equation}
    \begin{quantikz}
&  &\ \ldots\ &&\targ{}& \gate{T} &\targ{}  && \ \ldots \ && \\
&&\ \ldots\ &\targ{} & \ctrl{-1}&& \ctrl{-1} & \targ{} & \ \ldots \ &&\\
&&\ \ldots\ & \ctrl{-1}&&&& \ctrl{-1} & \ \ldots \ &&\\
\setwiretype{n}  \vdots && \ \ldots \ &&&&&&\ \ldots \ && \vdots \\
& \targ{} & \ \ldots \ &&&&&& \ \ldots \ & \targ{}&\\
& \ctrl{-1} & \ \ldots \ &&&&&& \ \ldots \ & \ctrl{-1} & 
\end{quantikz} =
\begin{quantikz}
    & \gate{T}& \gate[3]{\prod_{i,j} (\CS^{\dagger})_{ij}} & \gate[3]{\prod_{i,j,k} (\CCZ)_{ijk}} &&\\
 \setwiretype{n} & \vdots &&& \vdots &\\
 & \gate{T} &&& &\\ 
\end{quantikz}
\end{equation}\label{circ:commutations_appendix}
\end{widetext}
\begin{proof}
To do things carefully, we work in the computational basis $\ket{\epsilon_1,\ldots,\epsilon_N}$ and show that both circuits have the same effect on this state, no matter the value of $\epsilon_1, \ldots,\epsilon_N\in\{0,1\}^N$.

We start by computing the effect of the left hand-side circuit.
We fix $N\geq3$ and $\epsilon_1, \ldots,\epsilon_N\in\{0,1\}^N$.
In what follows, we write $\oplus$ for modulo 2 sums.
At the beginning, the state is $\ket{\psi_{i}}=\ket{\epsilon_1,\ldots,\epsilon_N}$
After the first decoding circuit, the state is transformed to $\ket{\epsilon_1',\ldots,\epsilon_N'}$ with the $\epsilon'$ defined as:
\begin{align*}
    &\epsilon_N' = \epsilon_N\\
    &\forall i\in [\![1,N-1]\!], \epsilon_i' = \epsilon_i\oplus\epsilon_{i+1}\oplus \ldots \oplus\epsilon_N  \
\end{align*}

Then the T gate applies a phase $e^{i\frac{\pi}{4}}$ if and only if $\epsilon_1'=1$,
so the state becomes $e^{i\frac{\pi}{4}\epsilon_1'}\ket{\epsilon_1',\ldots,\epsilon_N'}$.
Then the last CNOT sequence brings it back to 
\begin{align*}
    &\forall i\in [\![1,N-1]\!], \epsilon_i'' = 
    \epsilon_i'\oplus\epsilon_{i+1}' =  \epsilon_i \\
    &\epsilon_N'' = \epsilon_N =\epsilon_N\
\end{align*}

The final state is
$$\ket{\psi_f}_l=e^{i\frac{\pi}{4}\bigoplus_{i}\epsilon_i}\ket{\epsilon_1,\ldots,\epsilon_N}$$

Now, for the right hand side circuit, the first layer of $T$ gates adds a phase that is $e^{i\frac{\pi}{4}\sum_i\epsilon_i}$.
Then each $(\CS^\dagger)_{ij}$ gate adds a $e^{-i\frac{\pi}{2}}$ phases if and only if $\epsilon_i=\epsilon_j=1$ so the overall phase added by the $\CS^\dagger$ layer is $e^{-i\frac{\pi}{2}\sum_{i,j, i<j}\epsilon_i\epsilon_j}$.
Finally, the last $\CCZ$ layer adds a $e^{i\pi}=-1$ phase for each triplet $i,j,k$ such that $\epsilon_i\epsilon_j\epsilon_k=1$.
The final state is 
$$\ket{\psi_f}_r=e^{i\frac{\pi}{4}\left(\sum_i\epsilon_i-2\sum_{i,j, i<j}\epsilon_i\epsilon_j+4\sum_{i,j,k, i<j<k}\epsilon_i\epsilon_j\epsilon_k\right)}\ket{\epsilon_1,\ldots,\epsilon_N}$$.
Both states are the same if and only if 
\[
\bigoplus_{i}\epsilon_i
\equiv
\sum_i\epsilon_i
-2\sum_{i<j}\epsilon_i\epsilon_j
+4\sum_{i<j<k}\epsilon_i\epsilon_j\epsilon_k
\pmod{8}.
\]

Let 
\[
S_r=\sum_{1\le i_1<\cdots<i_r\le N}\epsilon_{i_1}\cdots\epsilon_{i_r}
\qquad (r=1,\dots,N).
\]

Consider the product
\[
P=\prod_{i=1}^N(1-2\epsilon_i).
\]
Since each $\epsilon_i$ is either $0$ or $1$, each factor is equal to $1$ or $-1$.
If $m=\sum_i\epsilon_i$, then
\[
P=(-1)^m.
\]
Hence
\[
\bigoplus_i \epsilon_i =
\begin{cases}
0, & m \text{ even},\\
1, & m \text{ odd},
\end{cases}
= \dfrac{1-(-1)^m}{2}
= \dfrac{1-P}{2}.
\]

Expanding the product $P$, we obtain
\[
\prod_{i=1}^N(1-2\epsilon_i)
= 1-2S_1+4S_2-8S_3+16S_4-\cdots+(-2)^N S_N.
\]
Therefore,
\[
\frac{1-P}{2}
= S_1-2S_2+4S_3-8S_4+\cdots+(-2)^{N-1}S_N.
\]
This is an exact identity of integers:
\[
\bigoplus_i\epsilon_i
= S_1-2S_2+4S_3-8S_4+\cdots.
\]

Reducing modulo $8$, all terms from $-8S_4$ onward vanish. Hence
\[
\bigoplus_i\epsilon_i
\equiv S_1-2S_2+4S_3 \pmod{8}.
\]
In expanded form,
\[
\bigoplus_{i}\epsilon_i
\equiv
\sum_i\epsilon_i
-2\sum_{i<j}\epsilon_i\epsilon_j
+4\sum_{i<j<k}\epsilon_i\epsilon_j\epsilon_k
\pmod{8}.
\]
\end{proof}
\subsection{General case of \texorpdfstring{$\sqrt[L]{T}$}{L-th square root of T}}
When commuting a $\sqrt{T}$ gate through the CNOT ladder decoding circuit of a repetition code, one obtains \Cref{circ:commutations_sqrt_T}.
More generally, for any $L\geq1$ and $N\geq L+3$ integers, commuting $\sqrt[L]{T}$ through the CNOT ladder yields the circuit of \Cref{circ:commutations_sqrtK_T}.
\begin{widetext}
    \begin{equation}
    \begin{quantikz}
        & \gate[4]{\mathcal{E}^\dagger} & \gate{\sqrt{T}} & \gate[4]{\mathcal{E}} & \\
        &&&&\\
        \setwiretype{n} & & \vdots & & \\
        & & & & \\
    \end{quantikz}=
     \begin{quantikz}
        & \gate{\sqrt{T}}& \gate[3,disable auto
    height]{\prod_{i,j} (\CT^{\dagger})_{ij}} & \gate[3]{\prod_{i,j,k} (\CCS)_{ijk}} & \gate[3]{\prod_{i,j,k,l} (\CCCZ)_{ijkl}}&&\\
     \setwiretype{n} & \vdots &&&& \vdots &\\
        & \gate{\sqrt{T}} &&&& &\\ 
    \end{quantikz}
    \label{circ:commutations_sqrt_T}
    \end{equation}
    \begin{equation}
    \begin{quantikz}
        & \gate[4]{\mathcal{E}^\dagger} & \gate{\sqrt[L]{T}} & \gate[4]{\mathcal{E}} & \\
        &&&&\\
        \setwiretype{n} & & \vdots & & \\
        & & & & \\
    \end{quantikz}=
     \begin{quantikz}
        & \gate{\sqrt[L]{T}}& \gate[3,disable auto
    height]{ \prod_{\text{all}} \C{\sqrt[L-1]{T}}^{\dagger}} & \gate[3]{ \prod_{\text{all}} \C\C{\sqrt[L-2]{T}}} && \gate[3]{\prod_{\text{all}} (\underbrace{\C\ldots \C}_{L+2}Z)}&&\\
     \setwiretype{n} & \vdots &&&\ldots&& \vdots &\\
        & \gate{\sqrt[L]{T}} &&&&& &\\ 
    \end{quantikz}
    \label{circ:commutations_sqrtK_T}
    \end{equation}
\end{widetext}
In \Cref{circ:commutations_sqrtK_T}, layers of gates consist of all possible $\underbrace{\C\ldots \C}_{r}{\sqrt[L-r]{T}}^{\dagger}$ gates for odd values of $r\in [L+2]$ and $\underbrace{\C\ldots \C}_{r}\sqrt[L-r]{T}$ for even values of $r\in [L+2]$.

\begin{proof}
The proof is the same as for the $T$ gate case above, and comes from the fact that for any $\epsilon_1, \ldots \epsilon_N \in \mathbb{F}_2^N$, the left hand side circuits acts as \[\ket{\epsilon_1, \ldots \epsilon_N}\rightarrow e^{i \frac{\pi}{2^{L+2}}\bigoplus_i\epsilon_i}\ket{\epsilon_1, \ldots \epsilon_N}\]
whereas the right hand side circuits does
\begin{equation}
\begin{aligned}
\ket{\epsilon_1, \ldots \epsilon_N} & \rightarrow \\ 
& e^{i \frac{\pi}{2^{L+2}} \left(\sum_{r=1}^{L+3}\sum_{1\le i_1<\cdots<i_r\le N}(-2)^{r-1}\epsilon_{i_1}\cdots\epsilon_{i_r}\right)}\ket{\epsilon_1, \ldots \epsilon_N} 
\end{aligned}
\end{equation}
as each layer of the circuit adds a phase $\sum_{1\le i_1<\cdots<i_r\le N}(-2)^{r-1}\epsilon_{i_1}\cdots\epsilon_{i_r}$.
Therefore, the equality of the two circuits relies on the integer equality proven above
\[
\bigoplus_i\epsilon_i
= \sum_{r=1}^{N}(-2)^r\sum_{1\le i_1<\cdots<i_r\le N}\epsilon_{i_1}\cdots\epsilon_{i_r} \mod{2 ^{L+3}} 
\]
because phases proportional to $2\pi$ can be neglected.
\end{proof}
\section{From logical gate to distillation} 
\subsection{Gate sequence transformation}\label{App:V_to_L}

Here, we derive the commutation relation of a $P_{\pi/8}$ gate through a repetition code decoding circuit.
As in the main text, we denote by $\mathcal{E}^\dagger$ the sequence of CNOT gates implementing the decoding circuit. 
We recall that 
$P_{\pi/8} = \cos(\frac{\pi}{8}) \mathbb{I}-i \sin(\frac{\pi}{8})P$, and we can associate a vector $\alpha\in\{0,1\}^N$ such that $P = \bigotimes_i Z_i^{\alpha_i}$.
We prove that there exists another $\pi/8$ rotation gate $\tilde P_{\pi/8}$ such that \[\tilde P_{\pi/8}=\mathcal{E}P_{\pi/8}\mathcal{E}^{\dagger}.\]
To do so we first use the fact that $\mathcal{E}\mathbb{I}\mathcal{E}^\dagger = \mathbb{I}$, thus preserving the $\cos(\pi/8)$ part of $P_{\pi/8}$. For the $\sin(\pi/8)$ part, we use the following properties of the commutation between $Z$ and CNOT:
\begin{equation*}
     \begin{quantikz}
&\targ{}  &  & \gate{Z^{\alpha_i}} &       &\targ{}&\\
&\ctrl{-1}&  & \gate{Z^{\alpha_j}} &       & \ctrl{-1}&
\end{quantikz} = \begin{quantikz}
  & \gate{Z^{\alpha_i}} &      \\
 & \gate{Z^{\alpha_j\oplus \alpha_i}} &      
\end{quantikz}
\end{equation*}
It becomes clear how to commute $\mathcal{E}$ through $P$ as this generalizes to
\begin{widetext}
\begin{equation}
\begin{quantikz}
&\targ{}  & \ldots  &        &   \gate{Z^{\alpha_1}} &       &\ldots& \targ{}&\\
&\ctrl{-1}& \ldots  &        &   \gate{Z^{\alpha_2}} &       &\ldots& \ctrl{-1}&\\
   &  \vdots \setwiretype{n}&  && \vdots                &       &      &  \vdots &\\
&         & \ldots  &\targ{} &\gate{Z^{\alpha_{N-1}}}&\targ{}&\ldots&&\\
&         & \ldots  &\ctrl{-1}&\gate{Z^{\alpha_{N}}} &\ctrl{-1}&\ldots&&\\
\end{quantikz} \quad  = \quad \begin{quantikz}
 && \gate{Z^{\alpha_1}} &&\\
 \\
 &&\gate{Z^{\alpha_{2}\oplus\alpha_{1}}}&&\\
 \setwiretype{n} & \vdots && \vdots &\\
 \\
&&\gate{Z^{\alpha_{N-2}\oplus\alpha_{N-1}}}&&\\
 &&\gate{Z^{\alpha_{N-1}\oplus\alpha_N}}&&\\
\end{quantikz}
\end{equation}
\end{widetext}
If we associate a bit-sequence $\beta$ to $\tilde P$, i.e.~$\tilde P = \bigotimes_i Z_i^{\beta_i}$, it is now explicit that
\begin{align}\label{eq:from_alpha_to_beta}
\beta_1 &= \alpha_1\\
\forall 2\leq i\leq N, \beta_i &= \alpha_i \oplus\alpha_{i-1}  
\end{align}
This relation can be inverted as
\begin{align}
\alpha_1 &= \beta_1 \label{eq:from_beta_to_alpha1}\\
\label{eq:from_beta_to_alpha2} \forall 2\leq i\leq N, \alpha_i &= \beta_1 \oplus \ldots \oplus \beta_i  
\end{align}

\subsection{SAT formulation equivalence}\label{App:SAT_L_problem}

A matrix $V=\begin{pNiceMatrix}
\alpha^0 & \alpha^1 & \cdots & \alpha^n
\end{pNiceMatrix}$ that obeys 
\begin{align*}
        & \forall i\in[\![1,N]\!], \sum_k \alpha_i^k \equiv 1 \mod 2, \\
        & \forall i<j \in[\![1,N]\!]^2, \sum_k \alpha_i^k\alpha_j^k \equiv 1 \mod 2, \\
        & \forall i<j<l\in[\![1,N]\!]^3, \sum_k \alpha_i^k\alpha_j^k\alpha_l^k \equiv 1 \mod 2. 
\end{align*} 

provides a gate sequence $P^1,\ldots, P^n $ (defined from the $\alpha$'s in \Cref{eq:gate_sequence_P_from_alpha}) that implements a logical $T$ gate (up to Clifford corrections) on the repetition code.
To use this logical $T$ gate to perform a distillation protocol, one can choose to apply it to the $\ket{+}_L$ state and further perform the decoding circuit, which is equivalent to mapping the gate sequence $ \alpha^1, \ldots, \alpha^n$ over the gate sequence $ \beta^1, \ldots, \beta^n$ as explained in \Cref{App:V_to_L}.

Injecting the relations \Cref{eq:from_beta_to_alpha1} and (\ref{eq:from_beta_to_alpha2}), the first condition becomes
\[ \forall i\in[\![1,N]\!], \sum_k \beta^k_1 \oplus \ldots \oplus \beta^k_i  \equiv 1 \mod 2.\]
For $i=1$, this directly gives 
\[\sum_k \beta^k_1  \equiv 1 \mod 2.\]
For $i=2$, it gives \[\sum_k \beta^k_1\oplus \beta^k_2 =  \left(\sum_k \beta^k_1\right)\bigoplus \left(\sum_k\beta^k_2\right)\equiv 1 \mod 2,\]
and by summing both, we obtain that 
\[\sum_k \beta^k_2  \equiv 0 \mod 2.\]
By induction on $i$, it is easy to show that 
\[\forall i\in[\![2,N]\!],\sum_k \beta^k_i  \equiv 0 \mod 2.\]

Similarly, the second condition (the one on pairs of lines) becomes 
\begin{multline*}
    \forall i<j\in[\![1,N]\!], \\\sum_k (\beta^k_1 \oplus \ldots \oplus \beta^k_i)(\beta^k_1 \oplus \ldots \oplus \beta^k_j)  \equiv 1 \mod 2\\
    \Leftrightarrow \sum_{\substack{1\leq l \leq i\\ 1\leq m \leq j}}\sum_k \beta_l^k\beta_m^k \equiv 1 \mod 2,
\end{multline*} 
which for $i=1, j=2$ gives 
\[\sum_k\beta_1^k\oplus \beta_1^k\beta_2^k= 1 \mod 2.\]
Summing with the condition on line 1, we obtain 
\[\sum_k \beta_1^k\beta_2^k= 0 \mod 2.\]
Fixing $i=1$ and doing an induction on $j$, it is easy to show that 
\[\forall j \in [\![2,N]\!], \sum_k \beta_1^k\beta_j^k= 0 \mod 2.\]
For $i=2,j=3$, the condition reads 
\begin{multline*}
    \sum_k \beta_1^k \oplus \beta_1^k \beta_2^k\oplus \beta_1^k \beta_3^k\oplus \beta_2^k \beta_1^k\oplus \beta_2^k \oplus \beta_2^k \beta_3^k\equiv 1 \mod 2
\end{multline*}
Using the relation above, we can directly deduce that 
\[\sum_k \beta_2^k\beta_3^k\equiv 0 \mod 2.\]
Similarly, we can proceed by induction on $j$ to cover all terms of the form $\beta_2\beta_j$.

By doing so on all $i$ in increasing order, it is straightforward to prove by induction on $i$ that 
\[\forall i<j,\sum_k \beta_i^k\beta_j^k\equiv 0 \mod 2. \]

For the triplet of lines condition, the proof goes the same way as for the pairs, but instead of interlocking two inductions, one has to interlock 3 inductions (respectively on $i,j,l$) using 
\[\forall i<j<l, \sum_{\substack{1\leq p \leq i\\1\leq q \leq j\\1\leq r \leq l}}\beta_p^k\beta_q^k\beta_r^k\equiv 1 \mod 2.\]
One can prove this way that 
\[\forall i<j<l,\sum_k \beta_i^k\beta_j^k\beta_l^k\equiv 0 \mod 2.\]

For the distance condition, by applying the transformation \Cref{eq:from_alpha_to_beta} on both sides of  
\begin{equation}
    d = \min |K| \text{ s.t. } \sum_{k\in K} \alpha^k \equiv \left(\begin{array}{c}
         1 \ \mod 2  \\
         1 \ \mod 2 \\
         \vdots\\
         1 \ \mod 2
    \end{array}\right) 
\end{equation}
we get that 
\begin{equation}
    d = \min |K| \text{ s.t. } \sum_{k\in K} \beta^k \equiv \left(\begin{array}{c}
         1 \ \mod 2  \\
         0 \ \mod 2 \\
         \vdots\\
         0 \ \mod 2
    \end{array}\right).
\end{equation}
Said in another way, a logical error in the repetition code is a $Z$ on every line, when commuting this through the decoding CNOT ladder operator, it becomes a single $Z$ on the first qubit.

%% file: Algebras_v2.tex
\section{Canonical families of protocols : General framework}\label{App:General_Algebra}

\subsection{General preliminaries for the proofs}\label{App:analytical_framework}

In this section, we work with the partially ordered set $(\mathcal{P}([N]), \subseteq)$ and leverage the fact that a multi-qubit $Z$ rotation can be seen as an element $A \in \mathcal{P}([N])$ where the indices in $A$ are the qubits in the support of the gate.
We gather a series of definitions and reminders from the algebra of order theory, which will be used in the next section \Cref{App:Proofs_of_lemmas_and_ths}.

\begin{definition}[Convolution product]
    Given the partially ordered set $(\mathcal{P}([N]), \subseteq)$ and a ring $\mathcal{R}$, one can define the incidence algebra $I$, which is the set of functions from $\mathcal{P}([N]) \times \mathcal{P}([N])$ to $\mathcal{R}$.
    The addition on the incidence algebra is pointwise, and the product is the convolution product: 
    \[\forall \nu, \eta \in I, \ (\nu\ast \eta)(x,z) = \sum_{x\leq y \leq z} \nu(x,y)\eta(y,z).\]
\end{definition}

\begin{definition}[zeta function]
    We define the zeta function on $\mathcal{P}([N])\times \mathcal{P}([N])$ as 
    \[\zeta(A,B) = \begin{cases}
        1 \text{ if }A \subseteq B,\\
        0 \text{ otherwise.}
    \end{cases}\]
\end{definition}
This function will help us translate the conditions \Cref{eq:cons_T} in a natural manner.
Indeed, the conditions read as ``every singlet, pair, triplet of lines should appear an odd number of times in the support of the gates in the circuit''.
The zeta function can be seen as a filter returning 1 when the gate of support $B$ includes the set of lines $A$ in its support and 0 otherwise.
To express the conditions, it is natural to apply this filter over a boolean function $f$ characterizing the selected gates (hereby represented by the selected elements of $\mathcal{P}([N])$).
To do so, we define the Star product.
\begin{definition}[Star product]
    We define the $\star$ product as the natural action of the incidence algebra $I$ over the set of functions $F$ from $\mathcal{P}([N])$ to $\mathcal{R}$: 
    For $f\in F$, $\nu\in I$ and $x \in\mathcal{P}([N])$:
    \[(\nu\star f)(x)=\sum_{y \geq x}\nu(x,y)f(y).\]
\end{definition}

For a set $A\subseteq[N]$, $\zeta \star f(A)$ is the sum of $f(B)$ over all the gates $B$ that contain $A$ in their support.
Using this star product, we will express the constraints (e.g \Cref{eq:cons_T}) by fixing some values of $\zeta \star f$.
We will now introduce some tools to invert those relations and deduce how the constraints over $\zeta \star f$ can be translated as constraints over $f$.

\begin{definition}[Möbius function]\label{def:mobius}
    The Möbius function $\mu$ is defined as the inverse of the zeta function for the convolution product:
    \begin{equation}
        \mu \ast \zeta = \zeta \ast \mu = \delta
    \end{equation}
    where \[\delta (A,B)= \begin{cases}
        1 &\text{ if A=B},\\
        0 &\text{ otherwise}.
    \end{cases}\] is the neutral element for $\ast$ and $\star$.
    
     For the Boolean lattice $(\mathcal{P}([N]), \subseteq)$, the Möbius function is 
    \begin{equation}
        \mu(A,B)=\begin{cases}
        (-1)^{|B|-|A|} &\text{ if } A \subseteq B,\\
        0 &\text{ otherwise.}
    \end{cases}
    \end{equation}
\end{definition}

\begin{proof}
    In our context, we always work on a commutative ring so the incidence algebra is commutative as well, it is sufficient to show that $\mu \ast \zeta = \delta$.
    As the inverse is unique, it is enough to show that the function defined by the above formula works.
    First, for any $A\subseteq B$,
    if $A=B$, it is straightforward that $ \mu \ast \zeta (A,B)=1$.
    Otherwise, we start by using the binomial theorem to recall that 
    $$\sum_{x \subseteq [N]}(-1)^{|x|}=\sum_{k=0}^N\binom{N}{k}(-1)^k=(1-1)^N=0.$$
    Then we compute,
    \begin{align*}
        \mu \ast \zeta (A,B)&=\sum_{A\subseteq C\subseteq B }\mu(A,C)\\
        &=(-1)^{|A|}\sum_{A\subseteq C\subseteq B }(-1)^{-|C|}\\
        &= \sum_{C'\subseteq B\backslash A }(-1)^{-|C'|}\\
        &=0
    \end{align*}
    Therefore, $\mu \ast \zeta =\delta$.
\end{proof}

\begin{proposition}\label{prop:composition_star_conv}
    For any element of the incidence algebra $\nu,\eta\in I$ and any function $f\in F$, we have 
    \begin{equation}
        \nu\star (\eta \star f)=(\nu \ast \eta) \star f
    \end{equation}
\end{proposition}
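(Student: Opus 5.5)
The identity is a pointwise statement about functions on $\mathcal{P}([N])$, so I will fix an arbitrary $x\in\mathcal{P}([N])$, expand both sides from the definitions of the star product and the convolution product, and compare the two finite double sums.

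For the left-hand side, the definition of $\star$ gives
\[
\bigl(\nu\star(\eta\star f)\bigr)(x)=\sum_{y\supseteq x}\nu(x,y)\,(\eta\star f)(y)=\sum_{y\supseteq x}\nu(x,y)\sum_{z\supseteq y}\eta(y,z)f(z).
\]
Distributivity in $\mathcal{R}$ turns this into a single sum over pairs $(y,z)$ with $x\subseteq y\subseteq z\subseteq[N]$, with summand $\nu(x,y)\eta(y,z)f(z)$.

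For the right-hand side, unfolding first $\star$ and then $\ast$ gives
\[
\bigl((\nu\ast\eta)\star f\bigr)(x)=\sum_{z\supseteq x}(\nu\ast\eta)(x,z)f(z)=\sum_{z\supseteq x}\Bigl(\sum_{x\subseteq y\subseteq z}\nu(x,y)\eta(y,z)\Bigr)f(z).
\]
Distributivity again makes this a sum over the same index set $\{(y,z): x\subseteq y\subseteq z\}$, with the same summand $\nu(x,y)\eta(y,z)f(z)$. Here I use that multiplication in $\mathcal{R}$ is associative; commutativity is not even needed, since the factors appear in the same order on both sides.

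The only remaining step is exchanging the order of summation. Every sum is finite because $\mathcal{P}([N])$ is finite, so this exchange is simply reindexing a finite sum in the abelian group $(\mathcal{R},+)$. Summing first over $y\supseteq x$ and then over $z\supseteq y$ enumerates exactly the same chains $x\subseteq y\subseteq z$ as summing first over $z\supseteq x$ and then over $y\in[x,z]$. The two expressions therefore agree for every $x$, which proves $\nu\star(\eta\star f)=(\nu\ast\eta)\star f$.

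I do not expect a real obstacle. The only point requiring care is stating the index set of chains explicitly, so that the reindexing is clearly a bijection.
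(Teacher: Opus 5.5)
Your proof is correct and follows the paper's argument: both expand the two sides from the definitions of $\star$ and $\ast$ and then exchange the order of the finite double sum over chains $x\subseteq y\subseteq z$. Your remark that keeping the factors in the order $\nu(x,y)\eta(y,z)f(z)$ removes any need for commutativity of $\mathcal{R}$ is a small refinement. The paper writes the product as $\eta(y,z)\nu(x,y)$ and so relies implicitly on commutativity.
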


\begin{proof}
    Let $x\in \mathcal{P}([N])$, 
    \begin{align*}
        (\nu\star (\eta \star f))(x) &= \sum_{y \geq x}(\eta\star f )(y) \nu(x,y)\\
        &=\sum_{y \geq x}\sum_{z \geq y}f(z)\eta(y,z) \nu(x,y)\\
        &=\sum_{z\geq x}\left(\sum_{\substack{y \text{ s.t}\\x\leq y \leq z }}\eta(y,z) \nu(x,y)\right) f(z)\\
        &=\sum_{z\geq x}(\nu \ast \eta)(x,z) f(z)\\
        &= ((\nu \ast \eta)\star f)(x)
    \end{align*}
\end{proof}
Using this proposition, it will be easy to naturally deduce constraints over $f$ from the triorthogonality-like constraints \Cref{eq:cons_T} that are expressed as constraints over $\zeta \star f$ via the Möbius function.

\subsection{Canonical protocols proofs}\label{App:Proofs_of_lemmas_and_ths}
\subsubsection{Proof of \Cref{lemma:constraints}}\label{Proof_lemma_1}
The proof of \Cref{lemma:constraints} is direct and consists of reintroducing the bitvectors $\alpha^k$ associated with each possible gate and writing the definition of $g$ through those $\alpha^k$.

\begin{proof}[\Cref{lemma:constraints} Proof]
    By definition of $g$, for any $B \subseteq [N]$, 
    \[g(B)=|\{A\in \mathcal{F_N}, B \subseteq A\}| \mod 2.\]
    If for all $A\in \mathcal{F_N}$ we define a vector $\alpha^k\in \mathbb{F}_2^N$ by \[\alpha^k_i=\begin{cases}
        1 \text{ if } i \in A,\\
        0 \text{ otherwise.}
    \end{cases}\]
    then the condition exactly reads as \Cref{eq:cons_T}.
    
     For example, the conditions on $B\subseteq [N]$ such that $|B|=1$ become 
    \[\forall i\in [N], g(\{i\})=|\{A\in \mathcal{F_N},\{i\} \subseteq A\}|=\sum_k \alpha^k_i.\]
    For the doublets and triplets, the logic is the same because $\forall i<j \in [N], \alpha_i^k\alpha_j^k=\begin{cases}
        1 \text{ if } i,j \in A,\\
        0 \text{ otherwise.}
    \end{cases}$

    To invert the relation between $f$ and $g$, we use the fact that $g=\zeta \star f$ so that using \Cref{prop:composition_star_conv}, $\mu \star g=\mu \star (\zeta \star f)= (\mu \ast \zeta) \star f = f$
    which directly gives  \[\forall A \subseteq [N], f(A) = \sum_{B\supseteq A}g(B).\]
\end{proof}

\subsubsection{Proof of \Cref{th:canonical_T}}\label{Proof_th1}
We provide the proof of \Cref{th:canonical_T}.
The proof is straightforward and consists of applying the definition of the introduced tools; the only non trivial part of the proof is the distance formula.

\begin{proof}[\Cref{th:canonical_T} Proof]
    We begin by the expression of $f_0$.
    Let $A\subseteq [N]$, using \Cref{lemma:constraints}, we have $f_0(A)=\sum_{B\supseteq A}g_0(B)$.
    \begin{itemize}
        \item If $|A|>3$, $f_0(A)=0$ because $g(B)=0$ for any $B$ containing $A$.
        \item If $|A|=3$, $f_0(A)=\sum_{B\supseteq A}g_0(B)=g_0(A) =1$.
        \item If $|A|=2$, \begin{align*}
            f_0(A)&=\sum_{B\supseteq A}g_0(B)\\
            &=g_0(A) +\sum_{i\in [N]\backslash S}g_0(A\cup\{i\})\\
            &=1+ N-2 \\
            &= N-1 \mod 2 .
        \end{align*}
        \item If $|A|=1$, \begin{align*}
            f_0(A)&=\sum_{B\supseteq A}g_0(B)\\
            &=g_0(A) +\sum_{i<j\in [N]\backslash S}g_0(A\cup\{i,j\})+\sum_{i\in [N]\backslash S}g_0(A\cup\{i\})\\
            &=1+ \binom{N-1}{2} +N-1 \\
            &= N+ \binom{N-1}{2} \mod 2 .
        \end{align*}
    \end{itemize}
We introduce $a_1, a_2, a_3\in \mathbb{F}_2$ that are the parity of $f$ respectively on sets of cardinality $1, 2$ and $3$.
It is always the case that $a_3=1$.
In practice, the value of $a_1,a_2$ only depends on the value of $N \mod 4$.
Indeed it is easy to show that \begin{equation*}
    \begin{split}
        & a_1=\begin{cases}
1 & \text{if } N\equiv 0\mod 4 \text{ or }N\equiv 1\mod 4,\\
0 & \text{if } N\equiv 2\mod 4 \text{ or } N\equiv 3\mod 4,\\
\end{cases}\\
& a_2=\begin{cases}
1 & \text{if } N\equiv 0\mod 4 \text{ or }N\equiv 2\mod 4,\\
0 & \text{if } N\equiv 1\mod 4 \text{ or } N\equiv 3\mod 4.\\
\end{cases}
    \end{split}
\end{equation*}
Then, using the fact that $n= \binom{N}{1}a_1+\binom{N}{2}a_2+\binom{N}{3}a_3$, we deduce the $T$-count of the protocol.

Lastly, for the distance we can build patterns of gates in $\mathcal{F}_N^0$ for which the combined errors is a logical $Z$ operator:
\begin{itemize}
    \item when $N\equiv 0 \mod 4$ or $N\equiv 2 \mod 4$, 
    \begin{itemize}
        \item if $N$ can be written as $N=3q$ then we can build the pattern with $q$ disjoint triplets as $a_3=1$.
        \item if $N=3q+1$, we build it with $q-1$ triplets and two doublets as $a_2=a_3=1$.
        \item if $N=3q+2$, we build it with $q$ triplets and one doublet.
    \end{itemize}
    Overall the smallest pattern is always of cardinal $\lceil \frac{N}{3}\rceil$ when $N$ is even.
    \item when $N\equiv 1 \mod 4$, 
    \begin{itemize}
        \item if $N$ can be written as $N=3q$ then we can build the pattern with $q$ disjoint triplets as $a_3=1$.
        $q$ is indeed the smallest odd integer above $N/3$ because if $q$ is even then $N$ is even which is not possible.
        \item if $N=3q+1$, we build it with $q$ triplets and one singleton as $a_1=a_3=1$.
        $q+1$ is indeed the smallest odd integer above $N/3$ because if $q$ is odd then $N$ is even which is not possible.
        \item if $N=3q+2$, we build it with $q$ triplets and two singleton as $a_1=a_3=1$.
        $q+2$ is indeed the smallest odd integer above $N/3$ because if $q$ is even then $N$ is even which is not possible.
    \end{itemize}
    \item when $N\equiv 3 \mod 4$, there are only triplets in $\mathcal{F}_N^0$.
    \begin{itemize}
        \item if $N$ can be written as $N=3q$ then we can build the pattern with $q$ disjoint triplets as $a_3=1$.
        $q$ is indeed the smallest odd integer above $N/3$ because if $q$ is even then $N$ is even which is not possible.
        \item if $N=3q+1$, we build it with $q-2$ triplets of separated supports plus 3 triplets for the remaining 7 qubits.
        Indeed, a pattern where the three gates all share exactly one qubit in their support works.
        $q+1$ is indeed the smallest odd integer above $N/3$ because if $q$ is odd then $N$ is even which is not possible.
        \item if $N=3q+2$, we build it with $q-1$ triplets of separated supports plus two triplets sharing one qubit for the 5 remaining qubits.
        $q+2$ is indeed the smallest odd integer above $N/3$ because if $q$ is even then $N$ is even which is not possible.
    \end{itemize}.
\end{itemize}
\end{proof}

\subsection{\texorpdfstring{$\sqrt{T}$}{sqrt(T)} distillation and beyond}\label{App:sqrt(T)}

The analytical framework introduced in~\Cref{App:analytical_framework} allows us to go beyond the distillation of $\ket{T}$ states.
In particular, we explicitly develop the framework for $\ket{\sqrt{T}}$ in this Appendix and explain how this naturally expands to $\ket{\sqrt[L]{T}}=R_Z\left(\frac{\pi}{2^{L+2}}\right)\ket{+}$.

We start by introducing the logical $\sqrt{T}$ gate on the repetition code.
\begin{equation}
    \begin{quantikz}
        & \gate[4]{\mathcal{E}^\dagger} & \gate{\sqrt{T}} & \gate[4]{\mathcal{E}} & \\
        &&&&\\
        \setwiretype{n} & & \vdots & & \\
        & & & & \\
    \end{quantikz} =: \begin{quantikz}
        & \gate[3]{{\sqrt{T}}_L} & \\
        \setwiretype{n} &  & \\
        & &  \\
    \end{quantikz}
\end{equation}\label{eq:def_Z_pi/16}
As in \Cref{sec:FT_T_on_rep_code}, we can commute the $\sqrt{T}$ gate through the CNOT ladder and show that the circuit is equivalent to applying $\sqrt{T}$ on every line once, applying $\mathrm{C}T^{\dagger}$ on every pair of lines, applying $\mathrm{CC}S$ on every triplet, and applying $\mathrm{CCC}Z$ on every quadruplet; see \Cref{App:Commutations_relations} for the proof.

Similarly to \Cref{circ:Tinject} we can define the injection circuit of a $P_{\frac{\pi}{16}}$ gate as \begin{equation}\label{eq:sqrt(T)_injection}
    \begin{quantikz}
    	\ & \gate[5, disable auto height, style={fill=measurebg, rounded corners}]{\begin{array}{c}
     \\  \\ P \\ \\  \\ \\ \\ \\ Z
    \end{array}} & \gate[3]{P_{\frac{\pi}{8}}} &\gate[3]{P_{\frac{\pi}{2}}}& \\
        \ldots\setwiretype{n} &           &              &\\
                             &           &              &   &\\
        \setwiretype{n}&   &\cwbend{-1}     &\\
    	\ket{\sqrt{T}} \ &   & \gate{H}&\meter[style={fill=measurebg, rounded corners}]{} \wire[u][2]{c}
    \end{quantikz}
\end{equation}
Note that this circuit uses a $\ket{\sqrt{T}}$ and involves a $P_{\frac{\pi}{8}}$ correction with 50\% probability.
Such a correction is non-Clifford and requires either one $\ket{T}$-state or two $\ket{\sqrt{T}}$-states.
Therefore, we can say that the $\sqrt{T}$-count of one $P_{\frac{\pi}{16}}$ via this injection circuit is $2$.
Obviously, each of these $P_{\frac{\pi}{16}}$ could be implemented with a single $\ket{\sqrt{T}}$-state and many CNOT gates using the circuit of \Cref{eq:def_Z_pi/16}.
However, at the logical level, this could slow down the implementation of the circuit, as each logical CNOT will typically require two lattice surgery operations.

The conditions for a sequence of $P_{\frac{\pi}{16}}$ gates of $Z$ Pauli strings $\{P^1, \ldots, P^n \}$ defined through the binary vectors
\begin{equation}\label{eq:gate_sequence_P_from_alpha_2}
    \alpha^k_i = \begin{cases} 0 & \text{if } (P^k)_i = I, \\ 1 & \text{if } (P^k)_i = Z. \end{cases}
\end{equation} to implement a logical $\sqrt{T}$ gate on the repetition code, up to Clifford corrections, are:
\begin{subequations}
    \begin{align}
        & \forall i\in[\![1,N]\!], \sum_k \alpha_i^k \equiv 1 \mod 4, \\
        & \forall i<j \in[\![1,N]\!]^2, \sum_k \alpha_i^k\alpha_j^k \equiv 1 \mod 4, \\
        & \forall i<j<l\in[\![1,N]\!]^3, \sum_k \alpha_i^k\alpha_j^k\alpha_l^k \equiv 1 \mod 4\\
        & \forall i<j<l<m\in[\![1,N]\!]^4, \sum_k \alpha_i^k\alpha_j^k\alpha_l^k\alpha_m^k \equiv 1 \mod 2. 
    \end{align}
\end{subequations}
The first three conditions are modulo $4$ conditions because the three gates $\sqrt{T}^2=T$, $(\mathrm{C}T^\dagger)^2=\mathrm{C}S^\dagger$ and $(\mathrm{CC}S)^2=\mathrm{CC}Z$ are still non-Clifford gates when squared whereas ${\mathrm{CCC}Z}^2=\mathbb{I}^{\otimes 4}$.
However, $\sqrt{T}$, $\mathrm{C}T^\dagger$ and $\mathrm{CC}S$ become Clifford gates only when elevated to the fourth power.
As a consequence, it could be useful to apply the same gate two or three times.
In practice, this would result in a single gate injecting $\ket{T}$ or $\ket{T^{3/2}}$ respectively.
To identify whether a gate is supposed to inject $\ket{\sqrt{T}}$, $\ket{T}$, or $\ket{T^{3/2}}$, we use a function $f$ valued in the ring $\mathcal{R}=\mathbb{Z}\backslash 4\mathbb{Z}$.

\begin{lemma}\label{prop:f_from_g_2}
    Let $f,g:\mathcal{P}([N])\rightarrow\mathbb{Z}\backslash 4\mathbb{Z}$ be two functions such that \[\forall B \subseteq [N], g(B) = \sum_{A\supseteq B}f(A).\] with
    \[
g(B) = \begin{cases}
    1 \mod 4& \text{if } 1 \leq |B|\leq 3, \\
    1 \mod 2& \text{if } |B|= 4. \\
\end{cases}
\]
    Then, $f$ defines a family of circuits implementing a logical $\sqrt{T}$ gate on an $N$-qubit repetition code up to a Clifford correction. Additionally, we have  
    \[\forall A \in \mathcal{P}([N]), f(A) = \sum_{B\supseteq A}(-1)^{|B|-|A|}g(B) \mod 4.\]
\end{lemma}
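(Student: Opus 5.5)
The plan mirrors the proof of \Cref{lemma:constraints}, now with coefficients in $\mathcal{R}=\mathbb{Z}/4\mathbb{Z}$. The statement has two parts: a circuit-implementation claim, and the inversion formula for $f$. The second is purely algebraic, so I treat it first.

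For the inversion formula, note that the hypothesis $g(B)=\sum_{A\supseteq B}f(A)$ is exactly $g=\zeta\star f$, where the incidence algebra is now taken over the commutative ring $\mathbb{Z}/4\mathbb{Z}$. The proof of \Cref{def:mobius} uses only the binomial identity $\sum_{C'\subseteq B\setminus A}(-1)^{|C'|}=0$, which holds over $\mathbb{Z}$ and therefore after reduction mod $4$. So $\mu(A,B)=(-1)^{|B|-|A|}\zeta(A,B)$ is still the convolution inverse of $\zeta$. \Cref{prop:composition_star_conv} is valid over any commutative ring. Hence $\mu\star g=(\mu\ast\zeta)\star f=\delta\star f=f$, which is precisely the stated formula $f(A)=\sum_{B\supseteq A}(-1)^{|B|-|A|}g(B)\bmod 4$. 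Unlike the $\mathbb{F}_2$ case, the signs no longer disappear, which is why they appear in the formula.

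For the circuit claim, I interpret $f(A)\in\{0,1,2,3\}$ as the number of times the rotation $P_{\pi/16}$ with support $A$ is applied. Equivalently, it is a single injection of $\ket{\sqrt{T}}$, $\ket{T}$ or $\ket{T^{3/2}}$ on that support, and every multiplicity is only meaningful mod $4$ because $(P_{\pi/16})^4=P_{\pi/4}$ is Clifford. Listing the chosen rotations with multiplicity as columns $\alpha^k$ exactly as in the proof of \Cref{lemma:constraints}, I get $g(B)=\sum_k\prod_{i\in B}\alpha^k_i \bmod 4$ for every $B$. The hypotheses on $g$ for $1\le|B|\le 3$ (mod $4$) and $|B|=4$ (mod $2$) then reproduce term by term the four conditions listed just before the lemma.

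It remains to justify that these four conditions give a logical $\sqrt{T}$ up to Clifford. Each $P_{\pi/16}$ with support $A$ equals, up to global phase, $\mathcal{E}_A\sqrt{T}\mathcal{E}_A^\dagger$ on the sub-repetition code on $A$. By the commutation identity \Cref{circ:commutations_sqrt_T}, this is the product of one $\sqrt{T}$ on each $i\in A$, one $\C T^\dagger$ on each pair, one $\CCS$ on each triple and one $\CCCZ$ on each quadruple inside $A$. Note that there is no higher-order term, since $2^{r-1}\cdot\pi/16\equiv 0 \pmod{2\pi}$ for $r\ge5$. All these gates are diagonal and commute, so the whole circuit applies to each subset $B$ of size $r\le4$ the corresponding gate raised to the power $g(B)$. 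The target $\sqrt{T}_L$ applies each exactly once. The discrepancy on a subset of size $r\le 3$ is that gate raised to a multiple of $4$, and on a subset of size $4$ it is $\CCCZ$ raised to a multiple of $2$. The first is Clifford and the second is the identity, so the residual is Clifford.

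I expect the main obstacle to be this last bookkeeping step: checking the precise orders. For sizes $1$ to $3$ the gate's fourth power is Clifford rather than the identity, since $\sqrt{T}^4=S$, $(\C T^\dagger)^4=\C Z$ and $\CCS^4=\mathbb{I}$. For size $4$ one must check $\CCCZ^2=\mathbb{I}$. These facts are exactly what makes the moduli in the hypothesis on $g$ the right ones. Everything else is a direct transcription of the $T$-gate argument.
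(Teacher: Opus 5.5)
Your proposal is correct and follows the paper's route: the paper simply defers to the proof of \Cref{lemma:constraints} (read $g(B)$ as the number of rotations whose support contains $B$, then invert by M\"obius inversion, here over $\mathbb{Z}/4\mathbb{Z}$), and you additionally use \Cref{circ:commutations_sqrt_T} to spell out why the four conditions yield $\sqrt{T}_L$ up to a Clifford. One small slip: the order-$r$ layer carries phase $(-2)^{r-1}\pi/8$, not $2^{r-1}\pi/16$ (your expression gives $\pi$ at $r=5$), although your conclusion that no layers with $r\geq 5$ survive is correct.
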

\begin{proof}
    The proof is the same as the one of Lemma 1 provided in \Cref{Proof_lemma_1}.
\end{proof}

The remaining degrees of freedom on $f$ are given by the values of $g$ in sets of cardinality greater than $4$.
If we denote $\mathcal{E}_N=\{ A, f(A)\neq 0 \mod 4\}$ as the set of possible $Z$ error patterns arising from the gates in the protocol, the distance is still defined as the minimal number of erroneous gates in $\mathcal{E}_N$ that combine into the undetected pattern of $Z$ errors on every qubit.
The same way as for $\ket{T}$-state distillation, we can build a canonical family $\mathcal{F}^1_N$ for which $g(B)=1 \mod 4$ for all $1\leq |B|\leq 4$ and $g(B)=0 \mod 4$ for $|B|>4$. Note that one could also choose $g(B)=3 \mod 4$ for $|B|=4$.

\begin{theorem}[Canonical family of logical $\sqrt{T}$ circuits]\label{th:sqrt(T)}
    
    For $N\geq 5$, the family $\mathcal{F}^1_N$ is defined by:
    \begin{multline*}
        \forall A\subseteq[N], \\f_1(A)\equiv\begin{cases}
            (-1)^{|A|}\binom{N-|A|-1}{4-|A|} \mod 4&\text{ if } 1\leq |A|\leq 4,\\
             0 \mod 4 & \text{ otherwise.}
        \end{cases}
    \end{multline*} 
    The distance of this family of protocols is 
    \[d(\mathcal{F}_N^1)=\lceil \frac{N}{4}\rceil.\]
    The $\sqrt{T}$-count is 
    \[n= 2\sum_{s=1}^4\binom{N}{s}(1-\delta_{\binom{N-s-1}{4-s} \mod 4}),\]
    where the factor 2 accounts for the non-Clifford correction in the injection circuit.
\end{theorem}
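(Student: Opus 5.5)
The plan is to follow the proof of \Cref{th:canonical_T} closely, now over $\mathcal{R}=\mathbb{Z}/4\mathbb{Z}$. Correctness comes for free. The canonical choice $g_1(B)=1$ for $1\le|B|\le4$ and $g_1(B)=0$ for $|B|>4$ satisfies the hypotheses of \Cref{prop:f_from_g_2}. So $f_1$ defines a circuit implementing a logical $\sqrt{T}$ on the repetition code, up to Clifford corrections, and is given by the M\"obius inversion $f_1=\mu\star g_1$.

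\textbf{Step 1 (formula for $f_1$).} Fix $A$ with $|A|=a$.
\begin{itemize}
\item If $a>4$, every $B\supseteq A$ has $g_1(B)=0$, so $f_1(A)=0$.
\item If $1\le a\le4$, group the supersets $B$ by $b=|B|$. There are $\binom{N-a}{b-a}$ of them, which gives
\[
f_1(A)=\sum_{j=0}^{4-a}(-1)^j\binom{N-a}{j}.
\]
The partial alternating-sum identity $\sum_{j=0}^{m}(-1)^j\binom{M}{j}=(-1)^m\binom{M-1}{m}$ (proved by induction on $m$ via Pascal's rule) applies with $M=N-a$ and $m=4-a$.
\item This yields $(-1)^{4-a}\binom{N-a-1}{4-a}=(-1)^{|A|}\binom{N-|A|-1}{4-|A|}$, as claimed.
\end{itemize}

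\textbf{Step 2 ($\sqrt{T}$-count).} A gate of support $A$ is present exactly when $f_1(A)\not\equiv0 \pmod 4$. By Step 1 this depends only on $s=|A|$, and holds iff $\binom{N-s-1}{4-s}\not\equiv 0\pmod 4$. Each present gate, whether it injects $\ket{\sqrt T}$, $\ket{T}$ or $\ket{T^{3/2}}$, is counted as cost $2$, following the injection-circuit accounting stated before the theorem. Summing $\binom{N}{s}$ over $s=1,\dots,4$ with the indicator $1-\delta_{\binom{N-s-1}{4-s}\bmod 4}$ gives the formula for $n$.

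\textbf{Step 3 (distance).} The lower bound is immediate. Each gate has support of size at most $4$, and the $\mathbb{F}_2$-sum of supports of a set $K$ of gates is contained in their union. Hitting all $N$ qubits therefore forces $|K|\ge\lceil N/4\rceil$.

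For the upper bound, I would exhibit explicit patterns of $\lceil N/4\rceil$ gates whose supports XOR to $[N]$, using only sizes guaranteed to be present for every $N\ge5$:
\begin{itemize}
\item size $4$, where $f_1=\binom{N-5}{0}=1$;
\item for odd $N$, size $3$, where $f_1=-(N-4)$ is odd and hence nonzero mod $4$.
\end{itemize}
Write $N=4q+r$ with $q\ge1$, and construct the pattern case by case:
\begin{itemize}
\item $r=0$: take $q$ disjoint quadruplets.
\item $r=3$: take $q$ disjoint quadruplets and one disjoint triplet.
\item $r=2$: take $q-1$ disjoint quadruplets and two quadruplets sharing one qubit. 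Each shared qubit reduces the total weight by $2$, so $4(q+1)-2=N$.
\item $r=1$: take $q-1$ disjoint quadruplets and a triplet and a quadruplet sharing one qubit, covering $4(q-1)+3+4-2=N$.
\end{itemize}
In each case the number of gates is $\lceil N/4\rceil$.

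The main point needing care is this step. A parity check shows that for odd $N$ quadruplets alone cannot work, since $4|K|-2(\text{overlaps})$ is even. That is why the argument relies on size-$3$ gates, and why it must confirm that $f_1$ is nonzero on triplets precisely when $N$ is odd. I also expect to check the smallest values $N=5,6,7$ by hand, to make sure the overlapping constructions fit inside $[N]$. Finally, the argument uses the distance convention that a gate of support $A$ contributes one elementary $Z$-error pattern on $A$ regardless of which magic state it injects, consistent with the definition via $\mathcal{E}_N$.
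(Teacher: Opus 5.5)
Your Steps~1 and~2 and the lower bound in Step~3 agree with the paper's proof. The gap is in the upper-bound constructions for $N\equiv 1,2 \pmod 4$: they do not produce the all-ones pattern.

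A qubit lying in exactly two of the chosen supports is flipped twice, i.e.\ not at all. So two gates ``sharing one qubit'' do not contribute that qubit to the $\mathbb{F}_2$-sum. Your count $4(q+1)-2=N$ is the weight of the sum, but the union of the supports then has $4(q+1)-1=N+1$ elements. The configuration therefore does not fit inside $[N]$, in either of your cases $r=1,2$. Concretely, two quadruplets inside $[6]$ share at least two qubits, so their sum has weight at most $4$. Likewise, a triplet and a quadruplet inside $[5]$ sum to weight at most $3$.

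More generally, every qubit of $[N]$ must be covered an odd number of times. Also, $\lceil N/4\rceil$ gates of size at most $4$ have total incidence at most $N+3$. So either the supports partition $[N]$, or exactly one qubit lies in three of them. For $N=5,6$ there are only two gates, so only a partition is possible. That forces support sizes $\{4,1\}$ or $\{3,2\}$ for $N=5$, and $\{4,2\}$ or $\{3,3\}$ for $N=6$. With only quadruplets and triplets (and triplets only for odd $N$), the upper bound cannot be reached there at all.

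The missing step is to check which smaller supports actually occur, as the paper does:
\begin{itemize}
\item On pairs, $f_1=\binom{N-3}{2}=(N-3)(N-4)/2$, which is odd for $N\equiv 1,2\pmod 4$.
\item On triplets, $f_1=-(N-4)\not\equiv 0 \pmod 4$ for every $N\not\equiv 0\pmod 4$, including $N\equiv 2 \pmod 4$, not only odd $N$.
\end{itemize}
With these, you can partition $[N]$ directly:
\begin{itemize}
\item $N=4q+2$: $q$ quadruplets and one pair.
\item $N=4q+1$: $q-1$ quadruplets, one triplet and one pair.
\end{itemize}
Your cases $r=0$ and $r=3$ are fine as stated.

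For $N\geq 9$, your idea of avoiding small supports can be rescued by letting three gates meet in a single common qubit, plus disjoint quadruplets. For example, take three quadruplets, or a triplet and two quadruplets, pairwise intersecting only in that qubit. However, $N=5$ and $N=6$ still require pairs, singletons, or (for $N=6$) two disjoint triplets.
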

\begin{proof}[\Cref{th:sqrt(T)} Proof]
    The proof is the same as for \Cref{th:canonical_T}.
    Using \Cref{prop:f_from_g_2}, given $A\subseteq[N]$, it directly follows that $f_1(A)=0$ if $|A|>4$ and 
    \[f_1(A)=\sum_{r=0}^{4-|A|}(-1)^r\binom{N-|A|}{r}\] otherwise.
    From Pascal's formula $\binom{X}{r}=\binom{X-1}{r}+\binom{X-1}{r-1}$, one can deduce by summing over $r$ on each side and telescoping the right-hand side that for any positive integer~$m$:
    \begin{align*}
    \sum_{r=0}^m(-1)^r&\binom{X}{r}=\sum_{r=0}^m(-1)^r\binom{X-1}{r}+\sum_{r=0}^m(-1)^r\binom{X-1}{r-1}\\
    &=\sum_{r=0}^m(-1)^r\binom{X-1}{r}+\sum_{r=0}^{m-1}(-1)^{r+1}\binom{X-1}{r}\\
    &= (-1)^m\binom{X-1}{m}.
    \end{align*}
    Applying this to $f$ above, we have
    \[f_1(A)=(-1)^{4-|A|}\binom{N-|A|-1}{4-|A|},\]
    giving the expected result as $(-1)^{4-|A|}=(-1)^{|A|}$.

    We now introduce $a_1,a_2,a_3,a_4\in \mathbb{F}_2$ as \[a_r=\begin{cases}
        1 &\text{if } f([r])\neq 0 \mod 4, \\
        0 &\text{otherwise. }
    \end{cases}\]
    Indeed, the error model is the same for a multi-qubit $Z$ rotation, no matter whether it rotates an angle $\frac{\pi}{8}$ (corresponding to $f(A)=1 \mod 4$), $\frac{\pi}{4}$ (corresponding to $f(A)=2 \mod 4$), or $\frac{3\pi}{8}$ (corresponding to $f(A)=3 \mod 4$).
    The $\sqrt{T}$-count is the same as well: 2 for any rotation.

    One can exhaustively compute that 
    \begin{itemize}
        \item $\forall N >5, a_4=1$.
        \item $a_3= \begin{cases}
            0 & \text{if } N\equiv 0 \mod 4\\
            1 &  \text{otherwise }
        \end{cases}$.
        \item $a_2= \begin{cases}
            0 & \text{if } N\equiv 3,4 \mod 8\\
            1 &  \text{otherwise }
        \end{cases}$. Indeed, $\binom{N-3}{2}=\frac{(N-3)(N-4)}{2}$ is divisible by 4 if and only if $N\equiv 3,4 \mod 8$ because only one of two consecutive integers can be even, so one of the two has to be divisible by $8$.
        \item $a_1= \begin{cases}
            0 & \text{if } N\equiv 0,2,3,4,6 \mod 8\\
            1 &  \text{otherwise }
        \end{cases}$. Indeed $\binom{N-2}{3}=\frac{(N-2)(N-3)(N-4)}{3\times 2}$. First, one of the three consecutive integers is divisible by 3.
        Then, if $N$ is even, either $N-4$ or $N-2$ is divisible by $4$ and the other one is divisible by $2$.
        If $N$ is odd, $(N-2)(N-3)(N-4)$ is divisible by 8 if and only if $(N-3)\equiv 0 \mod 8$.
        
    \end{itemize}
    The only non trivial thing to prove is the distance of the protocol.
    As no set with cardinality $>4$ belongs to the family, we know $d(\mathcal{F}^1_N)\geq\lceil N/4\rceil$ because we can't partition $[N]$ with less than $\lceil N/4\rceil$ sets of cardinality $\leq 4$.
    We can explicitly build patterns to ensure $d(\mathcal{F}^1_N)=\lceil N/4\rceil$ .
    \begin{itemize}
        \item if $N=4q$, we can take $q$ different quadruplets to partition $[N]$.
        \item if $N=4q+1$,  we can use $q-1$ quadruplets, 1 triplet, and 1 pair.
        \item if $N=4q+2$,  we can use $q$ quadruplets and 1 pair.
        \item if $N=4q+3$,  we can use $q$ quadruplets and 1 triplet.
    \end{itemize}
    which ends the proof.
\end{proof}

\begin{table}[]
    \centering
    \renewcommand{\arraystretch}{1.25}
    \setlength{\tabcolsep}{10pt}
    \begin{tabular}{@{}ccc@{}}
    \toprule
    \textbf{Number of qubits} $N$ & \textbf{$\sqrt{T}$-count} $n$ & \textbf{Distance} $d$ \\
    \midrule
    5  & $2 \times 60$   & 2 \\
    9  & $2 \times 510$  & 3 \\
    13 & $2 \times 2184$ & 4 \\
    \bottomrule
    \end{tabular}
    \caption{Parameters of the $\ket{\sqrt{T}}$ distillation protocols from the canonical family for small $N$.}
    \label{tab:canonical_family_sqrtT}
\end{table}
Unfortunately, the canonical family provides very deep circuits; see~\cref{tab:canonical_family_sqrtT}. Indeed, taking one column of weight $r$ implies taking all of them.
We believe they provide insights regarding what distance can be achieved for a given number $N$ of logical qubits, but they are not of practical use.

Regarding $\ket{\sqrt{T}}$ distillation circuits, these families are, to our knowledge, the first to distill such states for arbitrary order $d$, even though a quadratic order distillation scheme exists that could be concatenated to achieve distillation to order $2^k$ for some integer $k$.
In particular, Refs\@.~\cite{Campbell_2016, Duclos_Cianci_2015} provide protocols for distilling $\ket{\sqrt{T}}$ to quadratic order on $N=5$ qubits.
The quantum Reed-Muller code family yields distillation protocols for $\ket{\sqrt{T}}$-states with distances $d=2^r-1$ for any integer $r>0$~\cite{tiurev2026parityunfoldeddistillationarchitecturenoisebiased}.
Additionally, there exists a catalysis protocol~\cite{Gidney_2019} that consumes $5\ket{T}$ to produce $2\ket{\sqrt{T}}$ without consuming a preexisting $\ket{\sqrt{T}}$-state.
This reduces the $T$-cost of $\ket{\sqrt{T}}$-state to $2.5$, provided one can produce a first good quality $\ket{\sqrt{T}}$-state.

Contrary to the canonical family of $\ket{T}$ distillation using a SAT solver, we have been able to find a distance $d=3$ protocol for distilling $\ket{\sqrt{T}}$ on $N=6$ qubits, which is less than the first distance 3 canonical family protocol on $N=9$ qubits. 
This protocol of distance 3 over $N=6$ qubits coincides exactly with the quantum Reed-Muller code $QRM(1,5)$ that has $5$ $X$-stabilizers and $n=31$ qubits.
The SAT solvers return UNSAT on $N=4$ qubits for distance $2$, suggesting that the protocol from~\cite{Duclos_Cianci_2015,Campbell_2016} as well as $\mathcal{F}_5^1$ are optimal regarding the qubit footprint.
Yet, we have not found any distance $4$ protocol distilling $\ket{\sqrt{T}}$ for $N<13$.
In the general case of $\ket{\sqrt[L]{T}}=R_Z\left(\frac{\pi}{2^{L+2}}\right)\ket{+}$ distillation, the same procedure works using the ring $\mathcal{R}=\mathbb{Z}\backslash 2^{L+1}\mathbb{Z}$ and the conditions for $f$ to define a logical $\sqrt[L]{T}$ gate on the repetition code are 
\[
g(B) = \begin{cases}
    1 \mod 2^{L+1}& \text{if } 1 \leq |B|\leq 3 \\
    1 \mod 2^{L}& \text{if } |B|= 4  \\
    \vdots\\
    1 \mod 2& \text{if } |B|=L+3 \\
\end{cases}
\]

In the exact same way as for $\mathcal{F}^0_N$,$\mathcal{F}^1_N$, one can build a canonical family $\mathcal{F}^L_N$ that will have a distance $\geq \lceil \frac{N}{3+L}\rceil$.

%% file: Matrices.tex
\section{Matrices of protocols}\label{App:G_matrices}
We list here the matrices associated to the different new distillation protocols found with the SAT solver and discussed in the main text. The matrices are given in the formalism detailed in \Cref{sec:logicalT-to-distill}.

\begin{widetext}

\subsection{T distillation protocols}

The matrix associated to the $64T \rightarrow 1 T$ protocol on 10 qubits with error suppression in $495p^4$ is given in \Cref{eq:64T1T}
\begin{equation}\label{eq:64T1T}
\setlength{\arraycolsep}{2pt}
\resizebox{0.85\textwidth}{!}{$
\left(\begin{array}{cccccccccccccccccccccccccccccccccccccccccccccccccccccccccccccccc}
    1 & 0 & 0 & 0 & 0 & 0 & 0 & 0 & 0 & 0 & 1 & 1 & 1 & 1 & 1 & 1 & 1 & 1 & 1 & 0 & 0 & 0 & 0 & 0 & 0 & 0 & 0 & 0 & 0 & 0 & 0 & 0 & 0 & 0 & 0 & 0 & 0 & 0 & 0 & 0 & 0 & 0 & 0 & 0 & 0 & 0 & 1 & 1 & 1 & 1 & 1 & 1 & 1 & 1 & 1 &
 0 & 0 & 0 & 0 & 0 & 0 & 0 & 0 & 0 \\
    1 & 0 & 0 & 1 & 0 & 0 & 0 & 0 & 0 & 0 & 1 & 1 & 0 & 1 & 1 & 1 & 1 & 1 & 1 & 0 & 0 & 0 & 1 & 1 & 1 & 0 & 0 & 0 & 0 & 1 & 0 & 0 & 1 & 1 & 0 & 0 & 0 & 0 & 1 & 0 & 0 & 0 & 1 & 1 & 0 & 1 & 1 & 0 & 1 & 1 & 1 & 0 & 0 & 1 & 0 &
 0 & 1 & 0 & 1 & 1 & 1 & 0 & 0 & 1 \\
    0 & 0 & 0 & 1 & 0 & 1 & 0 & 0 & 0 & 0 & 0 & 0 & 1 & 0 & 1 & 0 & 0 & 0 & 0 & 0 & 0 & 0 & 1 & 0 & 0 & 0 & 0 & 1 & 0 & 1 & 1 & 0 & 1 & 1 & 0 & 1 & 1 & 1 & 0 & 0 & 0 & 1 & 1 & 1 & 1 & 1 & 1 & 0 & 0 & 0 & 1 & 1 & 1 & 1 & 1 &
 1 & 0 & 0 & 0 & 1 & 1 & 0 & 1 & 0 \\
    0 & 0 & 0 & 0 & 0 & 1 & 0 & 0 & 1 & 0 & 0 & 0 & 0 & 0 & 1 & 0 & 0 & 1 & 0 & 0 & 0 & 0 & 0 & 1 & 0 & 0 & 1 & 0 & 0 & 0 & 1 & 1 & 1 & 1 & 0 & 1 & 1 & 1 & 1 & 1 & 1 & 0 & 0 & 0 & 1 & 1 & 1 & 1 & 1 & 1 & 0 & 0 & 0 & 1 & 1 &
 1 & 1 & 1 & 0 & 0 & 0 & 1 & 0 & 0 \\
    0 & 0 & 0 & 0 & 0 & 0 & 0 & 0 & 1 & 1 & 0 & 0 & 0 & 0 & 0 & 0 & 0 & 1 & 1 & 0 & 0 & 0 & 0 & 0 & 1 & 1 & 0 & 0 & 0 & 0 & 0 & 1 & 1 & 1 & 1 & 1 & 1 & 0 & 0 & 1 & 1 & 1 & 1 & 1 & 1 & 0 & 0 & 0 & 1 & 1 & 1 & 1 & 1 & 1 & 0 &
 0 & 0 & 1 & 1 & 1 & 1 & 0 & 0 & 0 \\
    0 & 1 & 0 & 0 & 0 & 0 & 0 & 0 & 0 & 1 & 1 & 0 & 0 & 0 & 0 & 0 & 0 & 0 & 1 & 1 & 0 & 0 & 0 & 0 & 0 & 0 & 0 & 1 & 1 & 0 & 0 & 1 & 1 & 0 & 1 & 1 & 1 & 1 & 1 & 1 & 0 & 0 & 1 & 1 & 0 & 1 & 1 & 1 & 1 & 0 & 0 & 1 & 1 & 0 & 1 &
 1 & 1 & 0 & 0 & 0 & 1 & 0 & 0 & 1 \\
    0 & 1 & 1 & 0 & 0 & 0 & 0 & 0 & 0 & 0 & 1 & 1 & 0 & 0 & 0 & 0 & 0 & 0 & 0 & 0 & 1 & 0 & 0 & 0 & 0 & 0 & 1 & 0 & 1 & 1 & 0 & 1 & 1 & 0 & 1 & 1 & 0 & 1 & 1 & 0 & 1 & 1 & 1 & 0 & 1 & 0 & 1 & 1 & 0 & 1 & 1 & 1 & 0 & 1 & 0 &
 0 & 1 & 1 & 1 & 0 & 0 & 0 & 1 & 0 \\
    0 & 0 & 1 & 0 & 1 & 0 & 0 & 0 & 0 & 0 & 0 & 1 & 0 & 1 & 0 & 0 & 0 & 0 & 0 & 0 & 0 & 1 & 0 & 0 & 0 & 1 & 0 & 0 & 1 & 1 & 1 & 1 & 0 & 0 & 1 & 1 & 0 & 1 & 0 & 1 & 1 & 1 & 0 & 1 & 0 & 1 & 1 & 0 & 1 & 1 & 1 & 0 & 1 & 0 & 1 &
 0 & 0 & 0 & 1 & 0 & 1 & 1 & 0 & 1 \\
    0 & 0 & 0 & 0 & 1 & 0 & 1 & 0 & 0 & 0 & 0 & 0 & 0 & 1 & 0 & 1 & 0 & 0 & 0 & 1 & 0 & 0 & 1 & 0 & 0 & 0 & 0 & 0 & 1 & 1 & 1 & 1 & 0 & 1 & 1 & 0 & 0 & 0 & 1 & 0 & 1 & 0 & 1 & 1 & 1 & 1 & 0 & 1 & 0 & 1 & 0 & 1 & 1 & 1 & 1 &
 0 & 1 & 0 & 0 & 1 & 0 & 0 & 1 & 1 \\
    0 & 0 & 0 & 0 & 0 & 0 & 1 & 1 & 0 & 0 & 0 & 0 & 0 & 0 & 0 & 1 & 1 & 0 & 0 & 0 & 1 & 0 & 0 & 1 & 0 & 0 & 0 & 0 & 1 & 1 & 1 & 0 & 0 & 1 & 1 & 0 & 1 & 1 & 1 & 1 & 0 & 1 & 0 & 1 & 1 & 0 & 1 & 1 & 1 & 0 & 1 & 0 & 1 & 1 & 0 &
 1 & 0 & 0 & 1 & 0 & 0 & 1 & 1 & 0
\end{array}\right)
$}
\end{equation}

The matrix associated to the $65T \rightarrow 1 T$ protocol on 11 qubits with error suppression in $7947p^5$ is given in \Cref{eq:65T1T}
\begin{equation}\label{eq:65T1T}
\setlength{\arraycolsep}{2pt}
\resizebox{0.85\textwidth}{!}{$
\left(\begin{array}{ccccccccccccccccccccccccccccccccccccccccccccccccccccccccccccccccc}
    1 & 0 & 0 & 0 & 0 & 0 & 0 & 0 & 0 & 0 & 0 & 1 & 1 & 1 & 1 & 1 & 1 & 1 & 1 & 1 & 0 & 0 & 0 & 0 & 0 & 0 & 0 & 0 & 0 & 0 & 0 & 0 & 0 & 0 & 0 & 0 & 0 & 0 & 1 & 1 & 0 & 0 & 1 & 1 & 0 & 0 & 1 & 0 & 1 & 1 & 0 & 0 & 1 & 0 & 1 & 0 & 0 & 0 & 0 & 0 & 0 & 0 & 0 & 0 & 0 \\
    1 & 1 & 0 & 0 & 0 & 0 & 0 & 0 & 0 & 0 & 0 & 0 & 0 & 0 & 0 & 0 & 0 & 0 & 0 & 0 & 0 & 0 & 0 & 0 & 0 & 0 & 0 & 0 & 0 & 0 & 0 & 0 & 0 & 0 & 0 & 0 & 0 & 0 & 1 & 1 & 1 & 1 & 1 & 1 & 
1 & 1 & 1 & 1 & 1 & 1 & 1 & 1 & 1 & 1 & 1 & 1 & 0 & 0 & 0 & 0 & 0 & 0 & 0 & 0 & 0 \\
    0 & 1 & 0 & 0 & 0 & 0 & 1 & 0 & 0 & 0 & 0 & 1 & 1 & 1 & 1 & 0 & 1 & 1 & 1 & 1 & 0 & 0 & 0 & 0 & 1 & 1 & 0 & 0 & 1 & 0 & 0 & 1 & 0 & 0 & 0 & 0 & 1 & 1 & 1 & 1 & 0 & 0 & 0 & 0 & 
1 & 1 & 1 & 0 & 0 & 0 & 1 & 1 & 1 & 0 & 0 & 1 & 1 & 1 & 0 & 1 & 0 & 0 & 1 & 0 & 1 \\
    0 & 0 & 0 & 0 & 0 & 0 & 1 & 0 & 0 & 1 & 0 & 0 & 0 & 0 & 0 & 1 & 0 & 0 & 1 & 0 & 0 & 0 & 0 & 0 & 1 & 0 & 0 & 1 & 0 & 0 & 0 & 1 & 1 & 1 & 1 & 0 & 1 & 1 & 1 & 1 & 1 & 1 & 1 & 1 & 
1 & 1 & 0 & 0 & 0 & 0 & 0 & 0 & 1 & 1 & 1 & 1 & 1 & 1 & 1 & 0 & 0 & 0 & 0 & 1 & 0 \\
    0 & 0 & 0 & 0 & 0 & 0 & 0 & 0 & 0 & 1 & 1 & 0 & 0 & 0 & 0 & 0 & 0 & 0 & 1 & 1 & 0 & 0 & 0 & 0 & 0 & 1 & 1 & 0 & 0 & 0 & 0 & 0 & 1 & 1 & 1 & 1 & 1 & 1 & 0 & 0 & 0 & 0 & 1 & 1 & 
1 & 1 & 1 & 1 & 1 & 1 & 1 & 1 & 1 & 1 & 0 & 0 & 0 & 0 & 1 & 1 & 1 & 1 & 0 & 0 & 0 \\
    0 & 0 & 1 & 0 & 0 & 0 & 0 & 0 & 0 & 0 & 1 & 1 & 0 & 0 & 0 & 0 & 0 & 0 & 0 & 1 & 1 & 0 & 0 & 0 & 0 & 0 & 0 & 0 & 1 & 0 & 1 & 0 & 1 & 0 & 1 & 1 & 1 & 1 & 1 & 1 & 1 & 1 & 0 & 1 & 
1 & 0 & 0 & 0 & 1 & 1 & 1 & 1 & 0 & 0 & 1 & 1 & 1 & 1 & 0 & 0 & 0 & 1 & 1 & 0 & 0 \\
    0 & 0 & 1 & 1 & 0 & 0 & 0 & 0 & 0 & 0 & 0 & 1 & 1 & 0 & 0 & 0 & 0 & 0 & 0 & 0 & 0 & 1 & 0 & 0 & 0 & 0 & 0 & 1 & 0 & 1 & 1 & 0 & 1 & 0 & 1 & 1 & 1 & 0 & 1 & 1 & 1 & 1 & 1 & 0 & 
0 & 1 & 1 & 1 & 1 & 0 & 1 & 0 & 1 & 1 & 0 & 0 & 0 & 1 & 1 & 1 & 0 & 0 & 0 & 0 & 1 \\
    0 & 0 & 0 & 1 & 1 & 0 & 0 & 0 & 0 & 0 & 0 & 0 & 1 & 1 & 0 & 0 & 0 & 0 & 0 & 0 & 0 & 0 & 1 & 0 & 0 & 0 & 1 & 0 & 0 & 1 & 1 & 1 & 1 & 0 & 0 & 1 & 1 & 0 & 1 & 0 & 1 & 0 & 1 & 1 & 
1 & 1 & 1 & 1 & 0 & 1 & 0 & 1 & 0 & 0 & 1 & 1 & 0 & 0 & 0 & 1 & 0 & 1 & 1 & 1 & 0 \\
    0 & 0 & 0 & 0 & 1 & 1 & 0 & 0 & 0 & 0 & 0 & 0 & 0 & 1 & 1 & 0 & 0 & 0 & 0 & 0 & 1 & 0 & 0 & 1 & 0 & 0 & 0 & 0 & 0 & 1 & 1 & 1 & 1 & 1 & 0 & 1 & 0 & 0 & 0 & 1 & 0 & 1 & 1 & 0 & 
0 & 1 & 0 & 0 & 1 & 1 & 1 & 1 & 1 & 1 & 1 & 1 & 0 & 1 & 0 & 0 & 1 & 0 & 1 & 0 & 1 \\
    0 & 0 & 0 & 0 & 0 & 1 & 0 & 1 & 0 & 0 & 0 & 0 & 0 & 0 & 1 & 0 & 1 & 0 & 0 & 0 & 0 & 1 & 0 & 0 & 1 & 0 & 0 & 0 & 0 & 1 & 1 & 1 & 0 & 1 & 0 & 1 & 0 & 1 & 1 & 1 & 1 & 1 & 0 & 1 & 
1 & 0 & 1 & 1 & 0 & 1 & 0 & 1 & 1 & 1 & 0 & 0 & 1 & 0 & 0 & 1 & 0 & 0 & 0 & 1 & 1 \\
    0 & 0 & 0 & 0 & 0 & 0 & 0 & 1 & 1 & 0 & 0 & 0 & 0 & 0 & 0 & 0 & 1 & 1 & 0 & 0 & 0 & 0 & 1 & 0 & 0 & 1 & 0 & 0 & 0 & 1 & 1 & 1 & 0 & 1 & 1 & 0 & 0 & 1 & 0 & 1 & 0 & 1 & 1 & 1 & 
1 & 1 & 1 & 1 & 1 & 0 & 1 & 0 & 0 & 0 & 1 & 1 & 0 & 0 & 1 & 0 & 1 & 0 & 1 & 1 & 0
\end{array}\right)
$}
\end{equation}

\subsection{\texorpdfstring{$\CCZ$}{CCZ} distillation protocols}
The matrix associated to the $47 T \rightarrow 1 \CCZ$ protocol on 9 qubits with error suppression in $236p^3$ is given in \Cref{eq:47T1CCZ}
\begin{equation}\label{eq:47T1CCZ}
\setlength{\arraycolsep}{2pt}
\resizebox{0.85\textwidth}{!}{$
\left(\begin{array}{ccccccccccccccccccccccccccccccccccccccccccccccc}
0&1&1&1&1&1&0&1&0&1&1&0&1&1&0&1&1&1&1&1&0&1&0&0&1&1&1&0&1&0&0&1&1&0&0&0&0&1&0&0&1&0&1&1&1&0&1\\
1&0&0&1&0&0&1&1&1&1&0&0&0&0&1&0&1&1&0&0&1&1&1&1&0&1&1&1&0&1&0&0&1&1&1&0&1&0&0&1&0&1&1&0&0&0&0\\
0&0&0&1&0&0&1&1&1&0&0&1&0&1&1&0&0&0&0&0&0&1&0&0&0&1&1&1&0&1&1&1&1&0&1&1&1&1&1&0&1&0&1&0&1&1&0\\
1&0&0&1&1&1&0&1&0&0&1&0&1&0&0&1&1&1&0&1&0&0&1&0&1&0&0&1&1&1&0&1&0&0&1&0&1&0&0&1&1&1&0&1&0&0&1\\
0&1&0&0&1&0&1&1&0&1&1&0&0&1&0&0&1&0&1&1&0&1&1&0&0&1&0&0&1&0&1&1&0&1&1&0&0&1&0&0&1&0&1&1&0&1&1\\
0&0&1&1&1&0&0&0&1&1&1&0&0&0&1&1&1&0&0&0&1&1&1&0&0&0&1&1&1&0&0&0&1&1&1&0&0&0&1&1&1&0&0&0&1&1&1\\
0&0&0&0&0&1&1&1&1&1&1&0&0&0&0&0&0&1&1&1&1&1&1&0&0&0&0&0&0&1&1&1&1&1&1&0&0&0&0&0&0&1&1&1&1&1&1\\
0&0&0&0&0&0&0&0&0&0&0&1&1&1&1&1&1&1&1&1&1&1&1&0&0&0&0&0&0&0&0&0&0&0&0&1&1&1&1&1&1&1&1&1&1&1&1\\
0&0&0&0&0&0&0&0&0&0&0&0&0&0&0&0&0&0&0&0&0&0&0&1&1&1&1&1&1&1&1&1&1&1&1&1&1&1&1&1&1&1&1&1&1&1&1
\end{array}\right)
$}
\end{equation}

The matrix associated to the $48 T \rightarrow 1 \CCZ$ protocol on 10 qubits with error suppression in $2816p^4$ is given in \Cref{eq:48T1CCZ}

\begin{equation}\label{eq:48T1CCZ}
    \setlength{\arraycolsep}{2pt}
    \resizebox{0.85\textwidth}{!}{$
    \left(\begin{array}{cccccccccccccccccccccccccccccccccccccccccccccccc}
    0&0&1&1&1&1&1&0&1&1&1&0&1&1&1&0&1&1&1&1&1&1&0&1&1&0&1&1&0&1&1&1&1&0&1&0&0&1&0&0&0&1&1&0&1&0&1&1\\
    0&1&1&0&0&1&1&1&1&0&1&1&0&1&0&0&0&1&0&1&1&0&1&1&0&0&0&1&0&0&1&0&1&1&0&0&1&1&1&0&1&1&1&1&1&1&1&0\\
    0&0&1&1&1&1&0&1&0&0&1&0&1&1&0&1&0&0&1&1&1&1&0&1&0&0&0&1&0&1&1&0&1&1&1&0&0&1&1&0&1&1&1&0&1&1&1&0\\
    1&0&0&1&1&0&0&1&1&0&0&1&1&0&0&1&1&0&0&1&1&0&0&1&1&0&0&1&0&1&1&0&0&1&1&0&0&1&1&0&0&1&1&0&1&0&0&1\\
    0&1&0&1&0&1&0&1&0&0&1&1&0&0&1&1&0&1&0&1&0&1&0&1&1&1&0&0&0&1&0&1&0&1&0&1&0&1&0&1&0&1&0&1&1&1&0&0\\
    0&0&1&1&0&0&1&1&0&1&0&1&0&1&0&1&0&0&1&1&0&0&1&1&0&1&0&1&0&0&1&1&0&0&1&1&0&0&1&1&0&0&1&1&0&1&0&1\\
    0&0&0&0&1&1&1&1&0&0&1&1&0&0&1&1&0&0&0&0&1&1&1&1&0&0&1&1&0&0&0&0&1&1&1&1&0&0&0&0&1&1&1&1&0&0&1&1\\
    0&0&0&0&0&0&0&0&1&1&1&1&0&0&0&0&1&1&1&1&1&1&1&1&0&0&0&0&1&1&1&1&1&1&1&1&0&0&0&0&0&0&0&0&1&1&1&1\\
    0&0&0&0&0&0&0&0&0&0&0&0&1&1&1&1&1&1&1&1&1&1&1&1&0&0&0&0&0&0&0&0&0&0&0&0&1&1&1&1&1&1&1&1&1&1&1&1\\
    0&0&0&0&0&0&0&0&0&0&0&0&0&0&0&0&0&0&0&0&0&0&0&0&1&1&1&1&1&1&1&1&1&1&1&1&1&1&1&1&1&1&1&1&1&1&1&1\\
    \end{array}\right)
    $}
    \end{equation}
\end{widetext}

%% file: Recycling_app.tex
\section{Qubit recycling}
\label{app: recycling}

In this appendix, we detail the methods used to compress a distillation protocol by recycling qubits, i.e.~by minimizing the number of qubits that need to be simultaneously active.

\subsection{Definitions and invariant operations}

Recall that a $nT \to 1T$ distillation protocol is described by a triorthogonal matrix $\mathcal{G} \in \mathbb{F}_2^{N \times n}$, whose $N$ rows index the qubits and whose $n$ columns index the Pauli product rotations.
We fix the first row to the output row and the remaining $N-1$ rows the check rows.
For a given matrix $\mathcal{G}$, let $f_i$ and $\ell_i$ denote the column indices of the first and last $1$ of row $i$,
\begin{equation}
  f_i = \min\{\, j : \mathcal{G}_{ij} = 1 \,\}, \quad
  \ell_i = \max\{\, j : \mathcal{G}_{ij} = 1 \,\}.
\end{equation}
A qubit is \emph{active} at a given column if it has already been involved in a rotation and is still to be involved in a later one.
A check qubit $i$ is therefore active on the interval $[f_i, \ell_i]$, after which it can be measured out and reinitialized.
The output qubit remains active until the last column $n$ of the circuit, as it is never measured out, thus its active interval is $[f_i,n]$.
The number of qubits simultaneously active at column $j$ is thus
\begin{equation}
  a(j) = \bigl|\{\, i : f_i \le j \le \tilde{\ell}_i \,\}\bigr|,
  \quad
  \tilde{\ell}_i =
  \begin{cases}
    n & \text{if } i=1, \\
    \ell_i & \text{otherwise}.
  \end{cases}
\end{equation}
For compressing distillation circuit we thus aim to minimize the maximum active-qubit count over the whole circuit,
\begin{equation}
  A(\mathcal{G}) = \max_{1 \le j \le n} a(j).
  \label{eq:app-aq}
\end{equation}

As we stated in the main text, $A(\mathcal{G})$ depends on the specific matrix $\mathcal{G}$ implementing the protocol, not on the protocol itself.
Three operations transform $\mathcal{G}$ into another matrix implementing the same $nT \to 1T$ protocol while leaving the distilled state unchanged. These operations preserving triorthogonality and the protocol logic are:

\begin{itemize}
    \item \emph{Column permutations:}
    The columns of $\mathcal{G}$ correspond to Pauli product rotations that are all diagonal in the $Z$ basis and hence mutually commute.
    Reordering them therefore realizes the same protocol, but changes the intervals $[f_i, \ell_i]$ and thus $A(\mathcal{G})$.
    \item \emph{Row additions:}
    Adding one row of $\mathcal{G}$ to another over $\mathbb{F}_2$ is simply a change of generator basis of the underlying code, provided the added row is of even weight, to preserve parity.
    It leaves the distilled state invariant while altering the support of the target row, and hence its interval $[f_i, \ell_i]$.
    \item \emph{Row permutations:}
    Permuting the rows of $\mathcal{G}$ trivially relabels the qubits and leaves $A(\mathcal{G})$ unchanged. Therefore, we ignore this operation.
\end{itemize}

Minimizing $A(\mathcal{G})$ over the equivalence class of a protocol is thus a joint optimization over the row basis (row additions) and the column order (column permutations). 
We here tackled these two degree of freedom separately.

\subsection{Minimizing the generator weights}
\label{app: min generator weight}

The output qubit is the most costly of the protocol, since it is never measured out and thus stays active from its first rotation until the end of the circuit.
Its contribution to $A(\mathcal{G})$ is governed by $f_1$, the column of its first rotation: the earlier the output qubit is initialized, the longer it needs to coexist with the check qubits.
Reducing the Hamming weight of the output row lowers the number of columns in which it participates, which in turn allows its first rotation to be pushed later by using a suited column order, delaying the initialization of the output qubit.

We reduce this weight using the row-addition operation described above.
The output row is defined only up to the addition of check rows, so any subset $S$ of the $N-1$ check rows yields an equally valid output row
\begin{equation}
  \mathbf{r}_1 \;\oplus \bigoplus_{i \in S} \mathbf{r}_i,
\end{equation}
where $\mathbf{r}_i=\sum_k \beta_i^k$ denotes the $i$-th row of $\mathcal{G}$.
These subsets span the entire coset of valid output rows.
Considering small $N\leq15$, we select the subset $S$ minimizing the Hamming weight of the output row by explicit search over all $2^{N-1}$ possibilities.

Similarly, we reduce the weight of check rows.
Each check row may absorb any combination of the other check rows without affecting the parity structure that distinguishes it from the output row, and hence without altering the code.
Repeatedly, we look for a pair of check rows $(i, i')$ such that replacing $\mathbf{r}_i$ by $\mathbf{r}_i \oplus \mathbf{r}_{i'}$ strictly lowers the Hamming weight of $\mathbf{r}_i$, and apply the addition whenever such a pair exists, until no single addition reduces any check-row weight further.
This greedy approach brings $\mathcal{G}$ towards a lighter generator basis for the check subspace.

Unlike the output row, the weight of a check row acts on $A(\mathcal{G})$ only indirectly, through the length of the active window $[f_i, \ell_i]$, which also depends on the column order.
Therefore, while the greedy weight reduction of check rows may help minimize $A(\mathcal{G})$, we did not see any improvement in practice.

\subsection{Branch-and-bound search for column ordering}
\label{app: bnb}

With the rows of $\mathcal{G}$ fixed, we now tackle the ordering of the $n$ columns.
The active-qubit count $A(\mathcal{G})$ is a highly non-smooth function of this ordering.
Therefore, local-search heuristics such as simulated annealing tend to stall on the resulting plateaus.
We instead perform an exact search with a branch-and-bound algorithm, which returns both an ordering and the corresponding $A(\mathcal{G})$.

The algorithm builds the column order one column at a time, maintaining the following state for the partial ordering placed so far.
For each qubit we track whether it is currently active and, if so, how many of its $1$s remain to be placed, a check qubit being released once its count reaches zero while the output qubit stays active until the end.
We denote by $a$ the number of currently active qubits and by $a^\star$ the running peak of $a$ over the partial ordering, so that a completed ordering has $A(\mathcal{G}) = a^\star$.
At each step we extend the partial ordering by an unplaced column, updating the active set and the remaining counts accordingly.

To make the branch-and-bound search tractable in practice, we extensively prune the search space.
As columns are appended, $a^\star$ can only stay constant or increase, since placing further columns never releases a qubit earlier than in any completion.
Hence, if the current $a^\star$ already equals the best complete value found so far, no completion of the current partial ordering can improve on it, and the entire branch can be pruned.
To make pruning happen early in branch exploration, we explore columns in an order that tends to close open rows quickly, so that good complete orderings, and thus low incumbent values, are found near the start of the search.
We summarize our implementation in Algorithm~\ref{alg:bnb}.

\begin{algorithm}[H]
\caption{Branch-and-bound for $\min A(\mathcal{G})$}
\label{alg:bnb}
\begin{algorithmic}[1]
\State $a^\star_{\mathrm{best}} \gets$ value of a greedy initial ordering
\Procedure{Explore}{partial order $\pi$, peak $a^\star$}
  \If{$a^\star \ge a^\star_{\mathrm{best}}$}
    \State \textbf{return} \Comment{prune}
  \EndIf
  \If{$\pi$ places all $n$ columns}
    \State $a^\star_{\mathrm{best}} \gets a^\star$; record $\pi$
    \State \textbf{return}
  \EndIf
  \For{each unplaced column $c$, ordered to close rows first}
    \State extend $\pi$ by $c$; update active set; $a \gets |\text{active}|$
    \State \Call{Explore}{$\pi \cup \{c\}$, $\max(a^\star, a)$}
  \EndFor
\EndProcedure
\end{algorithmic}
\end{algorithm}

While the pruning is effective in practice, certifying optimality requires exhausting the search tree, whose size is worst-case $O(n!)$ in the number of columns.
The branch-and-bound therefore reliably returns good orderings quickly, but proving that a given value is optimal becomes intractable as $n$ grows.

\subsection{Certification of column ordering with an integer linear program}
\label{app: ilp}

To certify that an ordering is optimal, we frame the column-ordering problem as an integer linear program (ILP), which can be solved with a certificate.
Rather than optimizing over permutations directly, we assign each column to a position and encode the active-qubit count through per-qubit indicator variables.

Let $x_{c,t} \in \{0,1\}$ indicate that column $c$ is placed at position $t$, with $1 \le c, t \le n$.
A valid ordering is enforced by requiring that each column occupies exactly one position and each position holds exactly one column,
\begin{equation}
  \sum_{t} x_{c,t} = 1 \quad \forall c,
  \qquad
  \sum_{c} x_{c,t} = 1 \quad \forall t.
\end{equation}
For each qubit $i$ and position $t$ we introduce indicator variables $o_{i,t}$ and $u_{i,t}$, marking respectively whether qubit $i$ has already been opened by position $t$ (it has a $1$ at some position $\le t$) and whether it is still to be used at position $t$ (it has a $1$ at some position $\ge t$).
Both are prefix and suffix disjunctions of the placement indicators and are encoded with the standard linear inequalities.
Qubit $i$ is then active at position $t$ when it is both opened and still to be used, $a_{i,t} = o_{i,t} \wedge u_{i,t}$, again linearized in the standard way, with the output qubit kept active until the last position.
Introducing an integer variable $\mathbf{A}$ bounding the active count at every position,
\begin{equation}
  \sum_{i} a_{i,t} \le \mathbf{A} \quad \forall t,
\end{equation}
the program minimizes $\mathbf{A}$, whose optimal value is exactly $A(\mathcal{G})$.

We warm-start the solver with the ordering returned by the branch-and-bound, so that it begins from a known feasible value and spends its effort closing the gap to the optimum rather than searching for a good ordering.
The model comprises $O(n^2)$ binary variables, dominated by the $n \times n$ assignment block $x_{c,t}$.
Solving this ILP is worst-case exponential in $n$, but the polynomial model size lets solver find a certificate quickly in practice.

\subsection{Results}
\label{app: recycling results}

We run the full minimization pipeline, starting with the row weights minimization and followed by the branch-and-bound search and its ILP certification, on the $15$-to-$1$ and $49$-to-$1$ protocols.
For the $15$-to-$1$ protocol, the pipeline compresses the implementation from $N=5$ to $4$ active qubits, while for the $49$-to-$1$ protocol it reduces the footprint from $N=14$ to $5$ active qubits, both column ordering being certified optimal by the ILP.
To our knowledge this later results is new, going beyond $N=7$ proposed in \cite{ibm_qbit_recycling} and $N=6$ found in \cite{gidney_2024_13777072}. In ~\Cref{fig:Recycling}, we propose a visualization of the recycling pipeline for the compression of that $49$-to-$1$ protocol. See also \href{https://algassert.com/quirk#circuit={%22cols%22:[[%22H%22,%22H%22,%22H%22,%22H%22,%22H%22],[%22zpar%22,%22zpar%22,1,1,1,%22%E2%88%9Ai%22],[%22zpar%22,1,%22zpar%22,1,1,%22%E2%88%9Ai%22],[1,1,%22zpar%22,%22zpar%22,1,%22%E2%88%9Ai%22],[1,%22zpar%22,1,%22zpar%22,1,%22%E2%88%9Ai%22],[%22zpar%22,1,%22zpar%22,%22zpar%22,1,%22%E2%88%9Ai%22],[%22zpar%22,%22zpar%22,1,%22zpar%22,1,%22%E2%88%9Ai%22],[1,1,%22zpar%22,1,%22zpar%22,%22%E2%88%9Ai%22],[1,%22zpar%22,1,1,%22zpar%22,%22%E2%88%9Ai%22],[%22zpar%22,1,%22zpar%22,1,%22zpar%22,%22%E2%88%9Ai%22],[%22zpar%22,%22zpar%22,1,1,%22zpar%22,%22%E2%88%9Ai%22],[1,1,%22zpar%22,%22zpar%22,%22zpar%22,%22%E2%88%9Ai%22],[1,%22zpar%22,1,%22zpar%22,%22zpar%22,%22%E2%88%9Ai%22],[%22zpar%22,%22zpar%22,1,%22zpar%22,%22zpar%22,%22%E2%88%9Ai%22],[%22zpar%22,1,%22zpar%22,%22zpar%22,%22zpar%22,%22%E2%88%9Ai%22],[1,%22Z^%C2%BC%22],[%22H%22,%22H%22,1,%22H%22,%22H%22],[%22|0%E2%9F%A9%E2%9F%A80|%22,%22|0%E2%9F%A9%E2%9F%A80|%22,1,%22|0%E2%9F%A9%E2%9F%A80|%22,%22|0%E2%9F%A9%E2%9F%A80|%22],[%22H%22,%22H%22,1,%22H%22,%22H%22],[1,%22zpar%22,%22zpar%22,1,1,%22%E2%88%9Ai%22],[%22zpar%22,%22zpar%22,1,1,1,%22%E2%88%9Ai%22],[%22zpar%22,%22zpar%22,1,%22zpar%22,1,%22%E2%88%9Ai%22],[1,%22zpar%22,%22zpar%22,%22zpar%22,1,%22%E2%88%9Ai%22],[%22zpar%22,%22zpar%22,1,1,%22zpar%22,%22%E2%88%9Ai%22],[1,%22zpar%22,%22zpar%22,1,%22zpar%22,%22%E2%88%9Ai%22],[%22zpar%22,%22zpar%22,1,%22zpar%22,%22zpar%22,%22%E2%88%9Ai%22],[1,%22zpar%22,%22zpar%22,%22zpar%22,%22zpar%22,%22%E2%88%9Ai%22],[1,%22H%22],[1,%22|0%E2%9F%A9%E2%9F%A80|%22],[1,%22H%22],[%22zpar%22,%22zpar%22,1,%22zpar%22,1,%22%E2%88%9Ai%22],[1,%22zpar%22,%22zpar%22,%22zpar%22,1,%22%E2%88%9Ai%22],[%22zpar%22,%22zpar%22,1,1,%22zpar%22,%22%E2%88%9Ai%22],[1,%22zpar%22,%22zpar%22,1,%22zpar%22,%22%E2%88%9Ai%22],[%22zpar%22,%22zpar%22,1,%22zpar%22,%22zpar%22,%22%E2%88%9Ai%22],[1,%22zpar%22,%22zpar%22,%22zpar%22,%22zpar%22,%22%E2%88%9Ai%22],[1,1,1,%22H%22,%22H%22],[1,1,1,%22|0%E2%9F%A9%E2%9F%A80|%22,%22|0%E2%9F%A9%E2%9F%A80|%22],[1,1,1,%22H%22,%22H%22],[%22zpar%22,1,1,%22zpar%22,1,%22%E2%88%9Ai%22],[1,1,%22zpar%22,%22zpar%22,1,%22%E2%88%9Ai%22],[%22zpar%22,%22zpar%22,1,%22zpar%22,1,%22%E2%88%9Ai%22],[1,%22zpar%22,%22zpar%22,%22zpar%22,1,%22%E2%88%9Ai%22],[%22zpar%22,1,1,%22zpar%22,%22zpar%22,%22%E2%88%9Ai%22],[1,1,%22zpar%22,%22zpar%22,%22zpar%22,%22%E2%88%9Ai%22],[%22zpar%22,%22zpar%22,1,%22zpar%22,%22zpar%22,%22%E2%88%9Ai%22],[1,%22zpar%22,%22zpar%22,%22zpar%22,%22zpar%22,%22%E2%88%9Ai%22],[1,1,1,%22H%22],[1,1,1,%22|0%E2%9F%A9%E2%9F%A80|%22],[1,1,1,%22H%22],[%22zpar%22,1,1,%22zpar%22,%22zpar%22,%22%E2%88%9Ai%22],[1,1,%22zpar%22,%22zpar%22,%22zpar%22,%22%E2%88%9Ai%22],[%22zpar%22,%22zpar%22,1,%22zpar%22,%22zpar%22,%22%E2%88%9Ai%22],[1,%22zpar%22,%22zpar%22,%22zpar%22,%22zpar%22,%22%E2%88%9Ai%22],[1,1,1,1,%22H%22],[1,1,1,1,%22|0%E2%9F%A9%E2%9F%A80|%22],[1,1,1,1,%22H%22],[%22zpar%22,1,1,1,%22zpar%22,%22%E2%88%9Ai%22],[1,1,%22zpar%22,1,%22zpar%22,%22%E2%88%9Ai%22],[%22zpar%22,%22zpar%22,1,1,%22zpar%22,%22%E2%88%9Ai%22],[1,%22zpar%22,%22zpar%22,1,%22zpar%22,%22%E2%88%9Ai%22],[%22zpar%22,1,1,%22zpar%22,%22zpar%22,%22%E2%88%9Ai%22],[1,1,%22zpar%22,%22zpar%22,%22zpar%22,%22%E2%88%9Ai%22],[1,%22zpar%22,%22zpar%22,%22zpar%22,%22zpar%22,%22%E2%88%9Ai%22],[%22zpar%22,%22zpar%22,1,%22zpar%22,%22zpar%22,%22%E2%88%9Ai%22],[1,%22H%22,%22H%22,%22H%22,%22H%22],[1,%22|0%E2%9F%A9%E2%9F%A80|%22,%22|0%E2%9F%A9%E2%9F%A80|%22,%22|0%E2%9F%A9%E2%9F%A80|%22,%22|0%E2%9F%A9%E2%9F%A80|%22],[%22%E2%80%A6%22]]}}{this link} for the corresponding quantum circuit in Quirk.

We further apply the pipeline to every $nT \to 1T$ protocol reported in Table~\ref{tab:all_results}.
For all distance 4 and above, the minimization returned an active-qubit count equal to their number of rows, i.e.~no reduction was found.
While some distance 3 protocols can be compressed, none of them can be compressed below $4$ active qubits.
Additionally, we extend our recycling pipeline to $\CCZ$ distillation protocols, recovering this way an implementation of the $64T$-to-$2\CCZ$ protocol on $10$ active qubits, instead of $N=17$, already reported in~\cite{ibm_qbit_recycling}.
No improvements has been found on other $\CCZ$ protocols discussed in this work.
\begin{figure*}[ht]
    \centering
    \includegraphics[width=0.87\linewidth]{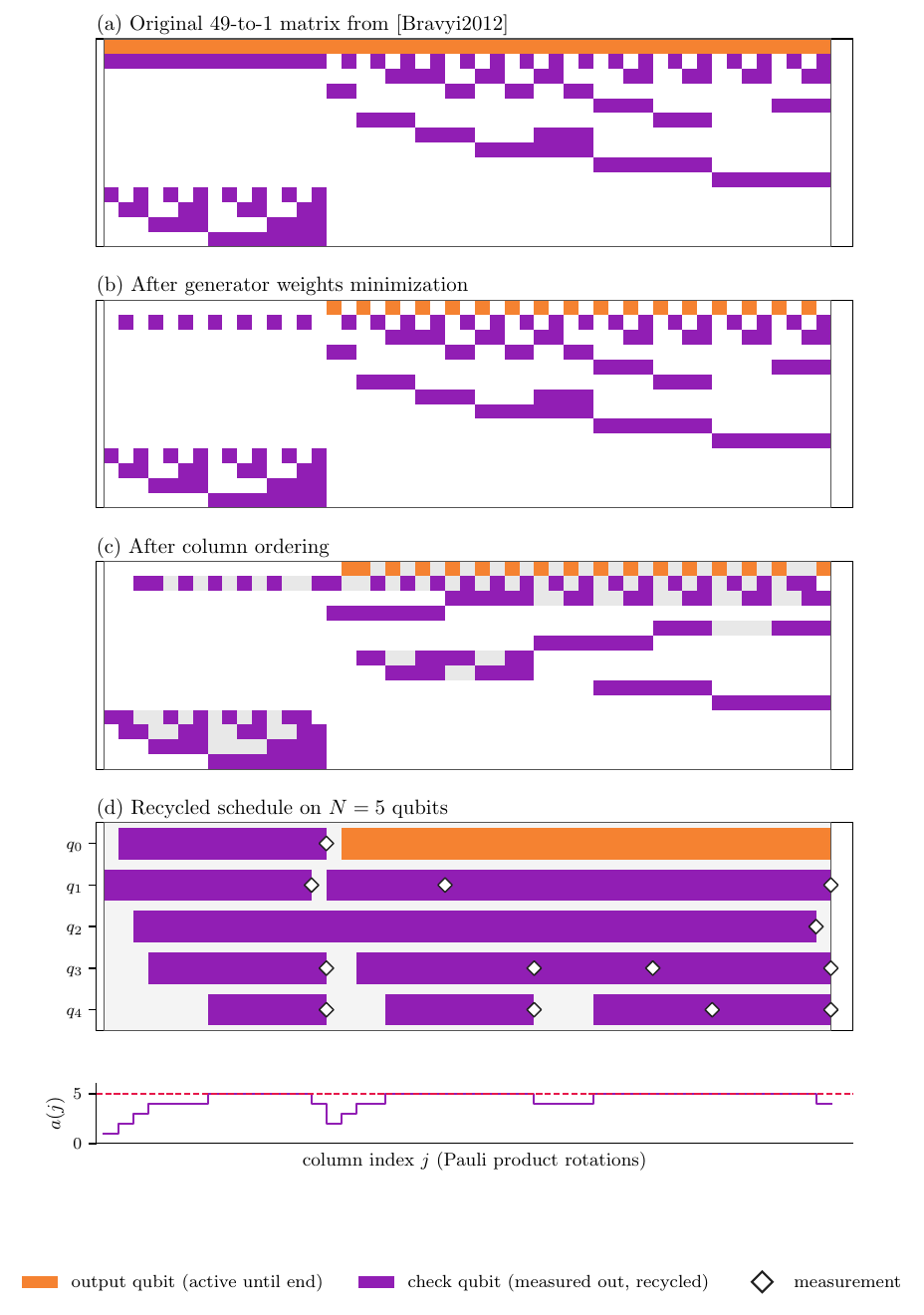}
    \label{fig:Recycling}
    \caption{\textbf{Compressing the $49$-to-$1$ protocol using qubit recycling.}
    (a) The original $49$-to-$1$ distillation matrix of Ref.~\cite{Bravyi_2012} on $N=14$ rows, with the output row (orange) of Hamming weight $49$ and the $13$ check rows (purple).
    (b) After minimizing the generator weights with the method of ~\Cref{app: min generator weight}, bringing down the output row weight from $49$ to $17$.
    (c) After reordering the columns, which reorder the commuting Pauli product rotations. The active window of each qubit, from its first to its last rotation, is shaded in grey.
    (d) The corresponding recycled schedule on $5$ active qubits. Each check qubit is measured out once its last rotation has been applied (diamond) and its qubit is reused by a later row, while the output qubit remains active until the end. The trace below shows the number of active qubits $a(j)$ at each column $j$, which never exceeds $5$.
    }
\end{figure*}

%% file: sample.bib
@article{Bravyi_2013,
  title = {Classification of Topologically Protected Gates for Local Stabilizer Codes},
  author = {Bravyi, Sergey and K\"onig, Robert},
  journal = {Phys. Rev. Lett.},
  volume = {110},
  issue = {17},
  pages = {170503},
  numpages = {5},
  year = {2013},
  month = {Apr},
  publisher = {American Physical Society},
  doi = {10.1103/PhysRevLett.110.170503},
  url = {https://link.aps.org/doi/10.1103/PhysRevLett.110.170503}
}

@misc{gidney2024magicstatecultivation,
      title={Magic state cultivation: growing T states as cheap as CNOT gates}, 
      author={Craig Gidney and Noah Shutty and Cody Jones},
      year={2024},
      eprint={2409.17595},
      archivePrefix={arXiv},
      primaryClass={quant-ph},
      url={https://arxiv.org/abs/2409.17595}, 
}

@article{jones_low-overhead_2013,
	title = {Low-overhead constructions for the fault-tolerant {Toffoli} gate},
	volume = {87},
	url = {https://link.aps.org/doi/10.1103/PhysRevA.87.022328},
	doi = {10.1103/PhysRevA.87.022328},
	number = {2},
	urldate = {2025-07-26},
	journal = {Physical Review A},
	publisher = {American Physical Society},
	author = {Jones, Cody},
	month = feb,
	year = {2013},
	pages = {022328},
}

@article{gidney_halving_2018,
	title = {Halving the cost of quantum addition},
	volume = {2},
	url = {https://quantum-journal.org/papers/q-2018-06-18-74/},
	doi = {10.22331/q-2018-06-18-74},
	urldate = {2025-07-26},
	journal = {Quantum},
	publisher = {Verein zur Förderung des Open Access Publizierens in den Quantenwissenschaften},
	author = {Gidney, Craig},
	month = jun,
	year = {2018},
	pages = {74},
}

@article{barizien2025accessiblequantumgatesclassical,
  title = {Accessible Quantum Gates on Classical Stabilizer Codes},
  author = {Barizien, Victor and Jacinto, Hugo and Sangouard, Nicolas},
  journal = {Phys. Rev. Lett.},
  volume = {136},
  issue = {3},
  pages = {030602},
  numpages = {7},
  year = {2026},
  month = {Jan},
  publisher = {American Physical Society},
  doi = {10.1103/h741-4678},
  url = {https://link.aps.org/doi/10.1103/h741-4678}
}

@article{Haah_2018,
   title={Codes and Protocols for Distilling {$T$}, controlled-{$S$}, and Toffoli Gates},
   volume={2},
   ISSN={2521-327X},
   url={http://dx.doi.org/10.22331/q-2018-06-07-71},
   DOI={10.22331/q-2018-06-07-71},
   journal={Quantum},
   publisher={Verein zur Forderung des Open Access Publizierens in den Quantenwissenschaften},
   author={Haah, Jeongwan and Hastings, Matthew B.},
   year={2018},
   month=jun,
   pages={71}
}

@article{Litinski_2019,
   title={Magic State Distillation: Not as Costly as You Think},
   volume={3},
   ISSN={2521-327X},
   url={http://dx.doi.org/10.22331/q-2019-12-02-205},
   DOI={10.22331/q-2019-12-02-205},
   journal={Quantum},
   publisher={Verein zur Forderung des Open Access Publizierens in den Quantenwissenschaften},
   author={Litinski, Daniel},
   year={2019},
   month=dec, pages={205} }

@article{Bravyi_2012,
  title = {Magic-state distillation with low overhead},
  author = {Bravyi, Sergey and Haah, Jeongwan},
  journal = {Phys. Rev. A},
  volume = {86},
  issue = {5},
  pages = {052329},
  numpages = {10},
  year = {2012},
  month = {Nov},
  publisher = {American Physical Society},
  doi = {10.1103/PhysRevA.86.052329},
  url = {https://link.aps.org/doi/10.1103/PhysRevA.86.052329}
}

@article{Bravyi_2005,
  title = {Universal quantum computation with ideal Clifford gates and noisy ancillas},
  author = {Bravyi, Sergey and Kitaev, Alexei},
  journal = {Phys. Rev. A},
  volume = {71},
  issue = {2},
  pages = {022316},
  numpages = {14},
  year = {2005},
  month = {Feb},
  publisher = {American Physical Society},
  doi = {10.1103/PhysRevA.71.022316},
  url = {https://link.aps.org/doi/10.1103/PhysRevA.71.022316}
}

@article{Horsman_2012,
   title={Surface code quantum computing by lattice surgery},
   volume={14},
   ISSN={1367-2630},
   url={http://dx.doi.org/10.1088/1367-2630/14/12/123011},
   DOI={10.1088/1367-2630/14/12/123011},
   number={12},
   journal={New Journal of Physics},
   publisher={IOP Publishing},
   author={Horsman, Dominic and Fowler, Austin G and Devitt, Simon and Meter, Rodney Van},
   year={2012},
   month=dec, pages={123011} }

@article{Nezami_2022,
   title={Classification of small triorthogonal codes},
   volume={106},
   ISSN={2469-9934},
   url={http://dx.doi.org/10.1103/PhysRevA.106.012437},
   DOI={10.1103/physreva.106.012437},
   number={1},
   journal={Physical Review A},
   publisher={American Physical Society (APS)},
   author={Nezami, Sepehr and Haah, Jeongwan},
   year={2022},
   month=jul }

@misc{Shi_2024,
      title={Triorthogonal Codes and Self-dual Codes}, 
      author={Minjia Shi and Haodong Lu and Jon-Lark Kim and Patrick Sole},
      year={2024},
      eprint={2408.09685},
      archivePrefix={arXiv},
      primaryClass={cs.IT},
      url={https://arxiv.org/abs/2408.09685}, 
}

@article{Litinski_2019_GoS,
   title={A Game of Surface Codes: Large-Scale Quantum Computing with Lattice Surgery},
   volume={3},
   ISSN={2521-327X},
   url={http://dx.doi.org/10.22331/q-2019-03-05-128},
   DOI={10.22331/q-2019-03-05-128},
   journal={Quantum},
   publisher={Verein zur Forderung des Open Access Publizierens in den Quantenwissenschaften},
   author={Litinski, Daniel},
   year={2019},
   month=mar, pages={128} }

@article{campbell_magic-state_2012,
	title = {Magic-{State} {Distillation} in {All} {Prime} {Dimensions} {Using} {Quantum} {Reed}-{Muller} {Codes}},
	volume = {2},
	url = {https://link.aps.org/doi/10.1103/PhysRevX.2.041021},
	doi = {10.1103/PhysRevX.2.041021},
	number = {4},
	urldate = {2026-05-04},
	journal = {Physical Review X},
	publisher = {American Physical Society},
	author = {Campbell, Earl T. and Anwar, Hussain and Browne, Dan E.},
	month = dec,
	year = {2012},
	pages = {041021},
}

@article{Campbell_2017,
  title = {Unified framework for magic state distillation and multiqubit gate synthesis with reduced resource cost},
  author = {Campbell, Earl T. and Howard, Mark},
  journal = {Phys. Rev. A},
  volume = {95},
  issue = {2},
  pages = {022316},
  numpages = {22},
  year = {2017},
  month = {Feb},
  publisher = {American Physical Society},
  doi = {10.1103/PhysRevA.95.022316},
  url = {https://link.aps.org/doi/10.1103/PhysRevA.95.022316}
}

@article{jones_multilevel_2013,
	title = {Multilevel distillation of magic states for quantum computing},
	volume = {87},
	url = {https://link.aps.org/doi/10.1103/PhysRevA.87.042305},
	doi = {10.1103/PhysRevA.87.042305},
	number = {4},
	urldate = {2026-05-04},
	journal = {Physical Review A},
	publisher = {American Physical Society},
	author = {Jones, Cody},
	month = apr,
	year = {2013},
	pages = {042305},
}

@article{Campbell_2016,
   title={An efficient magic state approach to small angle rotations},
   volume={1},
   ISSN={2058-9565},
   url={http://dx.doi.org/10.1088/2058-9565/1/1/015007},
   DOI={10.1088/2058-9565/1/1/015007},
   number={1},
   journal={Quantum Science and Technology},
   publisher={IOP Publishing},
   author={Campbell, Earl T and O’Gorman, Joe},
   year={2016},
   month=Dec, pages={015007} }

@article{Duclos_Cianci_2015,
  title = {Reducing the quantum-computing overhead with complex gate distillation},
  author = {Duclos-Cianci, Guillaume and Poulin, David},
  journal = {Phys. Rev. A},
  volume = {91},
  issue = {4},
  pages = {042315},
  numpages = {9},
  year = {2015},
  month = {Apr},
  publisher = {American Physical Society},
  doi = {10.1103/PhysRevA.91.042315},
  url = {https://link.aps.org/doi/10.1103/PhysRevA.91.042315}
}

@article{Gidney_2019,
   title={Efficient magic state factories with a catalyzed $\ket{\CCZ}$ to $2 \ket{T}$ transformation},
   volume={3},
   ISSN={2521-327X},
   url={http://dx.doi.org/10.22331/q-2019-04-30-135},
   DOI={10.22331/q-2019-04-30-135},
   journal={Quantum},
   publisher={Verein zur Forderung des Open Access Publizierens in den Quantenwissenschaften},
   author={Gidney, Craig and Fowler, Austin G.},
   year={2019},
   month=Apr, pages={135} }

@misc{tiurev2026parityunfoldeddistillationarchitecturenoisebiased,
      title={Parity-unfolded distillation architecture for noise-biased platforms}, 
      author={Konstantin Tiurev and Christoph Fleckenstein and Christophe Goeller and Paul Schnabl and Matthias Traube and Nitica Sakharwade and Anette Messinger and Josua Unger and Wolfgang Lechner},
      year={2026},
      eprint={2604.15436},
      archivePrefix={arXiv},
      primaryClass={quant-ph},
      url={https://arxiv.org/abs/2604.15436}, 
}

@BOOK{Nielsen2010,
  title     = "Quantum Computation and Quantum Information",
  author    = "Nielsen, Michael A and Chuang, Isaac L",
  publisher = "Cambridge University Press",
  month     =  dec,
  year      =  2010,
  address   = "Cambridge, England"
}

@article{itogawa_efficient_2025,
	title = {Efficient {Magic} {State} {Distillation} by {Zero}-{Level} {Distillation}},
	volume = {6},
	url = {https://link.aps.org/doi/10.1103/thxx-njr6},
	doi = {10.1103/thxx-njr6},
	number = {2},
	urldate = {2026-05-04},
	journal = {PRX Quantum},
	publisher = {American Physical Society},
	author = {Itogawa, Tomohiro and Takada, Yugo and Hirano, Yutaka and Fujii, Keisuke},
	month = jun,
	year = {2025},
	pages = {020356},
}

@misc{cain_shors_2026,
	title = {Shor's algorithm is possible with as few as 10,000 reconfigurable atomic qubits},
	url = {http://arxiv.org/abs/2603.28627},
	doi = {10.48550/arXiv.2603.28627},
	urldate = {2026-05-04},
	publisher = {arXiv},
	author = {Cain, Madelyn and Xu, Qian and King, Robbie and Picard, Lewis R. B. and Levine, Harry and Endres, Manuel and Preskill, John and Huang, Hsin-Yuan and Bluvstein, Dolev},
	month = mar,
	year = {2026},
	note = {arXiv:2603.28627 [quant-ph]},
}

@article{eastin2009,
  title = {Restrictions on Transversal Encoded Quantum Gate Sets},
  author = {Eastin, Bryan and Knill, Emanuel},
  journal = {Phys. Rev. Lett.},
  volume = {102},
  issue = {11},
  pages = {110502},
  numpages = {4},
  year = {2009},
  month = {Mar},
  publisher = {American Physical Society},
  doi = {10.1103/PhysRevLett.102.110502},
  url = {https://link.aps.org/doi/10.1103/PhysRevLett.102.110502}
}

@inproceedings{Gottesman1998,
    author = "Gottesman, Daniel",
    title = "{The Heisenberg representation of quantum computers}",
    booktitle = "{22nd International Colloquium on Group Theoretical Methods in Physics}",
    eprint = "quant-ph/9807006",
    archivePrefix = "arXiv",
    reportNumber = "LAUR-98-2848, LA-UR-98-2848",
    pages = "32--43",
    month = "7",
    year = "1998"
}

@article{aaronson2004,
  title = {Improved simulation of stabilizer circuits},
  author = {Aaronson, Scott and Gottesman, Daniel},
  journal = {Phys. Rev. A},
  volume = {70},
  issue = {5},
  pages = {052328},
  numpages = {14},
  year = {2004},
  month = {Nov},
  publisher = {American Physical Society},
  doi = {10.1103/PhysRevA.70.052328},
  url = {https://link.aps.org/doi/10.1103/PhysRevA.70.052328}
}

@misc{google_cultivation_experiment,
      title={Magic state cultivation on a superconducting quantum processor}, 
      author={Emma Rosenfeld and Craig Gidney and Gabrielle Roberts and Alexis Morvan and Nathan Lacroix and Dvir Kafri and Jeffrey Marshall and Ming Li and Volodymyr Sivak et al.},
      year={2025},
      eprint={2512.13908},
      archivePrefix={arXiv},
      primaryClass={quant-ph},
      url={https://arxiv.org/abs/2512.13908}, 
}

@misc{beverland_resource_estimation,
      title={Assessing requirements to scale to practical quantum advantage}, 
      author={Michael E. Beverland and Prakash Murali and Matthias Troyer and Krysta M. Svore and Torsten Hoefler and Vadym Kliuchnikov and Guang Hao Low and Mathias Soeken and Aarthi Sundaram and Alexander Vaschillo},
      year={2022},
      eprint={2211.07629},
      archivePrefix={arXiv},
      primaryClass={quant-ph},
      url={https://arxiv.org/abs/2211.07629}, 
}

@misc{gidney2025factor2048bitrsa,
      title={How to factor 2048 bit RSA integers with less than a million noisy qubits}, 
      author={Craig Gidney},
      year={2025},
      eprint={2505.15917},
      archivePrefix={arXiv},
      primaryClass={quant-ph},
      url={https://arxiv.org/abs/2505.15917}, 
}

@misc{kasai2026breakingorthogonalitybarrierquantum,
      title={Breaking the Orthogonality Barrier in Quantum LDPC Codes}, 
      author={Kenta Kasai},
      year={2026},
      eprint={2601.08824},
      archivePrefix={arXiv},
      primaryClass={quant-ph},
      url={https://arxiv.org/abs/2601.08824}, 
}

@misc{quera_qldpc_VHER,
      title={Towards Ultra-High-Rate Quantum Error Correction with Reconfigurable Atom Arrays}, 
      author={Chen Zhao and Casey Duckering and Andi Gu and Nishad Maskara and Hengyun Zhou},
      year={2026},
      eprint={2604.16209},
      archivePrefix={arXiv},
      primaryClass={quant-ph},
      url={https://arxiv.org/abs/2604.16209}, 
}

@misc{ibm_qbit_recycling,
      title={Distilling Magic States in the Bicycle Architecture}, 
      author={Shifan Xu and Kun Liu and Patrick Rall and Zhiyang He and Yongshan Ding},
      year={2026},
      eprint={2602.20546},
      archivePrefix={arXiv},
      primaryClass={quant-ph},
      url={https://arxiv.org/abs/2602.20546}, 
}

@article{Jones_2013_64_to_2,
  title = {Composite Toffoli gate with two-round error detection},
  author = {Jones, Cody},
  journal = {Phys. Rev. A},
  volume = {87},
  issue = {5},
  pages = {052334},
  numpages = {8},
  year = {2013},
  month = {May},
  publisher = {American Physical Society},
  doi = {10.1103/PhysRevA.87.052334},
  url = {https://link.aps.org/doi/10.1103/PhysRevA.87.052334}
}

@article{Gidney_2021,
   title={How to factor 2048 bit RSA integers in 8 hours using 20 million noisy qubits},
   volume={5},
   ISSN={2521-327X},
   url={http://dx.doi.org/10.22331/q-2021-04-15-433},
   DOI={10.22331/q-2021-04-15-433},
   journal={Quantum},
   publisher={Verein zur Forderung des Open Access Publizierens in den Quantenwissenschaften},
   author={Gidney, Craig and Ekerå, Martin},
   year={2021},
   month=Apr, pages={433} }

@misc{baldelli2026constructingdecodingquantumtriorthogonal,
      title={On Constructing and Decoding Quantum Triorthogonal Codes}, 
      author={Alessio Baldelli and Olai \r{A}. Mostad and Hsuan-Yin Lin and Eirik Rosnes and Massimo Battaglioni},
      year={2026},
      eprint={2605.24519},
      archivePrefix={arXiv},
      primaryClass={quant-ph},
      url={https://arxiv.org/abs/2605.24519}, 
}

@misc{litinski2023compute256bitellipticcurve,
      title={How to compute a 256-bit elliptic curve private key with only 50 million Toffoli gates}, 
      author={Daniel Litinski},
      year={2023},
      eprint={2306.08585},
      archivePrefix={arXiv},
      primaryClass={quant-ph},
      url={https://arxiv.org/abs/2306.08585}, 
}

@article{Gouzien_2023,
  title = {Performance Analysis of a Repetition Cat Code Architecture: Computing 256-bit Elliptic Curve Logarithm in 9 Hours with 126 133 Cat Qubits},
  author = {Gouzien, \'Elie and Ruiz, Diego and Le R\'egent, Francois-Marie and Guillaud, J\'er\'emie and Sangouard, Nicolas},
  journal = {Phys. Rev. Lett.},
  volume = {131},
  issue = {4},
  pages = {040602},
  numpages = {7},
  year = {2023},
  month = {Jul},
  publisher = {American Physical Society},
  doi = {10.1103/PhysRevLett.131.040602},
  url = {https://link.aps.org/doi/10.1103/PhysRevLett.131.040602}
}

@article{Gouzien_2021,
  title = {Factoring 2048-bit RSA Integers in 177 Days with 13 436 Qubits and a Multimode Memory},
  author = {Gouzien, \'Elie and Sangouard, Nicolas},
  journal = {Phys. Rev. Lett.},
  volume = {127},
  issue = {14},
  pages = {140503},
  numpages = {6},
  year = {2021},
  month = {Sep},
  publisher = {American Physical Society},
  doi = {10.1103/PhysRevLett.127.140503},
  url = {https://link.aps.org/doi/10.1103/PhysRevLett.127.140503}
}

@inproceedings{Zhou_2025, series={SIGARCH ’25},
   title={Resource Analysis of Low-Overhead Transversal Architectures for Reconfigurable Atom Arrays},
   url={http://dx.doi.org/10.1145/3695053.3731039},
   DOI={10.1145/3695053.3731039},
   booktitle={Proceedings of the 52nd Annual International Symposium on Computer Architecture},
   publisher={ACM},
   author={Zhou, Hengyun and Duckering, Casey and Zhao, Chen and Bluvstein, Dolev and Cain, Madelyn and Kubica, Aleksander and Wang, Sheng-Tao and Lukin, Mikhail D.},
   year={2025},
   month=Jun, pages={1432–1448},
   collection={SIGARCH ’25} }

@article{Jacinto_2026,
  title = {Network requirements for distributed quantum computation},
  author = {Jacinto, Hugo and Gouzien, \'Elie and Sangouard, Nicolas},
  journal = {Phys. Rev. Res.},
  volume = {8},
  issue = {1},
  pages = {013205},
  numpages = {7},
  year = {2026},
  month = {Feb},
  publisher = {American Physical Society},
  doi = {10.1103/v9ln-c4v2},
  url = {https://link.aps.org/doi/10.1103/v9ln-c4v2}
}

@article{Preskill_2025,
   title={Beyond NISQ: The Megaquop Machine},
   volume={6},
   ISSN={2643-6817},
   url={http://dx.doi.org/10.1145/3723153},
   DOI={10.1145/3723153},
   number={3},
   journal={ACM Transactions on Quantum Computing},
   publisher={Association for Computing Machinery (ACM)},
   author={Preskill, John},
   year={2025},
   month=Apr, pages={1–7} }

@article{Dennis_2002,
   title={Topological quantum memory},
   volume={43},
   ISSN={1089-7658},
   url={http://dx.doi.org/10.1063/1.1499754},
   DOI={10.1063/1.1499754},
   number={9},
   journal={Journal of Mathematical Physics},
   publisher={AIP Publishing},
   author={Dennis, Eric and Kitaev, Alexei and Landahl, Andrew and Preskill, John},
   year={2002},
   month=Sep, pages={4452–4505} }

@misc{fowler2019lowoverheadquantumcomputation,
      title={Low overhead quantum computation using lattice surgery}, 
      author={Austin G. Fowler and Craig Gidney},
      year={2019},
      eprint={1808.06709},
      archivePrefix={arXiv},
      primaryClass={quant-ph},
      url={https://arxiv.org/abs/1808.06709}, 
}

@article{Bravyi_2024,
   title={High-threshold and low-overhead fault-tolerant quantum memory},
   volume={627},
   ISSN={1476-4687},
   url={http://dx.doi.org/10.1038/s41586-024-07107-7},
   DOI={10.1038/s41586-024-07107-7},
   number={8005},
   journal={Nature},
   publisher={Springer Science and Business Media LLC},
   author={Bravyi, Sergey and Cross, Andrew W. and Gambetta, Jay M. and Maslov, Dmitri and Rall, Patrick and Yoder, Theodore J.},
   year={2024},
   month=Mar, pages={778–782} }

@misc{koutsioumpas2022smallestcodetransversalt,
      title={The Smallest Code with Transversal T}, 
      author={Stergios Koutsioumpas and Darren Banfield and Alastair Kay},
      year={2022},
      eprint={2210.14066},
      archivePrefix={arXiv},
      primaryClass={quant-ph},
      url={https://arxiv.org/abs/2210.14066}, 
}

@misc{yoder2025tourgrossmodularquantum,
      title={Tour de gross: A modular quantum computer based on bivariate bicycle codes}, 
      author={Theodore J. Yoder and Eddie Schoute and Patrick Rall and Emily Pritchett and Jay M. Gambetta and Andrew W. Cross and Malcolm Carroll and Michael E. Beverland},
      year={2025},
      eprint={2506.03094},
      archivePrefix={arXiv},
      primaryClass={quant-ph},
      url={https://arxiv.org/abs/2506.03094}, 
}

@article{serraperalta2025decodingtransversalcliffordgates,
  title = {Decoding across Transversal Clifford Gates in the Surface Code},
  author = {Serra-Peralta, Marc and Shaw, Mackenzie H. and Terhal, Barbara M.},
  journal = {PRX Quantum},
  volume = {7},
  issue = {1},
  pages = {010335},
  numpages = {41},
  year = {2026},
  month = {Feb},
  publisher = {American Physical Society},
  doi = {10.1103/sk5y-25b1},
  url = {https://link.aps.org/doi/10.1103/sk5y-25b1}
}

@article{Lee_2021,
  title = {Even More Efficient Quantum Computations of Chemistry Through Tensor Hypercontraction},
  author = {Lee, Joonho and Berry, Dominic W. and Gidney, Craig and Huggins, William J. and McClean, Jarrod R. and Wiebe, Nathan and Babbush, Ryan},
  journal = {PRX Quantum},
  volume = {2},
  issue = {3},
  pages = {030305},
  numpages = {62},
  year = {2021},
  month = {Jul},
  publisher = {American Physical Society},
  doi = {10.1103/PRXQuantum.2.030305},
  url = {https://link.aps.org/doi/10.1103/PRXQuantum.2.030305}
}

@article{Campbell_2017_Terhal,
   title={Roads towards fault-tolerant universal quantum computation},
   volume={549},
   ISSN={1476-4687},
   url={http://dx.doi.org/10.1038/nature23460},
   DOI={10.1038/nature23460},
   number={7671},
   journal={Nature},
   publisher={Springer Science and Business Media LLC},
   author={Campbell, Earl T. and Terhal, Barbara M. and Vuillot, Christophe},
   year={2017},
   month=Sep, pages={172–179} }

@misc{compact_distillation,
  title        = {compact\_distillation: {SAT} problem to find compact magic state distillation factories},
  year         = {2026},
  publisher    = {GitHub},
  howpublished = {\url{https://github.com/xvalcarce/compact_distillation}},
  license      = {BSD-2-Clause},
  note         = {Code accompanying the manuscript},
}

@inproceedings{z3, author = {De Moura, Leonardo and Bj\o{}rner, Nikolaj}, title = {Z3: an efficient SMT solver}, year = {2008}, isbn = {3540787992}, publisher = {Springer-Verlag}, address = {Berlin, Heidelberg}, booktitle = {Proceedings of the Theory and Practice of Software, 14th International Conference on Tools and Algorithms for the Construction and Analysis of Systems}, pages = {337–340}, numpages = {4}, location = {Budapest, Hungary}, series = {TACAS'08/ETAPS'08} }

@software{cpsatlp,
  title = {CP-SAT},
  version = { v9.10 },
  author = {Laurent Perron and Frédéric Didier},
  organization = {Google},
  url = {https://developers.google.com/optimization/cp/cp_solver/},
  year = {2024},
  month=May,
  date = { 2024-05-07 }
}

@misc{londe2026localdistillationreedmuller,
      title={Local distillation from Reed Muller codes unfolding}, 
      author={Vivien Londe},
      year={2026},
      eprint={2605.06284},
      archivePrefix={arXiv},
      primaryClass={quant-ph},
      url={https://arxiv.org/abs/2605.06284}, 
}

@misc{schrottenloher2026optimizedpointadditioncircuits,
      title={Optimized Point Addition Circuits for Elliptic Curve Discrete Logarithms}, 
      author={André Schrottenloher},
      year={2026},
      eprint={2606.02235},
      archivePrefix={arXiv},
      primaryClass={quant-ph},
      url={https://arxiv.org/abs/2606.02235}, 
}

@book{Review_applications,
   title={Quantum Algorithms: A Survey of Applications and End-to-end Complexities},
   ISBN={9781009639668},
   url={http://dx.doi.org/10.1017/9781009639651},
   DOI={10.1017/9781009639651},
   publisher={Cambridge University Press},
   author={Dalzell, Alexander M. and McArdle, Sam and Berta, Mario and Bienias, Przemyslaw and Chen, Chi-Fang and Gilyén, András and Hann, Connor T. and Kastoryano, Michael J. and Khabiboulline, Emil T. and Kubica, Aleksander and Salton, Grant and Wang, Samson and Brandão, Fernando G. S. L.},
   year={2025},
   month=Apr }

@inproceedings{MS_injection_Surface_code, author = {Lao, Lingling and Criger, Ben}, title = {Magic state injection on the rotated surface code}, year = {2022}, isbn = {9781450393386}, publisher = {Association for Computing Machinery}, address = {New York, NY, USA}, url = {https://doi.org/10.1145/3528416.3530237}, doi = {10.1145/3528416.3530237}, booktitle = {Proceedings of the 19th ACM International Conference on Computing Frontiers}, pages = {113–120}, numpages = {8}, location = {Turin, Italy}, series = {CF '22} }

@article{Li_2015,
   title={A magic state’s fidelity can be superior to the operations that created it},
   volume={17},
   ISSN={1367-2630},
   url={http://dx.doi.org/10.1088/1367-2630/17/2/023037},
   DOI={10.1088/1367-2630/17/2/023037},
   number={2},
   journal={New Journal of Physics},
   publisher={IOP Publishing},
   author={Li, Ying},
   year={2015},
   month=Feb, pages={023037} }

@misc{low2026denserplanarsurfacecode,
      title={A Denser Planar Surface Code}, 
      author={Guang Hao Low and William J. Huggins and Dominic W. Berry and Tanuj Khattar and Alec F. White and Nicholas C. Rubin and Ryan Babbush},
      year={2026},
      eprint={2605.30455},
      archivePrefix={arXiv},
      primaryClass={quant-ph},
      url={https://arxiv.org/abs/2605.30455}, 
}

@misc{gidney_2024_13777072,
  author       = {Gidney, Craig and
                  Jones, Cody and
                  Shutty, Noah},
  title        = {Data for "Magic state cultivation: growing T
                   states as cheap as CNOT gates"
                  },
  month        = sep,
  year         = 2024,
  publisher    = {Zenodo},
  doi          = {10.5281/zenodo.13777072},
  url          = {https://doi.org/10.5281/zenodo.13777072},
}

@misc{singh2026borrowed,
  title        = {Borrowed Identities: Malleable Distillation Factories and a Unified Numerical Search},
  author       = {Singh, Shraddha and Gidney, Craig and Jones, Cody},
  year         = {2026},
  eprint       = {2606.28518},
  archivePrefix = {arXiv},
  primaryClass = {quant-ph}
}
